\documentclass[12pt]{article}

\usepackage[margin=1in]{geometry}
\usepackage{amsmath,amssymb,amsthm,mathtools}
\usepackage{bm}
\usepackage{enumitem}
\usepackage{booktabs}
\usepackage{xcolor}
\usepackage{hyperref}
\usepackage{natbib} 
\usepackage{makecell}
\usepackage{arydshln}
\usepackage{setspace} 
\hypersetup{colorlinks=true, linkcolor=red, citecolor=blue, urlcolor=blue}
\newtheorem{theorem}{Theorem}
\newtheorem{lemma}[theorem]{Lemma}
\newtheorem{proposition}[theorem]{Proposition}
\newtheorem{definition}[theorem]{Definition}

\newtheorem{condition}[theorem]{Condition}
\theoremstyle{definition}
\newtheorem{remark}[theorem]{Remark}
\newcommand{\E}{\mathbb{E}}
\newcommand{\Var}{\mathrm{Var}}

\renewcommand{\Pr}{\mathrm{Pr}}

\newcommand{\Xb}{\bar{X}}
\newcommand{\Yb}{\bar{Y}}
\newcommand{\Nb}{\bar{N}}
\newcommand{\Uh}{\hat{U}}
\newcommand{\Oh}{\hat{O}}
\DeclareMathOperator{\expit}{expit}
\DeclareMathOperator{\logit}{logit}
\newcommand{\bh}{\hat{\beta}}
\newcommand{\Sh}{\hat{\Sigma}}
\newcommand{\opn}[1]{o_p\big(#1\big)}
\newcommand{\Opn}[1]{O_p\big(#1\big)}

\providecommand{\keywords}[1]{\vspace{0.5em}\noindent\textbf{Keywords:} #1}

\title{Bridging Balancing Weights and Augmentation in Covariate-adjusted Analyses with Time-to-Event Endpoints: Theory and Practical Recommendations}

\author{
  Baoshan Zhang$^{1,}$\thanks{These authors contributed equally to this work.}
  \and Yi Chen$^{2,}$\footnotemark[\value{footnote}]
  \and Yu Du$^{2}$
  \and Tuo Wang$^{2,}$\thanks{Corresponding author: \texttt{tuo.wang@lilly.com}}
}
\date{\today}
\begin{document}
\maketitle
\begin{center}
$^{1}$ Department of Biostatistics and Bioinformatics, Duke University, Durham, NC\\
$^{2}$ Department of Global Statistical Sciences and Advanced Analytics, Eli Lilly and Company, Indianapolis, IN 
\end{center}

\begin{abstract}
Covariate adjustment improves the efficiency of treatment-effect analyses in randomized clinical trials, provided the adjustment targets the correct quantity. For time-to-event endpoints, two marginal targets are of primary interest: the log-rank test for the presence of a treatment effect and the marginal hazard ratio for its magnitude. Existing covariate adjustment approaches reach these targets by different ways. Augmentation adjusts the log-rank score by regressing derived outcomes on the baseline covariates within each arm. Weighting instead reweights the two arms to balance the covariates before the survival comparison is formed: inverse probability weighting does so through a fitted propensity model, while calibration weighting solves directly for weights that match covariate means. In this manuscript, we first develop balancing weighting for time-to-event endpoints, covering both calibration weights (stable balancing weights and entropy balancing) and propensity score weights, and prove that any balancing-regular weighting is first-order equivalent to the augmented log-rank score and to the root of the marginal Cox score. All three routes therefore deliver the same estimator to first order, and calibration reaches it without fitting any model. The weighted procedures thereby inherit the validity and guaranteed efficiency gain of the augmentation approach. In addition, we show that the efficiency gain grows with the prognostic strength of the adjustment covariates, while the practical caveat lies in variance estimation, for which we give recommendations to guard against finite-sample Type I error inflation. We further confirm our results through simulation studies and an analysis of the REWIND cardiovascular trial.
\end{abstract}

\keywords{Time-to-event data; Log-rank Test; Marginal Hazard Ratio; Covariate Adjustment; Balancing Weight; Propensity Score; Randomized clinical trials}

\newpage

\section{Introduction}
Randomized clinical trials are the standard for evaluating treatment
effects, and adjusting for prognostic baseline covariates improves
efficiency without introducing bias. The practice is endorsed by
regulators \citep{fda2023covariate} and supported by a large literature
\citep{ye2023toward, bannick2026covariate, shao2026should,ye2020robust,lu2008improving},
with well-established methods for continuous and binary endpoints
\citep{tsiatis2008covariate, lin2013agnostic, wang2019analysis, moore2009covariate}. Time-to-event endpoints are usually summarized by two targets: the log-rank test for whether the treatment has an effect and the marginal hazard ratio (HR) for how large that effect is. Both are marginal, population-level summaries and among the estimands a confirmatory analysis may target \citep{fda2023covariate, vanlancker2024covariate}.  This paper focuses on the marginal targets. Conditional analyses, by contrast, address a different estimand. They are equally legitimate but lie outside our scope. For the marginal targets, bringing covariates into a Cox proportional hazards model comes at a cost: the fitted HR is then a \emph{conditional} one, and because the HR is non-collapsible, it generally differs from the \emph{marginal} HR and depends on which covariates are entered.
The goal is therefore to use covariates to gain efficiency on the
log-rank test and the marginal HR without changing the
estimand or adding modeling assumptions \citep{gail1984biased}.
 
One line of work achieves this by \emph{augmentation} adjustment. \citet{ye2024covariate} augmented the log-rank score by linearizing it into per-subject derived outcomes and regressing these on the covariates within each arm, yielding a guaranteed efficiency gain that stays valid under simple and covariate-adaptive randomization; the same construction adjusts the marginal Cox HR, covering both marginal targets without changing the estimand. \citet{zhang2025unified} later gave a more general augmentation family, allowing flexible, possibly machine-learning, working models as the linear version is a special case, and practical guidance for oncology trials has begun to appear \citep{backenroth2026practical}.
 
A design-side alternative to augmentation is to reweight  the two arms so that they are balanced on the covariates. Such \emph{weighting estimators} come in two forms, distinguished by how the weights are constructed \citep{chattopadhyay2020balancing}. \emph{Propensity-score weights}, such as inverse probability of treatment weighting (IPW), are built from a fitted propensity model. Outside the time-to-event setting they reach the same efficiency as augmentation, improving precision for continuous and binary endpoints in the randomized trials \citep{williamson2014variance}, but the propensity model must still be specified and estimated. \emph{Calibration weights} fit no model at all: originating in observational studies, entropy balancing (EB) \citep{hainmueller2012entropy} and stable balancing weights (SBW) \citep{zubizarreta2015stable} solve directly for the weights closest to uniform that match pre-specified covariate moments exactly, giving exact finite-sample balance, and an analysis nearly as simple as the unadjusted one. For time-to-event endpoints, \citet{shao2026inverse} showed that IPW is asymptotically equivalent to the optimal linear augmentation for the marginal HR. For calibration weights, it remains open whether they can adjust both the log-rank test and the marginal HR, and whether they reduce to augmentation; more broadly, it is unclear whether the calibration and propensity-score families fall under a single equivalence.
 
In this paper, we show that three seemingly different routes to covariate adjustment ultimately lead to the same first-order correction. Augmentation uses linear regression on the derived outcome \citep{ye2024covariate}, whereas \citet{shao2026inverse} obtains the same efficiency gain through estimated propensity-score weighting. We complete this picture by showing that calibration weights, including SBW and EB, reproduce the same linear adjustment without fitting either an outcome or propensity-score model. More generally, any balancing-regular weighting scheme is first-order equivalent to augmented log-rank and root of marginal Cox scores. Thus, derived outcome augmentation adjustment, propensity-score weighting, and calibration weighting are different paths to the same destination. The resulting weighted procedures inherit the corresponding validity and efficiency properties of augmentation method. 

This article is organized as follows. In
Section~\ref{sec:background}, we present the trial setup, notation, and
marginal targets, and review the augmentation route. In
Section~\ref{sec:balancing}, we develop the balancing-weight route and
establish its equivalence to augmentation for both the log-rank test and
the marginal HR. In Section~\ref{sec:practical}, we give
practical recommendations on the efficiency gain and variance
estimation. In Sections~\ref{sec:simulation} and \ref{sec:realdata}, we
evaluate the proposed methods using simulation studies and the REWIND
trial. In Section~\ref{sec:discussion}, we conclude with a discussion.

\section{Background: Augmentation for Survival Endpoints}
\label{sec:background}

\subsection{Trial setup, notation, and marginal targets}
\label{sec:setup}

We consider a two-arm randomized clinical trial that compares an experimental treatment $(A=1)$ with a control treatment $(A=0)$ on a right-censored time-to-event endpoint. This work can also be extended to multiple arms and here we consider only two arms. For subject \(i=1,\ldots,n\), let \(A_i\in\{0,1\}\) denote the treatment assignment and let \(X_i\) denote a vector of observed baseline covariates used for covariate adjustment. Suppose the allocation ratio is fixed as $\pi$ in this randomized controlled setting. 

For subject $i$ under treatment $j=0,1$, let $T_{ij}^*$ and $C_{ij}$ denote the potential failure time and the potential censoring time. We assume that censoring is noninformative within each treatment group, that is, $T_A$ and $C_A$ are conditionally independent given $A$ \citep{lu2008improving, ye2024covariate}. Because follow-up is limited and dropout may occur, we observe the time $T_i=\min(T_{iA_i}^*,C_{iA_i})$ and the event indicator $\delta_i=\mathbf 1\{T_{iA_i}^*\le C_{iA_i}\}$. Define the potential at-risk and counting processes as
$
Y_{ij}(t)=\mathbf 1\{\min(T_{ij}^*,C_{ij})\ge t\},$ and $
N_{ij}(t)=\mathbf 1\{T_{ij}^*\le t,\ T_{ij}^*\le C_{ij}\}.
$ The observed at-risk is
$ Y_i(t)=A_iY_{i1}(t)+(1-A_i)Y_{i0}(t)$ and observed counting process is $
N_i(t)=A_iN_{i1}(t)+(1-A_i)N_{i0}(t). $
We write the arm-specific and pooled averages as 
$
\bar Y_j(t)=\frac1n\sum_{i:A_i=j}Y_i(t),$ $
\bar Y(t)=\bar Y_0(t)+\bar Y_1(t),$ and $
\bar N(t)=\frac1n\sum_{i=1}^nN_i(t).$
In addition, we assume throughout that $\inf_{t\le\tau}\E\{Y_i(t)\}>c>0$ for some positive constant \(c\).

In this manuscript, we consider two marginal targets. The first is the testing target of the log-rank statistic, the equality of the marginal failure-time distributions between the two groups and it is a non-parametric test whose type I error control does not rely on any modeling assumptions. Unadjusted log-rank test uses $\Gamma_\mathrm{unadj}=n^{1/2}\hat U_\mathrm{unadj}/\hat\sigma_\mathrm{unadj}$,
where
\[
\hat U_\mathrm{unadj}=\frac1n\sum_{i=1}^n\int_0^\tau\Big\{A_i-\frac{\bar Y_1(t)}{\bar Y(t)}\Big\}dN_i(t),\quad \text{with }
\hat\sigma_\mathrm{unadj}^2=\frac1n\sum_{i=1}^n\int_0^\tau\frac{\bar Y_1(t)\bar Y_0(t)}{\bar Y(t)^2}dN_i(t).\]
The second target is the marginal log HR $\theta$ in the unadjusted Cox model $\lambda_1(t)=\lambda_0(t)\exp{(\theta)}$, where $\lambda_1(t)$ and $\lambda_0(t)$ are the marginal hazard functions in two arms.
We treat this as a working model that yields a concise summary of the
treatment effect and do not assume proportional hazards; whether the summary is appropriate for a given trial is important in practice but outside our scope. We write $\theta_0$ for the corresponding estimand, the probability limit of the unadjusted Cox estimator, and defer the associated score to Section~\ref{sec:cox-extension}.

\subsection{Augmentation: the covariate-adjusted log-rank and Cox scores}
\label{sec:augmentation}
Augmentation achieves covariate adjustment on the outcome side: it removes from the score the part predictable from the baseline covariates, leaving the marginal target unchanged while delivering a guaranteed efficiency gain \citep{ye2024covariate}. We adopt it as the benchmark adjustment throughout, and show in Section~\ref{sec:lr-bridge} that balancing weights reproduce it to first order.

The unadjusted log-rank score $\hat U_\mathrm{unadj}$ admits the exact linearized representation via the derived outcome $\hat O_{ij}$ \citep{ye2024covariate, bannick2026covariate}: 
\[
\hat U_\mathrm{unadj}=\frac1n\sum_{i=1}^n\{A_i\hat O_{i1}-(1-A_i)\hat O_{i0}\}, \quad \text{where }
\hat O_{ij}
=\int_0^\tau\frac{\bar Y_{1-j}(t)}{\bar Y(t)}
\left\{dN_{ij}(t)-Y_{ij}(t)\frac{d\bar N(t)}{\bar Y(t)}\right\}.
\]
Although its explicit form is somewhat involved, $\hat O_{ij}$ has a simple interpretation: it is subject $i$'s influence-function contribution to the
log-rank score, measuring how much that subject's event and at-risk history deviate from the pooled experience.

Throughout, denote \(n_j=\sum_{i=1}^n\mathbf 1(A_i=j)\),
\(\Xb_j=n_j^{-1}\sum_{i:A_i=j}X_i\), and
\(\Xb=n^{-1}\sum_{i=1}^nX_i\). The representation above expresses $\hat U_\mathrm{unadj}$ as a difference in mean derived outcomes $\hat O_{ij}$ between the two arms, which makes covariate adjustment straightforward: treating $\hat O_{ij}$ as the outcome and applying the analysis of heterogeneous covariance (ANHECOVA) adjustment \citep{ye2020robust}. For arm $j$, let 
\[ 
\hat\beta_j
=\Big\{\sum_{i:A_i=j}(X_i-\Xb_j)(X_i-\Xb_j)^\top\Big\}^{-1}
\sum_{i:A_i=j}(X_i-\Xb_j)\hat O_{ij}. 
\]
be the within-arm ordinary-least-squares slope from regressing $\hat O_{ij}$ on $X_i$ with an intercept. The augmented score subtracts these within-arm fits,
\begin{equation}\label{eq:UCL}
\begin{aligned}
\hat U_\mathrm{aug}
&=\frac1n\sum_{i=1}^n
\Big\{A_i\big[\hat O_{i1}-(X_i-\Xb)^\top\hat\beta_1\big]
-(1-A_i)\big[\hat O_{i0}-(X_i-\Xb)^\top\hat\beta_0\big]\Big\}\\
&=\hat U_\mathrm{unadj}
-\frac1n\sum_{i=1}^n
\big\{A_i(X_i-\Xb)^\top\hat\beta_1-(1-A_i)(X_i-\Xb)^\top\hat\beta_0\big\}.
\end{aligned}
\end{equation}
The second line shows that $\hat U_\mathrm{aug}$ is the unadjusted score minus augmentation terms from both arms. Both are asymptotically mean zero under randomization. Thus, $\hat U_\mathrm{aug}$ and $\hat U_\mathrm{unadj}$ target the same marginal estimand.

These two scores thus differ only in efficiency. The augmented test statistic is  $\Gamma_\mathrm{aug}=\sqrt n\,\hat U_\mathrm{aug}/\hat\sigma_\mathrm{aug}$, with
$ \hat\sigma_\mathrm{aug}^2 =\hat\sigma_\mathrm{unadj}^2
-\pi(1-\pi)(\hat\beta_1+\hat\beta_0)^\top \hat\Sigma_X(\hat\beta_1+\hat\beta_0),$
where $\hat\Sigma_X$ is the sample covariance matrix of $X_i$. Under the stated randomization and censoring conditions, $\Gamma_\mathrm{aug}$ is valid and
has a guaranteed asymptotic efficiency gain over the unadjusted log-rank test
$\Gamma_\mathrm{unadj}$ \citep{ye2024covariate}.

The same augmentation  applies to estimation of the Cox marginal estimand $\theta_0$. Cox score for $\theta$ parallels the log-rank numerator with unadjusted root $\hat\theta_\mathrm{unadj}$. Augmenting the score by analogous within-arm slopes gives the augmented root $\hat\theta_\mathrm{aug}$, which targets the same marginal as $\hat\theta_\mathrm{unadj}$ while reducing its variance \citep{ye2024covariate}. We defer the construction and its variance estimation to Section~\ref{sec:cox-extension}.

\section{A Balancing-Weight Route and Its Equivalence to Augmentation}
\label{sec:balancing}
Section~\ref{sec:augmentation} adjusted for covariates by augmentation: regressing the derived outcomes $\hat{O}_{ij}$ on the baseline covariates $X$. This section develops a design-side route instead. Rather than touching the outcomes, we re-weight the two arms so that their baseline covariate distributions are balanced before the survival comparison is formed, and compute the log-rank score on the weighted sample. The two routes thus start from opposite ends: augmentation works on the outcome side and fits a regression, whereas weighting works on the design side and fits nothing to the outcome.  Our main result bridges them: any balancing-regular weighting reproduces the augmentation of Section~\ref{sec:augmentation} to first order, and therefore delivers the same adjusted log-rank test and, in its Cox HR extension, the same marginal HR estimator. Weighting consequently inherits the guaranteed efficiency gain of augmentation while fitting no outcome model, with the adjusted covariates pre-specified through balance constraints.

\subsection{Weighting: Calibration and Propensity Score}
\label{sec:balancing-weights} 
We attach a nonnegative weight $h_i$ to each subject $i$ so that, after weighting, the two arms' covariate distributions align with the pooled trial sample. Two families reach this goal by different routes. \emph{Calibration weights} (SBW, EB) impose covariate balance directly through moment constraints and fit no model, whereas \emph{estimated-propensity weights} (IPW) achieve balance indirectly by fitting a working model and weighting by its fitted values. We describe each in turn.

\paragraph{Calibration weights}
Calibration weights are a standard covariate-adjustment device in observational studies and impose the balance target directly \citep{zubizarreta2015stable, hainmueller2012entropy}. For each arm $j$, we choose weights $h_i$ that match the weighted covariate mean to the pooled covariate mean $\Xb$, through the first-moment exact calibration constraints
\begin{equation}\label{eq:hconstraints}
\frac1{n_j}\sum_{i:A_i=j}h_i=1,
\qquad
\frac1{n_j}\sum_{i:A_i=j}h_iX_i=\Xb,
\qquad
h_i\ge0.
\end{equation}
The constraints are imposed within each arm separately. The first keeps the total weight in an arm equal to its sample size; the second is a mean-balance constraint that removes the empirical mean imbalance in $X$ between that arm and the pooled trial sample.

The constraints in \eqref{eq:hconstraints} define the balance target; an objective function then selects, among all feasible weights $h_i$ meeting the target, the set closest to the uniform weights under a chosen criterion. Two choices are commonly used. SBW \citep{zubizarreta2015stable} minimizes the
weight variance, while EB \citep{hainmueller2012entropy} minimizes the negative entropy: 
\begin{equation}\label{eq:sbw_eb_objective}
\min_{\{h_i:A_i=j\}}\frac{1}{n_j}\sum_{i:A_i=j}(h_i-1)^2
\qquad\text{(SBW)},
\qquad
\min_{\{h_i:A_i=j\}}\frac{1}{n_j}\sum_{i:A_i=j}h_i\log h_i
\qquad\text{(EB)},
\end{equation}
each subject to \eqref{eq:hconstraints} for $j=0,1$.
Thus SBW and EB use the same calibration constraints but measure closeness to the uniform weights differently. Separating the balance target from the stability objective aids interpretation: the covariates to be balanced are pre-specified through \eqref{eq:hconstraints}, while the objective only prevents unnecessary weight variability.

\paragraph{Propensity weights.}
Estimated-propensity weighting provides a different construction. In a randomized trial the true propensity score is the known constant \(\pi\), so a working propensity model is correctly specified at \(\gamma_0=\mathbf 0\). Using that known constant alone would give constant weights and hence no covariate adjustment;  fitting a model in \(X\) lets the weights absorb chance treatment--covariate imbalance in the realized sample \citep{shao2026inverse,zeng2021propensity}. We use the centered logistic working model
\begin{equation}\label{eq:CentLog}
e_i(\gamma)
=
\expit\{\logit(\pi)+(X_i-\Xb)^\top\gamma\},
\qquad
\hat e_i=e_i(\hat\gamma),
\end{equation}
where the intercept is fixed at \(\logit(\pi)\) and only the slope is estimated. Equivalently, \(\hat\gamma\) solves the centered score equation $\sum_{i=1}^n 
\{A_i-e_i(\hat\gamma)\}(X_i-\Xb)=0.$ The stabilized IPW \citep{shao2026inverse} weights for subject $i$ are 
\begin{equation}\label{eq:IPW}
    h_i=A_i\frac{\pi}{\hat e_i}+(1-A_i)\frac{1-\pi}{1-\hat e_i}.
\end{equation}
Unlike calibration weights, the resulting $h_i$ satisfy \eqref{eq:hconstraints} only approximately: mean balance is attained asymptotically rather than exactly in finite samples.

\subsection{Weighted Log-rank Score}
\label{sec:weighted-scores}
Given any non-negative weights $h_i$, we form the log-rank score on the reweighted sample. Following \cite{xie2005adjusted}, define the arm-specific weighted risk-set and
counting-process averages
\[
\bar Y_j^{(h)}(t)
=
\frac1n\sum_{i:A_i=j}h_iY_i(t),\quad \bar N_j^{(h)}(t)
=
\frac1n\sum_{i:A_i=j}h_iN_i(t),
\]
with pooled average
$
\Yb^{(h)}(t)=\Yb_0^{(h)}(t)+\Yb_1^{(h)}(t),$ and $
\bar N^{(h)}(t)=\bar N_0^{(h)}(t)+\bar N_1^{(h)}(t);$ we assume \(\Yb^{(h)}(t)>0\) over the time interval over which the integrals are evaluated. The weighted log-rank score is 
\begin{equation}\label{eq:Uhdef}
\hat U_h
=
\frac1n\sum_{i=1}^n
h_i
\int_0^\tau
\Big\{
A_i-\frac{\bar Y_1^{(h)}(t)}{\bar Y^{(h)}(t)}
\Big\}
\,dN_i(t).
\end{equation}
which uses the same weights in the event contribution and in the risk-set composition. This is the direct weighted analogue of the unadjusted log-rank score $\hat{U}_\mathrm{unadj}$, with covariate balance imposed before the survival comparison is formed. 

The weighted score $\hat U_h$ also admits the same derived outcome linear representation as in $\hat{U}_\mathrm{unadj}$. Define the weighted derived outcomes
\[ \hat O_{ij}^{(h)}
  =\int_0^\tau
  \frac{\bar Y_{1-j}^{(h)}(t)}{\bar Y^{(h)}(t)}
  \Big\{
    dN_{ij}(t)-Y_{ij}(t)\frac{d\bar N^{(h)}(t)}{\bar Y^{(h)}(t)}
  \Big\}.
\]
Then, for any nonnegative weights $h_i$, we can show that \begin{equation}\label{eq:UhExactID}
      \hat U_h
  \equiv
  \frac{1}{n}\sum_{i=1}^n h_i
  \left\{A_i\hat O_{i1}^{(h)}-(1-A_i)\hat O_{i0}^{(h)}\right\}.    
\end{equation}
This identity holds exactly in finite samples (See proof in Supplementary Material) and it is the starting point for the equivalence argument.

\subsection{Balancing-regular weights} \label{sec:main-results}

We consider simple randomization and stratified designs, including both stratified simple randomization and stratified permuted-block randomization. All of these satisfy the following condition, under which our results are stated; the Supplementary Material verifies that each design meets it.

\begin{condition}[Randomization and covariate regularity]
\label{cond:car-design}
The randomization scheme has a fixed treatment allocation rate $\pi\in(0,1)$. If stratified randomization is used, the stratification variable
$Z_i$ is discrete with finitely many joint levels $\mathcal Z$, treatment
assignment is conditionally independent of the potential outcomes and baseline
covariates given $(Z_1,\ldots,Z_n)$, $\E(A_i\mid Z_1,\ldots,Z_n)=\pi$, and
$n_{z1}/n_z-\pi=\Opn{n^{-1/2}}$ for every $z\in\mathcal Z$. The covariate vector $X_i$ satisfies $\E\|X_i\|^2<\infty$ and $\Sigma_X=\Var(X_i)\succ0$, and a
full-rank coding of $Z_i$ is included in $X_i$ when stratified randomization is used.
\end{condition}

The equivalence we establish does not require exact finite-sample calibration and does not depend on the particular objective used to construct the weights. It depends only on the first-order shape of the weight perturbations: each $h_i-1$ must behave, to leading order, like the least-squares projection of the covariate imbalance. We make this precise below. Let $\hat\Sigma_{X|j}=n_j^{-1}\sum_{i:A_i=j}(X_i-\Xb_j)(X_i-\Xb_j)^\top$ be the arm-specific sample covariance matrix of $X_i$.

\begin{definition}[First-order balancing regularity]
\label{def:first-order-balancing}
Let \(h_i=1+\xi_i\) be arm-specific nonnegative weights.
We say that the weights are \emph{first-order balancing-regular}
if, for each arm \(j=0,1\),
\begin{equation}\label{eq:generic_cal_expansion_car}
\xi_i
=
(X_i-\Xb_j)^\top
\hat\Sigma_{X|j}^{-1}
(\Xb-\Xb_j)
+\rho_i,
\qquad i:A_i=j,
\end{equation}
where $\frac1n\sum_{i:A_i=j}\rho_i^2
=
o_p(n^{-1}),$
and also satisfy the arm-specific second-moment stability condition
$ \frac1n\sum_{i:A_i=j}\xi_i^2 = 
O_p(n^{-1}). $ Weights satisfying these conditions are called
\emph{balancing-regular}.
\end{definition}

The leading term in \eqref{eq:generic_cal_expansion_car} is the linear tilt that balances the arm-specific covariate mean $\Xb_j$ to the pooled mean $\Xb$, and the remainder $\rho_i$ absorbs any nonlinearity or objective-specific
deviation that is negligible at the $n^{-1/2}$ scale. Different weighting schemes
reach this same leading term by different routes; the next proposition verifies
the condition for the schemes considered here.

\begin{proposition}[Examples of balancing-regular weights]
\label{prop:weights-car}
Under Condition~\ref{cond:car-design}, the following weighting schemes are
balancing-regular.

\begin{enumerate}[nosep]
\item The SBW solution \eqref{eq:sbw_eb_objective} is balancing-regular with
\(\rho_i\equiv0\).

\item The EB solution \eqref{eq:sbw_eb_objective} is balancing-regular whenever the
EB problem is feasible and its dual multiplier satisfies the local regularity condition.

\item The IPW \eqref{eq:IPW} is balancing-regular when \(X_i\) has bounded support and the centered logistic score equation \eqref{eq:CentLog} admits a regular local solution.
\end{enumerate}
\end{proposition}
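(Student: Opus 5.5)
The plan is to treat the three schemes separately but with a common template. For each one we write down the first-order condition of the weight construction and solve it exactly (SBW) or via a one-step expansion around the uniform weights (EB, IPW). The resulting leading term is then matched to the linear tilt $(X_i-\Xb_j)^\top\hat\Sigma_{X|j}^{-1}(\Xb-\Xb_j)$ in \eqref{eq:generic_cal_expansion_car}. The preliminary fact used throughout is $\Xb-\Xb_j=\Opn{n^{-1/2}}$ under Condition~\ref{cond:car-design}, since $\Xb-\Xb_j=(1-n_j/n)(\Xb_{1-j}-\Xb_j)$ and within-arm means are root-$n$ consistent for the common mean under simple or stratified randomization. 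For stratified designs this uses $n_{z1}/n_z-\pi=\Opn{n^{-1/2}}$ and the conditional independence of $A$ and $X$ given the strata. We also use $\hat\Sigma_{X|j}\to\Sigma_X\succ0$ in probability.

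\textbf{SBW.} We first drop the nonnegativity constraint and solve the equality-constrained quadratic program by Lagrange multipliers. Stationarity gives $h_i-1=\lambda_0+(X_i-\Xb_j)^\top\lambda$, i.e.\ $h_i$ is affine in $X_i$. Substituting into the two constraints in \eqref{eq:hconstraints} yields $\lambda_0=0$ and $\lambda=\hat\Sigma_{X|j}^{-1}(\Xb-\Xb_j)$, so $\rho_i\equiv0$. The stability condition follows because $n^{-1}\sum_{i:A_i=j}\xi_i^2=(n_j/n)(\Xb-\Xb_j)^\top\hat\Sigma_{X|j}^{-1}(\Xb-\Xb_j)=\Opn{n^{-1}}$.

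It remains to show that nonnegativity is inactive with probability tending to one, so that the unconstrained solution is the SBW solution. This requires $\max_i|\xi_i|\le\max_i\|X_i-\Xb_j\|\cdot\Opn{n^{-1/2}}<1$. Under only a second moment we have $\max_i\|X_i\|=o_p(n^{1/2})$, which gives $\max_i|\xi_i|=o_p(1)$ and hence nonnegativity w.p.\ $\to1$.

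\textbf{EB.} The dual gives $h_i=\exp\{\mu_0+(X_i-\Xb_j)^\top\mu\}$, and $(\mu_0,\mu)$ solve the two moment equations. The ``local regularity condition'' will be taken to mean that the dual solution satisfies $\|\hat\mu\|=\Opn{n^{-1/2}}$ and $\max_i|(X_i-\Xb_j)^\top\hat\mu|=o_p(1)$. Taylor expanding $\exp(u)=1+u+R(u)$ with $|R(u)|\le u^2e^{|u|}$ and plugging into the constraints, the normalization forces $\hat\mu_0=\Opn{n^{-1}}$. The mean constraint gives $\hat\Sigma_{X|j}\hat\mu=\Xb-\Xb_j+r_n$. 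Here $r_n$ is controlled by $n_j^{-1}\sum\|X_i-\Xb_j\|\,\{(X_i-\Xb_j)^\top\hat\mu\}^2$. This is the one place where second moments alone are delicate, and I would bound it by $\max_i\|X_i-\Xb_j\|\cdot\hat\mu^\top\hat\Sigma_{X|j}\hat\mu=o_p(n^{1/2})\Opn{n^{-1}}=o_p(n^{-1/2})$. Hence $\hat\mu=\hat\Sigma_{X|j}^{-1}(\Xb-\Xb_j)+o_p(n^{-1/2})$, and $\rho_i$ collects $(X_i-\Xb_j)^\top o_p(n^{-1/2})$, $\hat\mu_0$, and $R$ terms. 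Each of these has mean square $o_p(n^{-1})$ by the same max-times-quadratic bound.

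\textbf{IPW.} We expand $\hat\gamma$ from the centered score. Linearizing $e_i(\gamma)$ at $0$, $e_i(\gamma)=\pi+\pi(1-\pi)(X_i-\Xb)^\top\gamma+O(\|X_i-\Xb\|^2\|\gamma\|^2)$. Bounded support makes the remainder uniform, which is why that assumption is imposed. The score equation then becomes $\sum_i(A_i-\pi)(X_i-\Xb)=\pi(1-\pi)\sum_i(X_i-\Xb)(X_i-\Xb)^\top\hat\gamma+O_p(n\|\hat\gamma\|^2)$. The left side equals $n\pi(1-\pi)\cdot$ a multiple of $(\Xb_1-\Xb_0)$ up to $\Opn{1}$ factors, so the regular-root assumption gives $\hat\gamma=\Opn{n^{-1/2}}$ with an explicit leading term. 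Then $\pi/\hat e_i-1=-(1-\pi)(X_i-\Xb)^\top\hat\gamma+\Opn{n^{-1}}$ in arm $1$, and $(1-\pi)/(1-\hat e_i)-1=\pi(X_i-\Xb)^\top\hat\gamma+\Opn{n^{-1}}$ in arm $0$. The main obstacle is the bookkeeping in this step. We must verify that these leading tilts coincide with $(X_i-\Xb_j)^\top\hat\Sigma_{X|j}^{-1}(\Xb-\Xb_j)$ up to $o_p(n^{-1/2})$ per coefficient. This uses three facts: $X_i-\Xb$ and $X_i-\Xb_j$ differ by the constant $\Opn{n^{-1/2}}$, whose contribution to $\xi_i$ is $\Opn{n^{-1}}$; $\hat\Sigma_{X|j}$ and the pooled covariance both converge to $\Sigma_X$; and $\Xb-\Xb_1=-(1-\pi)(\Xb_1-\Xb_0)+o_p(n^{-1/2})$, with the analogous relation for arm 0. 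Once the coefficients match, the mean-square bounds on $\rho_i$ and $\xi_i$ follow immediately from bounded support.
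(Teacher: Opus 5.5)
Your proposal is correct and follows essentially the paper's route: the exact Lagrangian solution for SBW, the exponential-tilting dual linearized around zero for EB, and, for IPW, Taylor expansions of the centered logistic score and of $\pi/\hat e_i$ and $(1-\pi)/(1-\hat e_i)$, followed by swapping the coefficient, the centering, and the pooled covariance for their arm-specific counterparts. Where you depart from the paper you strengthen it. The paper assumes the SBW nonnegativity constraints are inactive; you prove they are inactive with probability tending to one via $\max_i\|X_i\|=o_p(n^{1/2})$. You then use the same maximal bound to control the EB tilting remainder at $o_p(n^{-1})$ in mean square, which removes the paper's fourth-moment condition $n_j^{-1}\sum_{i:A_i=j}\|X_i-\bar X_j\|^4=O_p(1)$; your extra assumption $\|\hat\mu\|=O_p(n^{-1/2})$ is in fact implied by $\max_i|(X_i-\bar X_j)^\top\hat\mu|=o_p(1)$.
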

Thus calibration weights and estimated-propensity weights fall
inside the same first-order class. SBW attains the projection exactly, EB attains
it up to a negligible nonlinear-tilting remainder, and stabilized IPW attains it
through the first-order expansion of the fitted propensity score. The proofs are
given in the Supplementary Material.

\subsection{Balancing-regular weights reproduce the augmented log-rank score}
\label{sec:lr-bridge}
This subsection shows that any balancing-regular weighting scheme reproduces the augmented log-rank score to first order. The result
depends only on the first-order behavior of the weights, rather than
on how they are constructed---through a quadratic stability objective,
entropy tilting, or an estimated propensity-score equation---or on
whether covariate balance holds exactly or only to the order required
by Definition~\ref{def:first-order-balancing}. In this sense, the equivalence is \emph{objective-free}: it covers SBW, EB, and stabilized IPW as special cases, and the weighting route inherits the validity and guaranteed efficiency gain of augmentation without ever fitting an outcome model.

\begin{theorem}[First-order equivalence to the augmented log-rank score]
\label{the:wLR_bridge}
Under the randomization conditions,
suppose that the weights \(h_i\) are balancing-regular in the sense of
Definition~\ref{def:first-order-balancing}. Then
\begin{equation}\label{eq:Uh_unweighted_DO_rep}
\hat U_h
=
\frac1n\sum_{i=1}^n
h_i
\bigl\{
A_i\hat O_{i1}
-
(1-A_i)\hat O_{i0}
\bigr\}
+
\opn{n^{-1/2}}.
\end{equation}
Equivalently, writing \(h_i=1+\xi_i\),
\begin{equation}\label{eq:Uh_weight_perturbation}
\hat U_h-\hat U_\mathrm{unadj}
=
\frac1n\sum_{i=1}^n
\bigl\{
A_i\xi_i\hat O_{i1}
-
(1-A_i)\xi_i\hat O_{i0}
\bigr\}
+
\opn{n^{-1/2}}.
\end{equation}
Consequently,
\begin{equation}\label{eq:wLR_CL_equiv}
\hat U_h
=
\hat U_{\mathrm{aug}}
+
\opn{n^{-1/2}}.
\end{equation}
\end{theorem}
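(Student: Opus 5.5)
Starting from the exact identity \eqref{eq:UhExactID}, I would write $\hat U_h=\frac1n\sum_i h_i\{A_i\hat O^{(h)}_{i1}-(1-A_i)\hat O^{(h)}_{i0}\}$ and compare it with the same sum built from the unweighted derived outcomes $\hat O_{ij}$. The proof then has three steps: (i) replace $\hat O^{(h)}_{ij}$ by $\hat O_{ij}$ at cost $o_p(n^{-1/2})$, which gives \eqref{eq:Uh_unweighted_DO_rep}; (ii) subtract $\hat U_\mathrm{unadj}=\frac1n\sum_i\{A_i\hat O_{i1}-(1-A_i)\hat O_{i0}\}$, which gives \eqref{eq:Uh_weight_perturbation} at once; (iii) insert the expansion \eqref{eq:generic_cal_expansion_car} of $\xi_i$ and identify the leading term with the augmentation term in \eqref{eq:UCL}.

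For step (iii), I split $\xi_i=(X_i-\Xb_j)^\top\hat\Sigma_{X|j}^{-1}(\Xb-\Xb_j)+\rho_i$ within arm $j$.
\begin{itemize}
\item The $\rho_i$ part is handled by Cauchy--Schwarz: $|\frac1n\sum_{A_i=j}\rho_i\hat O_{ij}|\le(\frac1n\sum\rho_i^2)^{1/2}(\frac1n\sum\hat O_{ij}^2)^{1/2}=o_p(n^{-1/2})$, since the $\hat O_{ij}$ are uniformly bounded (integrands bounded, $\bar Y$ bounded below).
\item For the leading part, $\frac1n\sum_{A_i=j}\hat O_{ij}(X_i-\Xb_j)^\top\hat\Sigma_{X|j}^{-1}(\Xb-\Xb_j)=\frac{n_j}{n}\hat\beta_j^\top(\Xb-\Xb_j)$, by the definition of $\hat\beta_j$ as an OLS slope.
\item Meanwhile $\frac1n\sum_{A_i=j}(X_i-\Xb)^\top\hat\beta_j=\frac{n_j}{n}(\Xb_j-\Xb)^\top\hat\beta_j$.
\end{itemize}
So the arm-$1$ contribution is $-\frac1n\sum A_i(X_i-\Xb)^\top\hat\beta_1$. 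For the arm-$0$ contribution, the sign from $-(1-A_i)$ gives $-\frac{n_0}{n}\hat\beta_0^\top(\Xb-\Xb_0)=+\frac1n\sum(1-A_i)(X_i-\Xb)^\top\hat\beta_0$. These match \eqref{eq:UCL} exactly, yielding \eqref{eq:wLR_CL_equiv}.

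The main obstacle is step (i), namely controlling
\[
\frac1n\sum_i h_i A_i\bigl(\hat O^{(h)}_{i1}-\hat O_{i1}\bigr)
\]
and the analogous arm-0 term. I would write $\bar Y^{(h)}_j-\bar Y_j=\frac1n\sum_{A_i=j}\xi_iY_i$ and similarly for $\bar N^{(h)}$, and linearize the ratios $\bar Y^{(h)}_{1-j}/\bar Y^{(h)}$ and $d\bar N^{(h)}/\bar Y^{(h)}$ around their unweighted versions.

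The key fact is that $\sup_t|\frac1n\sum_{A_i=j}\xi_iY_i(t)|=O_p(n^{-1/2})$. This follows from the leading term: it equals $(\Xb-\Xb_j)^\top\hat\Sigma_{X|j}^{-1}\frac1n\sum_{A_i=j}(X_i-\Xb_j)Y_i(t)$, where $\Xb-\Xb_j=O_p(n^{-1/2})$ under Condition~\ref{cond:car-design}. The remainder is handled by Cauchy--Schwarz on $\rho_i$.

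The first-order perturbation, averaged against the weighted martingale-type integrands, then collapses: summing $\hat O_{ij}$-type contrasts over $i$ gives the unweighted score structure, whose centered pieces are $O_p(n^{-1/2})$. Therefore the product of two $O_p(n^{-1/2})$ factors is $o_p(n^{-1/2})$.

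Concretely, I would show that the derivative of $\hat U_h$ in the risk-set weights vanishes at the balance point, in the sense that $\frac1n\sum_i\{A_i-\bar Y_1/\bar Y\}Y_i\,d\bar N/\bar Y$ integrates to zero identically. This is the reason the derived outcomes were built with the compensator term. If the exact cancellation fails, I would instead bound the residual by $\sup_t|\bar Y^{(h)}-\bar Y|$ times an $O_p(n^{-1/2})$ centered empirical process, invoking bounded variation in $t$.
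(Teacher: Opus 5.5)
Your route is valid and genuinely different from the paper's. The supplement never uses \eqref{eq:UhExactID}. It expands $\hat U_h-\hat U_{\mathrm{unadj}}$ directly from the definitions via $h_i\{A_i-R^{(h)}\}-\{A_i-R\}=\xi_i\{A_i-R\}-h_i\{R^{(h)}-R\}$, where $R^{(h)}=\bar Y_1^{(h)}/\bar Y^{(h)}$ and $R=\bar Y_1/\bar Y$. It then proves an explicit uniform linearization $R^{(h)}-R=\bar Y^{-1}n^{-1}\sum_i\xi_i\{A_i-R\}Y_i+o_p(n^{-1/2})$, and observes that this first-order risk-set term, integrated against $d\bar N$, is exactly the compensator part of $\xi_i\hat O_{ij}$. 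So the paper proves \eqref{eq:Uh_weight_perturbation} first, with the risk-set perturbation contributing at first order, and \eqref{eq:Uh_unweighted_DO_rep} follows by adding $\hat U_{\mathrm{unadj}}$. You instead show that the weights' entry through the derived outcomes is second order, so the entire first-order effect is carried by the outer multipliers $h_i$. This needs only $\sup_t|R^{(h)}-R|=O_p(n^{-1/2})$, not an explicit linear term with a controlled remainder. It also explains why the derived outcomes are the right objects: they are insensitive to first-order perturbation of the risk-set ratio and the pooled hazard. Your step (iii) is essentially the paper's final step.

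The justification of step (i) needs sharpening, though. Taken arm by arm, the perturbation of $\bar Y^{(h)}_{1-j}/\bar Y^{(h)}$ is integrated against $d\bar N_j^{(h)}-\bar Y_j^{(h)}\,d\bar N/\bar Y$. Away from the null this has a nonvanishing drift proportional to $\lambda_j(t)-\lambda_{\mathrm{pool}}(t)$, with $\lambda_{\mathrm{pool}}$ the pooled hazard, so it is $O_p(1)$, not $O_p(n^{-1/2})$. ``Martingale-type'' or ``unweighted-score'' centering holds only under $H_0$, which is too weak for the theorem as stated and for its Cox analogue at $\theta_0\neq0$.

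What closes the step is a cross-arm cancellation. Because $\bar Y_0/\bar Y=1-R$, the two arms' ratio perturbations are negatives of each other, so in the treated-minus-control contrast their integrators add up to the pooled one. The weighted form of your identity reads $\bar Y_1^{(h)}-R\,\bar Y^{(h)}=(R^{(h)}-R)\,\bar Y^{(h)}$. Using $\hat U_h=\bar N_1^{(h)}(\tau)-\int_0^\tau R^{(h)}\,d\bar N^{(h)}$ and expanding the unweighted-outcome contrast the same way gives the exact identity
\[
\hat U_h-\frac1n\sum_{i=1}^n h_i\bigl\{A_i\hat O_{i1}-(1-A_i)\hat O_{i0}\bigr\}
=-\int_0^\tau\{R^{(h)}(t)-R(t)\}\Big\{d\bar N^{(h)}(t)-\bar Y^{(h)}(t)\frac{d\bar N(t)}{\bar Y(t)}\Big\}.
\]
The second factor vanishes identically at $h\equiv1$, so it equals $d(\bar N^{(h)}-\bar N)-(\bar Y^{(h)}-\bar Y)\,d\bar N/\bar Y$. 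Its total variation is at most $n^{-1}\sum_i|\xi_i|N_i(\tau)+\sup_t|\bar Y^{(h)}-\bar Y|\,\bar N(\tau)/\inf_t\bar Y(t)$, where $\inf_t\bar Y\ge c/2$ with probability tending to one. By Cauchy--Schwarz and the second-moment stability condition this is $O_p(n^{-1/2})$. Against $\sup_t|R^{(h)}-R|=O_p(n^{-1/2})$ the right side is $O_p(n^{-1})$, which gives step (i) with no hedge.

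Relatedly, it is not the derivative of $\hat U_h$ in the weights that vanishes; that derivative is the nonzero term in \eqref{eq:Uh_weight_perturbation}. What vanishes is the derivative, at $h\equiv1$, of the contrast with respect to the weights entering through $\hat O^{(h)}_{ij}$.
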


The key step in Theorem~\ref{the:wLR_bridge}  is the replacement of
weighted by unweighted derived outcomes. The exact finite-sample identity
\eqref{eq:UhExactID} represents $\hat U_h$  as a weighted
contrast of \(\hat O_{ij}^{(h)}\), which themselves depend on the weights through the perturbed risk sets. Equation~\eqref{eq:Uh_unweighted_DO_rep} decouples the derived outcomes from the weights: $\hat O_{ij}^{(h)}$ may be replaced by $\hat O_{ij}$ at a cost of only $\opn{n^{-1/2}}$. Once this replacement is made, the weights enter the score only through their perturbations \(\xi_i=h_i-1\), and \eqref{eq:Uh_weight_perturbation} follows from the exact unadjusted identity. This replacement is also why the equivalence is first-order rather than exact: unlike a linear estimator, the weighted log-rank score is nonlinear in the weights because they enter both the event contributions and the risk-set ratios \citep{brunssmith2025augmented}. The proof is given in the Supplementary Material.

The weighting and augmentation constructions approach adjustment from opposite directions. The augmented score \(\hat U_\mathrm{aug}\) works on the outcome side: it regresses \(\hat O_{ij}\) on \(X_i\) within each arm and subtracts the covariate-predictable component of \(\hat U_\mathrm{unadj}\). Weighting works on the design side: it modifies the arm-specific empirical distribution of \(X_i\) so that the arm-to-pooled covariate imbalance is corrected without using the outcomes.

Substituting the balancing-regular expansion \eqref{eq:generic_cal_expansion_car} into \eqref{eq:Uh_weight_perturbation} shows where the two routes meet. For arm \(j\),
\[
\begin{aligned}
\frac1n\sum_{i:A_i=j}\xi_i\hat O_{ij}
&=
\frac{n_j}{n}
(\Xb-\Xb_j)^\top
\hat\Sigma_{X|j}^{-1}
\underbrace{
\frac1{n_j}\sum_{i:A_i=j}
(X_i-\Xb_j)\hat O_{ij}
}_{\text{within-arm sample covariance}}
+
\opn{n^{-1/2}} \\
&=
\frac{n_j}{n}
(\Xb-\Xb_j)^\top
\hat\beta_j
+
\opn{n^{-1/2}}
\end{aligned}
\]
Thus, although the weights are built without reference to the outcomes, averaging their first-order covariate projection against
\(\hat O_{ij}\) reconstructs the within-arm covariance between the
covariates and the derived outcome, and hence the regression slope
\(\hat\beta_j\). The resulting correction is the product of two quantities arising from different sources: balancing supplies the arm-to-pooled covariate imbalance, the derived outcomes supply the prognostic slope. Combining the two arms and applying the centering identities yields exactly the augmentation correction, up to \(\opn{n^{-1/2}}\).

The practical implication is immediate. Let
\(\hat\sigma_{\mathrm{aug}}^2\) be a consistent estimator of the
asymptotic variance of \(\sqrt n\,\hat U_{\mathrm{aug}}\), and define
\[
\Gamma_h
=\sqrt n
\frac{\hat U_h}{\hat\sigma_{\mathrm{aug}}}.
\]
Theorem~\ref{the:wLR_bridge} gives
\[
\Gamma_h-\Gamma_{\mathrm{aug}}
=
\frac{
\sqrt n(\hat U_h-\hat U_{\mathrm{aug}})
}{
\hat\sigma_{\mathrm{aug}}
}
=
\opn{1},
\]
and Slutsky's theorem implies that the two statistics have the same
first-order limiting distribution. This conclusion applies to SBW, EB, stabilized IPW, and any other balancing-regular weighting scheme. The weighted log-rank test is therefore valid under the same randomization and censoring conditions and inherits the guaranteed asymptotic efficiency gain of the augmented test over the unadjusted one, while fitting no outcome model and allowing the adjusted covariates to be pre-specified through the balance constraints. Variance estimation choices are discussed in Section~\ref{sec:VarEst}.

\subsection{Extension to the Marginal Cox Score}
\label{sec:cox-extension}

Beyond testing the null of no treatment effect, it is often useful to estimate a summary measure of the treatment effect and construct a confidence interval \citep{zhang2015robust, parast2014landmark}. We use the working Cox model of Section~\ref{sec:setup}, with treatment indicator as the only covariate, 
\[
\lambda_1(t)
=
\lambda_0(t)\exp(\theta).
\]
This model imposes proportional hazards, which hold automatically under the null $\lambda_1(t)=\lambda_0(t)$ but need not hold otherwise. As in \citet{shao2026inverse}, we regard it as a working model that produces a succinct summary of the treatment effect without assuming proportional hazards, and take as our estimand $\theta_0$, the probability limit of the unadjusted
partial-likelihood estimator $\hat\theta_\mathrm{unadj}$.   Whether or not the model is correct, $\theta_0$ is the unique root of the asymptotic score equation and $\sqrt n\,(\hat\theta_\mathrm{unadj}-\theta_0)$ is asymptotically normal, with variance consistently estimated by the sandwich estimator \citep{lin1989robust}. In either case $\theta_0$ is the estimand the unadjusted analysis already targets, so the equivalence below does not depend on whether proportional hazards holds. Whether $\theta_0$ is an appropriate effect measure for a given trial is an important practical question, but not our focus.

For a working value $\vartheta$ of the Cox parameter, the unadjusted Cox partial-likelihood score is
\begin{equation}\label{eq:unadj_Ucox}
\Uh_\mathrm{unadj}(\vartheta)
=
\frac1n\sum_{i=1}^n
\int_0^\tau
\left\{
A_i-
\frac{e^\vartheta\Yb_1(t)}
{e^\vartheta\Yb_1(t)+\Yb_0(t)}
\right\}
dN_i(t).
\end{equation}
The maximum partial-likelihood estimator $\hat\theta_\mathrm{unadj}$ solves $\Uh_\mathrm{unadj}(\vartheta)=0$; the adjusted roots defined below target the same estimand $\theta_0$, so covariate adjustment improves precision without changing the estimand.

Following the idea of \citet{ye2024covariate}, the augmented Cox score subtracts from $\Uh_\mathrm{unadj}(\vartheta)$ the within-arm slopes frozen at $\hat\theta_\mathrm{unadj}$, 
\[
\hat U_\mathrm{aug}(\vartheta)
=\Uh_\mathrm{unadj}(\vartheta)
-\frac1n\sum_{i=1}^n
\Bigl\{
A_i(X_i-\Xb)^\top\hat\beta_1(\hat\theta_\mathrm{unadj})
-(1-A_i)(X_i-\Xb)^\top\hat\beta_0(\hat\theta_\mathrm{unadj})
\Bigr\},
\]
where, for $j=0,1$, $\hat\beta_j(\vartheta)$ is the within-arm least-squares
slope of $\hat O_{ij}(\vartheta)$ on $X_i-\Xb_j$, with the $\vartheta$-indexed
derived outcome
\[
\hat O_{ij}(\vartheta)
=\int_0^\tau
\frac{\{e^\vartheta\bar Y_1(t)\}^{1-j}\{\bar Y_0(t)\}^{j}}{e^\vartheta\Yb_1(t)+\Yb_0(t)}
\Bigl\{dN_{ij}(t)-Y_{ij}(t)\,\frac{e^\vartheta\,d\bar N(t)}{e^\vartheta\Yb_1(t)+\Yb_0(t)}\Bigr\}.
\]
These reduce to $\hat O_{ij}$ and $\hat\beta_j$ at $\vartheta=0$. The augmented root $\hat\theta_\mathrm{aug}$ solves $\hat U_\mathrm{aug}(\vartheta)=0$. Freezing the slopes at $\hat\theta_\mathrm{unadj}$ rather than at the running value $\vartheta$ makes $\hat U_\mathrm{aug}$ a one-step adjustment, and \citet{ye2024covariate} show that $\hat\theta_\mathrm{aug}$ has smaller variance than $\hat\theta_\mathrm{unadj}$.

Following the weighted Cox partial-likelihood score
\citep{binder1992fitting, lin2000fitting}, we attach a subject-level weight $h_i$ to both the event contribution and the risk set,
\begin{equation}\label{eq:Uhtheta}
\hat U_h(\vartheta)
=\frac1n\sum_{i=1}^n h_i\int_0^\tau
\left\{A_i-\frac{e^\vartheta\bar Y_1^{(h)}(t)}
{e^\vartheta\bar Y_1^{(h)}(t)+\bar Y_0^{(h)}(t)}\right\}dN_i(t),
\end{equation}
with weighted root $\hat\theta_h$ solving $\hat U_h(\vartheta)=0$. 

\begin{theorem}[First-order equivalence to the augmented Cox root]
\label{thm:cox-equiv}
Assume Condition~\ref{cond:car-design}, that the weights are
balancing-regular in the sense of
Definition~\ref{def:first-order-balancing}, and the Cox regularity
conditions stated in the Supplementary Material. Then
\[
\hat\theta_h
=
\hat\theta_\mathrm{aug}
+
\opn{n^{-1/2}}.
\]
Under the method-specific conditions given earlier, this conclusion
applies to SBW, EB, stabilized IPW and all other balancing-regular weights.
\end{theorem}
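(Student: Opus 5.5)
The plan is to reduce the root comparison to a comparison of score functions near $\theta_0$ and then apply a standard Z-estimator (one-step) argument. I would use the Cox regularity conditions to get three ingredients: $\hat\theta_h$ and $\hat\theta_\mathrm{aug}$ are $\sqrt n$-consistent for $\theta_0$; the scores are stochastically equicontinuous in $\vartheta$ on an $O(n^{-1/2})$ neighbourhood of $\theta_0$; and they have a common nonzero derivative limit $-I(\theta_0)$, where $I$ is the limit of $-\partial_\vartheta\hat U_\mathrm{unadj}$. The derivative of $\hat U_\mathrm{aug}(\vartheta)$ equals that of $\hat U_\mathrm{unadj}(\vartheta)$, because the slopes are frozen at $\hat\theta_\mathrm{unadj}$. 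For $\hat U_h$, the derivative $-n^{-1}\sum_i h_i\int \{e^\vartheta\bar Y_1^{(h)}\bar Y_0^{(h)}/(e^\vartheta\bar Y_1^{(h)}+\bar Y_0^{(h)})^2\}\,dN_i$ differs from the unweighted version by $O_p(n^{-1/2})$. This follows from the second-moment condition $n^{-1}\sum\xi_i^2=O_p(n^{-1})$ together with Cauchy--Schwarz, which gives $\sup_t|\bar Y^{(h)}_j(t)-\bar Y_j(t)|\le (n^{-1}\sum\xi_i^2)^{1/2}=O_p(n^{-1/2})$, and the same bound for $\bar N^{(h)}_j$. Given these, Taylor expansion gives
\[
\hat\theta_h-\hat\theta_\mathrm{aug}=I(\theta_0)^{-1}\{\hat U_h(\theta_0)-\hat U_\mathrm{aug}(\theta_0)\}+\opn{n^{-1/2}},
\]
so it suffices to show $\hat U_h(\vartheta)=\hat U_\mathrm{aug}(\vartheta)+\opn{n^{-1/2}}$ at $\vartheta=\theta_0$ (or uniformly on shrinking neighbourhoods).

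For the score comparison at a fixed $\vartheta$, I would repeat the proof of Theorem~\ref{the:wLR_bridge} with $\vartheta$-indexed objects. The first step is the exact identity
\[
\hat U_h(\vartheta)=\frac1n\sum_i h_i\{A_i\hat O^{(h)}_{i1}(\vartheta)-(1-A_i)\hat O^{(h)}_{i0}(\vartheta)\},
\]
the analogue of \eqref{eq:UhExactID}. It is verified by the same algebra after replacing $\bar Y_1$ by $e^\vartheta\bar Y_1$ in the risk-set ratios; the compensator term vanishes because $\sum_i h_iA_iY_i\,e^\vartheta\bar Y_0^{(h)}=\sum_i h_i(1-A_i)Y_i\,e^\vartheta\bar Y_1^{(h)}$ up to the normalising factor. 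The second step is the replacement $\hat O^{(h)}_{ij}(\vartheta)\to\hat O_{ij}(\vartheta)$ at cost $\opn{n^{-1/2}}$, exactly as in the log-rank case. The third step substitutes the balancing expansion \eqref{eq:generic_cal_expansion_car} and reconstructs the within-arm slopes. This gives $\hat U_h(\vartheta)=\hat U_\mathrm{unadj}(\vartheta)-\frac1n\sum_i\{A_i(X_i-\Xb)^\top\hat\beta_1(\vartheta)-(1-A_i)(X_i-\Xb)^\top\hat\beta_0(\vartheta)\}+\opn{n^{-1/2}}$, using the centering identities $\frac{n_j}{n}(\Xb-\Xb_j)=-\frac1n\sum_{i:A_i=j}(X_i-\Xb)$.

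It then remains to replace $\hat\beta_j(\vartheta)$ by the frozen $\hat\beta_j(\hat\theta_\mathrm{unadj})$. The multiplier $n^{-1}\sum_{i:A_i=j}(X_i-\Xb)=O_p(n^{-1/2})$ under Condition~\ref{cond:car-design}, and $\hat\beta_j(\cdot)$ is continuous in probability with $\hat\beta_j(\vartheta)-\hat\beta_j(\hat\theta_\mathrm{unadj})=o_p(1)$ whenever $\vartheta-\hat\theta_\mathrm{unadj}=o_p(1)$. Hence the discrepancy is $o_p(n^{-1/2})$ for $\vartheta$ in a shrinking neighbourhood of $\theta_0$. This matches $\hat U_\mathrm{aug}(\vartheta)$ on that neighbourhood, and combined with the Taylor step it gives the theorem. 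The final sentence of the statement then follows from Proposition~\ref{prop:weights-car}.

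The main obstacle is the second step, now needed uniformly in $\vartheta$ over a neighbourhood of $\theta_0$: controlling $n^{-1}\sum_i h_i A_i\{\hat O^{(h)}_{i1}(\vartheta)-\hat O_{i1}(\vartheta)\}$ and its arm-0 analogue. Write the difference as a product of weight perturbations of the risk-set ratio, each $O_p(n^{-1/2})$ uniformly in $t$ by Cauchy--Schwarz, and terms involving the martingale-like increments $dN_i-Y_i\,d\bar N/\bar Y$. A crude bound of this product only gives $O_p(n^{-1/2})$. To obtain $o_p(n^{-1/2})$, I would first check which pieces vanish exactly or nearly by the balancing structure. For example, the first-order perturbation of $\bar Y_{1-j}^{(h)}/\bar Y^{(h)}$ integrated against $\frac1n\sum_i h_iA_i\{dN_i-Y_i\,d\bar N^{(h)}/\bar Y^{(h)}\}$ should be a product of two $O_p(n^{-1/2})$ quantities. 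This uses that the latter integrand process is itself $O_p(n^{-1/2})$ uniformly in $t$ under the Cox working model at $\theta_0$, which is where the restriction to $\vartheta$ near $\theta_0$ is essential. Uniformity in $\vartheta$ would come from monotonicity and Lipschitz dependence on $e^\vartheta$ over a compact set, via a bracketing argument.
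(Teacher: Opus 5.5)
Your outer architecture is fine. Comparing the two scores at $\theta_0$ and controlling the derivatives (frozen slopes give $\partial_\vartheta\hat U_\mathrm{aug}=\partial_\vartheta\hat U_\mathrm{unadj}$) is a legitimate variant of the paper's mean-value argument, and your freezing step is adequate. The gap is in the step you flag as the main obstacle: the mechanism you propose for it fails in the generality of the statement.

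Write $D_\vartheta=e^\vartheta\bar Y_1+\bar Y_0$, $p_\vartheta=e^\vartheta\bar Y_1/D_\vartheta$, $q_\vartheta=1-p_\vartheta$, with weighted analogues marked by the superscript $(h)$. Also set $\Delta_j=\bar Y_j^{(h)}-\bar Y_j$ and $M^{(h)}_\vartheta(t)=\bar N^{(h)}_1(t)-\int_0^t p^{(h)}_\vartheta\,d\bar N^{(h)}$. You want the treated-arm term $\int_0^\tau(q^{(h)}_\vartheta-q_\vartheta)\,dM^{(h)}_\vartheta$ to be $o_p(n^{-1/2})$ because $M^{(h)}_{\theta_0}$ is $O_p(n^{-1/2})$ uniformly in $t$. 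That holds only if the marginal proportional-hazards model is actually true. In the log-rank case $\vartheta=0$, which you say you are mirroring, it holds only under the null.

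The theorem does not assume proportional hazards. $\theta_0$ is merely the root of the limiting score integrated over $[0,\tau]$. So $M_{\theta_0}(t)$ converges to a deterministic function that vanishes at $t=\tau$ but not on $(0,\tau)$, unless $\lambda_1(t)/\lambda_0(t)\equiv e^{\theta_0}$. Now $q^{(h)}_\vartheta-q_\vartheta=-e^\vartheta(\bar Y_0\Delta_1-\bar Y_1\Delta_0)/(D_\vartheta D^{(h)}_\vartheta)$ has a nondegenerate $n^{-1/2}$ leading term whenever $X$ is prognostic. Hence the treated-arm term is then generally of exact order $n^{-1/2}$. So is the companion compensator term $\int_0^\tau q_\vartheta e^\vartheta\bar Y^{(h)}_1\{d\bar N^{(h)}/D^{(h)}_\vartheta-d\bar N/D_\vartheta\}$, even under proportional hazards. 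No arm-by-arm bound makes these negligible, and neither does restricting $\vartheta$ to a neighbourhood of $\theta_0$.

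The missing idea is that both pieces cancel in the treated-minus-control contrast. This holds for any weights with $n^{-1}\sum_{i:A_i=j}\xi_i^2=O_p(n^{-1})$; neither the balancing structure nor the model is involved. Use the control-arm compensator $Y_{i0}\,d\bar N/D_\vartheta$ (no extra $e^\vartheta$), the convention under which your exact identity holds. Since $p^{(h)}_\vartheta-p_\vartheta=-(q^{(h)}_\vartheta-q_\vartheta)$ and $d\bar N^{(h)}_0-q^{(h)}_\vartheta\,d\bar N^{(h)}=-\,dM^{(h)}_\vartheta$, the control arm contributes exactly the same $\int(q^{(h)}_\vartheta-q_\vartheta)\,dM^{(h)}_\vartheta$, which drops out. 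The compensator pieces then combine, giving the exact identity
\[
\begin{aligned}
&\frac1n\sum_{i=1}^n h_i\Bigl[A_i\bigl\{\hat O^{(h)}_{i1}(\vartheta)-\hat O_{i1}(\vartheta)\bigr\}-(1-A_i)\bigl\{\hat O^{(h)}_{i0}(\vartheta)-\hat O_{i0}(\vartheta)\bigr\}\Bigr]\\
&\qquad=-\int_0^\tau\frac{e^\vartheta\{\bar Y_0(t)\Delta_1(t)-\bar Y_1(t)\Delta_0(t)\}}{D_\vartheta(t)}\Bigl\{\frac{d\bar N^{(h)}(t)}{D^{(h)}_\vartheta(t)}-\frac{d\bar N(t)}{D_\vartheta(t)}\Bigr\}.
\end{aligned}
\]
The integrand is $O_p(n^{-1/2})$ uniformly in $(t,\vartheta)\in[0,\tau]\times\Theta_0$. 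The signed measure has total variation $O_p(n^{-1/2})$, because $n^{-1}\sum_i|\xi_i|N_i(\tau)\le(n^{-1}\sum_i\xi_i^2)^{1/2}$. So the replacement costs $O_p(n^{-1})$ uniformly over any compact $\Theta_0$, with no martingale or bracketing argument.

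The paper reaches the same point without forming $\hat O^{(h)}_{ij}$ at all. It writes $h_i\{A_i-p^{(h)}_\vartheta\}-\{A_i-p_\vartheta\}=\xi_i\{A_i-p_\vartheta\}-h_i\{p^{(h)}_\vartheta-p_\vartheta\}$. It then linearizes $p^{(h)}_\vartheta-p_\vartheta$ with an $O_p(n^{-1})$ remainder uniformly in $(t,\vartheta)$, and reads $n^{-1}\sum_i\xi_i\{A_i\hat O_{i1}(\vartheta)-(1-A_i)\hat O_{i0}(\vartheta)\}$ off the linear term. That uniformity lets the paper evaluate the bridge at the random root $\hat\theta_h$. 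It then concludes with a mean-value step that needs only a lower bound on $-\partial_\vartheta\hat U_\mathrm{unadj}$ over $\Theta_0$, rather than the derivative limit $I(\theta_0)$ your Taylor step uses.
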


The weighted and augmented Cox roots therefore have the same first-order limiting distribution, and inherit the efficiency and variance results established for \(\hat\theta_\mathrm{aug}\) by \citet{ye2024covariate}. No separate variance theory is needed for the weighted route. The proof of Theorem~\ref{thm:cox-equiv} is given in the Supplementary Material.

\section{Practical Recommendations}\label{sec:practical}
Theorems~\ref{the:wLR_bridge} and~\ref{thm:cox-equiv} remove efficiency as a
criterion for choosing among the adjustment methods: to first order they deliver
the same test and the same estimator, so the choice among them can rest on
convenience. Two decisions do carry practical consequences. The first is which
covariates to adjust for: the weights are constructed without reference to the
outcomes, yet the gain comes entirely from the covariate--outcome association,
and Section~\ref{Sec:ProgEG} makes that dependence explicit. The second is how to
estimate the variance: the methods share a first-order limiting distribution, but
differ in which estimators are available and in how those estimators behave in
finite samples, which is where the only real failure mode appears.
Section~\ref{sec:VarEst} takes this up.
 
\subsection{Prognostic Covariate and Efficiency Gain}
\label{Sec:ProgEG}
The efficiency gain has a simple interpretation through derived outcome representation of the log-rank score. Suppressing the subject index, the
augmented log-rank variance satisfies
\[
\sigma_\mathrm{aug}^2
=
\sigma_\mathrm{unadj}^2
-
\pi(1-\pi)(\beta_1+\beta_0)^\top\Sigma_X(\beta_1+\beta_0), \quad
\beta_j=\Sigma_X^{-1}\mathrm{Cov}(X,O_j).
\]
The variance removed by adjustment is therefore governed by how strongly the baseline
covariates are linearly associated with the arm-specific derived outcomes
\(O_j\); equivalently, adjustment is most useful when the covariates explain the
patient-level log-rank contributions.  The matrix \(\Sigma_X\) mainly
standardizes this association by accounting for the scale and correlation of the
covariates.

To connect this abstract slope $\beta_j$ to a clinically interpretable notion of
prognostic value, consider the working model
\(\lambda_j(t\mid X)=\lambda_{0j}(t)\exp(\eta^\top X)\) as an illustrative example, where \(\eta\) is a
covariate log-hazard coefficient and is distinct from the marginal treatment
effect.  A local expansion around \(\eta=0\), detailed in the Supplement, gives
\(\beta_j=c_j\eta+O(\|\eta\|^2)\) with constant \(c_j>0\) under the local null baseline.
Consequently,
\[
\sigma_\mathrm{unadj}^2-\sigma_\mathrm{aug}^2
=
\pi(1-\pi)(c_1+c_0)^2\eta^\top\Sigma_X\eta
+
O(\|\eta\|^3).
\]
This calculation shows why more prognostic covariates lead to larger efficiency
gains: to leading order, the gain increases with the quadratic prognostic
strength \(\eta^\top\Sigma_X\eta\).  

\subsection{Variance estimation for the weighted analyses}
\label{sec:VarEst}

The first-order equivalence transfers variance theory for the augmented estimator \cite{ye2024covariate} to the weighted estimators, but in practice one must still choose a variance estimator that is both \emph{available} for a given method and \emph{valid} in the relevant sample size. 

Ready-made variance estimators differ across methods. For augmentation, an analytic standard error is returned by \texttt{RobinCar} \citep{ye2024covariate}, for both the log-rank test and the marginal HR. For IPW and EB, an asymptotic stacked M-estimation standard error is available in the \texttt{WeightIt} R package \citep{WeightIt,shu2021variance}; this estimator stacks the estimating equations for the weights and the outcome model, thereby accounting for the uncertainty in the estimated weights. For SBW, no packaged analytic variance is currently available, so a resampling estimator is the practical default. 

Analytic and M-estimation standard errors are asymptotically correct but can underestimate the true variability when the sample is small or the number of adjusted covariates is large \cite{austin2013performance} and \citet{shao2026inverse} report the same tendency for the analytic covariate-adjusted variance. Because a test statistic divides by the estimated standard error, such underestimation inflates the Type~I error rate; our own
simulations in Section~\ref{sec:Sim_Infer} show this for augmentation, EB and
IPW alike. The bootstrap has been found to estimate standard errors and confidence intervals well after weighting, sometimes even outperforming the asymptotically correct estimator in smaller samples \citep{austin2022bootstrap,ZhangVarEst2026}, at the cost of refitting the procedure in each resample. When Type~I error  control is the priority and computation permits, the bootstrap is therefore the safer choice; when a validated analytic or M-estimation form is available and the sample is not small relative to the number of covariates, it offers the same accuracy at far lower cost.

\section{Simulation Study}
\label{sec:simulation}
We conducted three simulation studies to evaluate the finite-sample behavior of the weighted estimators developed in
Section~\ref{sec:balancing}: \emph{equivalence}, \emph{inference validity} and \emph{prognostic effect}.  The equivalence study examined whether the weighting implementations reproduce the point estimates and efficiency gains of augmentation for the log-rank score and the marginal Cox HR. The inference validity study evaluated Type~I error and standard-error calibration for the marginal HR. The \emph{prognostic effect} study examined how the efficiency gain varies with the prognostic strength of the baseline covariates, as characterized in Section~\ref{Sec:ProgEG}. The unadjusted method served as the common benchmark throughout.

\subsection{Data Generation Process}
The baseline covariate vector $X_i$ was motivated by components of the Framingham risk score: $L_{\mathrm{AGE},i}$, $L_{\mathrm{HDL},i}$ and $L_{\mathrm{SBP},i}$, the logarithms of age, HDL cholesterol and systolic blood pressure, together with diabetes and smoking indicators.  The three log-transformed continuous components were drawn from a multivariate normal distribution with correlation structure calibrated to the Framingham population \citep{d2008general}, and the two binary components from independent Bernoulli distributions. A nonlinear oracle prognostic score $Z_i$ was constructed from all components of $X_i$: $Z_i=1-S_0^{\exp(\alpha^\top X_i)}$ with $S_0=0.9$, the same form used by Framingham-type risk scores. Full parameter
values are given in the Supplementary Material.

Under simple randomization, treatment was assigned as
$A_i\sim\operatorname{Bernoulli}(0.5)$ and event times followed an exponential proportional-hazards model, $\lambda(t\mid A_i,Z_i)=0.05*\exp(\beta_A A_i+\beta_Z Z_i).$
Administrative censoring was generated independently as
$C\sim\operatorname{Uniform}(1,4)$, representing three years of uniform accrual and one additional year of follow-up. 

\subsection{Simulation Study: Empirical equivalence}

The empirical equivalence study used $\beta_A=\log(0.7)$, $\beta_Z=9$, and 10,000 Monte Carlo (MC) replicates per sample size. Table~\ref{tab:s1-equivalence-total} reports $n\in\{400,800\}$; the paired rate diagnostic additionally used $n\in\{1200,1600,2000\}$. 

Seven scenarios were considered: (A) \emph{Oracle score}: adjustment using $Z$;  (B) \emph{All Baseline}: adjustment using all five components of $X_i$; (C) \emph{Over-adjusted}: adjustment using all five baseline covariates plus three independent $N(0,1)$ noise variables; (D) \emph{Omit Key}: omission of the strongly prognostic $L_{\mathrm{SBP},i}$ component, with one independent $N(0,1)$ noise variable added; (E) \emph{Wrong Form}: replacement of $L_{\mathrm{AGE},i}$ by its square; and two stratified designs, (F) \emph{Stratified Simple}; (G) \emph{Permuted Block} (size four),  both stratified on $S_i=\mathbf 1(Z_i<0.15)$, with
implementation details in the Supplementary Material.

This simulation study deliberately evaluated point estimators only: no analytic or bootstrap standard errors were computed. For each method, we report the MC mean, empirical standard deviation (ESD), and the relative efficiency (RE)
\[
\operatorname{RE}
=
\operatorname{ESD}^2(\widehat\psi_{\mathrm{unadj}})
\big/
\operatorname{ESD}^2(\widehat\psi_{\mathrm{adj}}),
\]
where $\widehat\psi$ denotes the log-rank score for the log-rank and the estimated marginal log HR for the Cox, and the subscript $\mathrm{adj}$ refers generically to any of the four adjusted estimators.

\begin{table}[htbp]
\centering
\caption{Empirical equivalence results based on 10,000 MC replicates. 
}
\label{tab:s1-equivalence-total}
\scriptsize
\setlength{\tabcolsep}{2pt}
\resizebox{\textwidth}{!}{%
\begin{tabular}{ll rrc rrc rrc rrc}
\toprule
 & & \multicolumn{6}{c}{$n=400$} & \multicolumn{6}{c}{$n=800$} \\
\cmidrule(lr){3-8} \cmidrule(lr){9-14}
 & & \multicolumn{3}{c}{Log-rank test} & \multicolumn{3}{c}{Marginal HR} & \multicolumn{3}{c}{Log-rank test} & \multicolumn{3}{c}{Marginal HR} \\
\cmidrule(lr){3-5} \cmidrule(lr){6-8} \cmidrule(lr){9-11} \cmidrule(lr){12-14}
Scenario & Method & Mean & ESD & RE & Mean & ESD & RE & Mean & ESD & RE & Mean & ESD & RE \\
\midrule
Oracle Score & Unadj. & -0.021 & 0.014 &  --- & -0.272 & 0.178 &  ---& -0.021 & 0.010 & --- & -0.271 & 0.128 & --- \\
 & Aug. & -0.022 & 0.012 & 1.360 & -0.272 & 0.152 & 1.361 & -0.022 & 0.009 & 1.343 & -0.272 & 0.110 & 1.347 \\
 & SBW & -0.022 & 0.012 & 1.360 & -0.272 & 0.152 & 1.363 & -0.022 & 0.009 & 1.343 & -0.272 & 0.110 & 1.348 \\
 & EB & -0.021 & 0.012 & 1.361 & -0.272 & 0.152 & 1.363 & -0.022 & 0.009 & 1.344 & -0.272 & 0.110 & 1.349 \\
 & IPW & -0.022 & 0.012 & 1.361 & -0.272 & 0.152 & 1.368 & -0.022 & 0.009 & 1.344 & -0.272 & 0.110 & 1.352 \\
\addlinespace[2pt]
All Baseline & Unadj. & -0.021 & 0.014 & --- & -0.272 & 0.178 &  ---& -0.021 & 0.010 & --- & -0.271 & 0.128 &  ---\\
 & Aug. & -0.021 & 0.012 & 1.258 & -0.272 & 0.159 & 1.252 & -0.022 & 0.009 & 1.263 & -0.272 & 0.114 & 1.261 \\
 & SBW & -0.021 & 0.012 & 1.258 & -0.272 & 0.159 & 1.253 & -0.022 & 0.009 & 1.263 & -0.272 & 0.114 & 1.262 \\
 & EB & -0.021 & 0.012 & 1.258 & -0.272 & 0.159 & 1.255 & -0.022 & 0.009 & 1.264 & -0.272 & 0.114 & 1.263 \\
 & IPW & -0.022 & 0.012 & 1.250 & -0.272 & 0.159 & 1.254 & -0.022 & 0.009 & 1.260 & -0.272 & 0.114 & 1.263 \\
\addlinespace[2pt]
Over-adjusted & Unadj. & -0.021 & 0.014 &  ---& -0.272 & 0.178 & --- & -0.021 & 0.010 & --- & -0.271 & 0.128 &  ---\\
 & Aug. & -0.021 & 0.012 & 1.247 & -0.272 & 0.160 & 1.242 & -0.022 & 0.009 & 1.261 & -0.272 & 0.114 & 1.259 \\
 & SBW & -0.021 & 0.012 & 1.248 & -0.272 & 0.159 & 1.244 & -0.022 & 0.009 & 1.261 & -0.272 & 0.114 & 1.260 \\
 & EB & -0.021 & 0.012 & 1.247 & -0.272 & 0.159 & 1.244 & -0.022 & 0.009 & 1.261 & -0.272 & 0.114 & 1.260 \\
 & IPW & -0.022 & 0.013 & 1.238 & -0.272 & 0.160 & 1.242 & -0.022 & 0.009 & 1.258 & -0.272 & 0.114 & 1.261 \\
\addlinespace[2pt]
Omit Key & Unadj. & -0.021 & 0.014 & --- & -0.272 & 0.178 & --- & -0.021 & 0.010 & --- & -0.271 & 0.128 &  ---\\
 & Aug. & -0.021 & 0.013 & 1.231 & -0.272 & 0.161 & 1.225 & -0.022 & 0.009 & 1.235 & -0.272 & 0.115 & 1.234 \\
 & SBW & -0.021 & 0.013 & 1.231 & -0.272 & 0.160 & 1.227 & -0.022 & 0.009 & 1.235 & -0.272 & 0.115 & 1.235 \\
 & EB & -0.021 & 0.013 & 1.230 & -0.272 & 0.160 & 1.228 & -0.022 & 0.009 & 1.236 & -0.272 & 0.115 & 1.236 \\
 & IPW & -0.022 & 0.013 & 1.222 & -0.272 & 0.161 & 1.226 & -0.022 & 0.009 & 1.232 & -0.272 & 0.115 & 1.236 \\
\addlinespace[2pt]
Wrong Form & Unadj. & -0.021 & 0.014 & --- & -0.272 & 0.178 &  ---& -0.021 & 0.010 & --- & -0.271 & 0.128 &  ---\\
 & Aug. & -0.021 & 0.012 & 1.268 & -0.272 & 0.158 & 1.262 & -0.022 & 0.009 & 1.273 & -0.272 & 0.113 & 1.271 \\
 & SBW & -0.021 & 0.012 & 1.268 & -0.272 & 0.158 & 1.264 & -0.022 & 0.009 & 1.273 & -0.272 & 0.113 & 1.272 \\
 & EB & -0.021 & 0.012 & 1.269 & -0.272 & 0.158 & 1.265 & -0.022 & 0.009 & 1.273 & -0.272 & 0.113 & 1.273 \\
 & IPW & -0.022 & 0.012 & 1.261 & -0.272 & 0.158 & 1.264 & -0.022 & 0.009 & 1.270 & -0.272 & 0.113 & 1.273 \\
\addlinespace[2pt]
Stratified Simple & Unadj. & -0.021 & 0.014 & --- & -0.272 & 0.179 & --- & -0.022 & 0.010 &  ---& -0.273 & 0.128 & --- \\
 & Aug. & -0.021 & 0.012 & 1.341 & -0.271 & 0.155 & 1.338 & -0.022 & 0.009 & 1.346 & -0.274 & 0.110 & 1.355 \\
 & SBW & -0.021 & 0.012 & 1.341 & -0.270 & 0.155 & 1.340 & -0.022 & 0.009 & 1.346 & -0.274 & 0.110 & 1.356 \\
 & EB & -0.021 & 0.012 & 1.342 & -0.270 & 0.155 & 1.341 & -0.022 & 0.009 & 1.346 & -0.274 & 0.110 & 1.356 \\
 & IPW & -0.021 & 0.012 & 1.344 & -0.270 & 0.154 & 1.348 & -0.022 & 0.009 & 1.347 & -0.274 & 0.109 & 1.359 \\
\addlinespace[2pt]
Permuted Block & Unadj. & -0.022 & 0.013 & --- & -0.272 & 0.162 &  --- & -0.022 & 0.009 & --- & -0.271 & 0.113 & --- \\
 & Aug. & -0.022 & 0.012 & 1.118 & -0.273 & 0.153 & 1.118 & -0.022 & 0.008 & 1.122 & -0.271 & 0.107 & 1.120 \\
 & SBW & -0.022 & 0.012 & 1.118 & -0.273 & 0.153 & 1.119 & -0.022 & 0.008 & 1.122 & -0.271 & 0.107 & 1.121 \\
 & EB & -0.022 & 0.012 & 1.118 & -0.272 & 0.153 & 1.119 & -0.022 & 0.008 & 1.122 & -0.271 & 0.107 & 1.121 \\
 & IPW & -0.022 & 0.012 & 1.121 & -0.273 & 0.153 & 1.120 & -0.022 & 0.008 & 1.123 & -0.271 & 0.107 & 1.121 \\
\bottomrule
\end{tabular}%
}
\end{table}
Table~\ref{tab:s1-equivalence-total} shows close agreement among Aug., SBW, EB, and IPW. Within each scenario and sample size, their MC means and ESDs were nearly indistinguishable for both endpoints. Adjustment for the oracle score produced the largest gains: RE ranged from 1.360 to 1.368 at $n=400$ and from 1.343 to 1.352 at $n=800$. Adjustment using all five baseline covariates remained beneficial, with RE ranging from 1.250 to 1.258 at $n=400$ and from 1.260 to 1.264 at $n=800$. Over-adjustment, omission of key variables, and use of the wrong functional form attenuated but did not eliminate the efficiency gain. Stratified simple randomization gave RE values of 1.338--1.359, whereas the tighter balance induced by permuted blocks reduced the incremental gain from analysis-stage adjustment to approximately 1.12. 

For a further direct check of the bridge remainder in Theorem \ref{the:wLR_bridge}, let $\Delta_{h,n}=\widehat 
\psi_h-\widehat \psi_{\mathrm{aug}}$ and define the paired MC Root Mean Squared Error (RMSE) as $\sqrt{\tfrac{1}{10,000} \sum_{r=1}^{10,000} \Delta_{h,n,r}^2}$. Figure~\ref{fig:s1-paired-rmse-rate} plots this RMSE after multiplication by
$\sqrt n$.

\begin{figure}[htbp]
  \centering  \includegraphics[width=\linewidth]{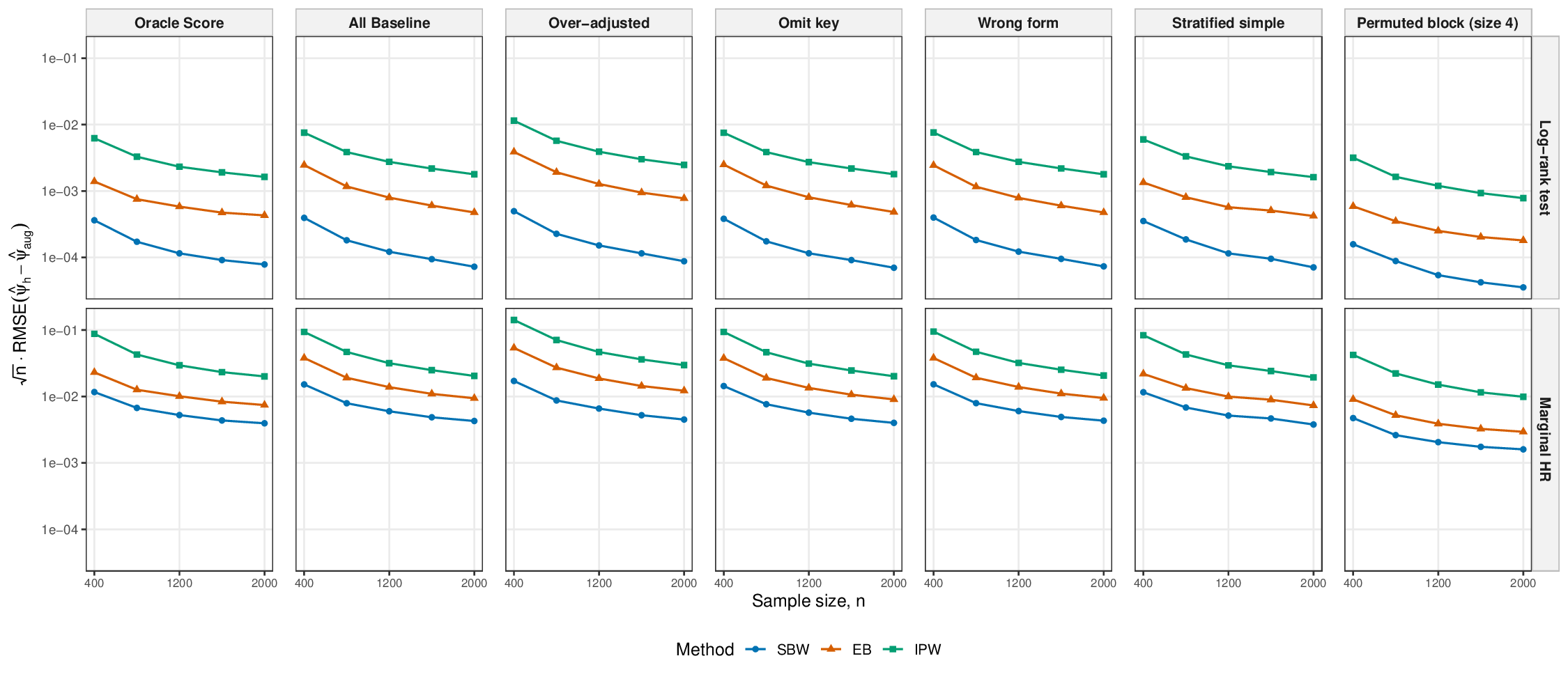}
  \caption{Paired RMSE-rate diagnostic for the bridge.  Curves show
$\sqrt n\,\{\mathbb{E}_{\mathrm{MC}}[(\widehat \psi_h-\widehat
\psi_{\mathrm{aug}})^2]\}^{1/2}$, for SBW, EB, and IPW across $n\in\{400,800,1200,1600,2000\}$, separately for the log-rank score and marginal log-HR.
All curves decrease monotonically; the scaled RMSE at $n=2000$ is 17.7\%--33.9\% of that at $n=400$. The vertical axis is logarithmic.}\label{fig:s1-paired-rmse-rate}
\end{figure}

Across all 42 combinations of scenario, endpoint, and weighting method, a
log-log regression over the five sample sizes for
$\mathrm{RMSE}_n\asymp n^{-\alpha}$ gave $\widehat\alpha=1.168$--$1.563$. The smallest lower 95\% confidence-limit was 1.017, well above the value $\alpha=1/2$ that a remainder of exact order $n^{-1/2}$ would produce, which supports the stated $o_p(n^{-1/2})$ bridge remainder.

\subsection{Simulation Study: Variance Estimation and Inference for the Marginal HR}\label{sec:Sim_Infer}

The inference study set $\beta_A=0$ and $\beta_Z=9$, used $n=400$, and included 10,000 MC replicates. Within each replicate, we generated one data set and held it fixed while applying the \emph{Oracle Score}, \emph{Over-adjusted}, \emph{Omit Key}, and \emph{Wrong Form} as working covariate specifications. Thus, these specifications changed only the covariates used in the fitted adjustment model; the generated covariates, treatment assignments, event times, and censoring outcomes were identical across specifications within a replicate. The unadjusted estimator does not depend on a working covariate specification, so it was computed once per replicate and reused as a common benchmark. Guided by the equivalence results and by the availability of method-specific variance estimators, we compared the unadjusted, augmented, EB, and IPW estimators. SBW was omitted because it lacks a packaged stacked M-estimation variance, which would preclude the same Primary-versus-bootstrap comparison \citep{shu2021variance,WeightIt}.

The ``Primary'' standard error was method-specific, chosen by what is available for each method.  For the unadjusted Cox model, it was the usual partial-likelihood information-based standard error returned by \texttt{coxph}; for the augmented estimator, the analytic standard error from \texttt{RobinCar} \citep{ye2024covariate}; and for EB and IPW, the stacked M-estimation standard error from \texttt{coxph\_weightit} in the R package \texttt{WeightIt} \citep{WeightIt}, which accounts for the uncertainty in the estimated weights in Section \ref{sec:VarEst}. The alternative standard error was obtained using a nonparametric bootstrap with $B=500$ subject-level resamples \citep{efron1994introduction}, refitting each method in every resample, including re-estimation of the EB and IPW weights. The ESD across the 10,000 MC point estimates served as an external benchmark for the mean estimated standard error.

\begin{table}[htbp]
\centering
\caption{Type I error and variance estimation for marginal HR inference with $n=400$. RR is the two-sided rejection rate ($\%$) at nominal
$\alpha=5\%$. The same 10,000 Monte Carlo data sets were analyzed under each
working covariate specification; the unadjusted estimator was computed once
per data set and is reported once as the common benchmark. A superscript
\(\dagger\) indicates an RR outside the 95\% MC reference interval $[4.57\%,\, 5.43\%]$.
}
\label{tab:type1-hr-primary-variance-ipw}
\resizebox{\textwidth}{!}{%
\begin{tabular}{llllccccc}
\toprule
\makecell[l]{Adjustment Set} & Method
& \makecell{RR (\%)\\\emph{Primary}} & \makecell{RR (\%)\\\emph{Bootstrap}}
& ESD
& \makecell{Mean SE\\\emph{Primary}} & \makecell{Mean SE\\\emph{Bootstrap}}
& \makecell{SE/ESD\\\emph{Primary}} & \makecell{SE/ESD\\\emph{Bootstrap}}\\
\midrule
None & Unadj. & 4.90 & 4.72 & 0.170 & 0.169 & 0.170 & 0.993 & 1.001 
\\\hdashline 
\addlinespace[2pt]
Oracle Score & Aug.  & 5.25 & 5.02 & 0.146 & 0.144 & 0.145 & 0.982 & 0.992
\\
 & EB  & 5.20 & 5.05 & 0.146 & 0.144 & 0.145 & 0.982 & 0.990
\\
 & IPW & 5.15 & 5.15 & 0.146 & 0.144 & 0.145 & 0.984 & 0.990
\\
\addlinespace[2pt]
Over-adjusted & Aug.  & 5.76\textsuperscript{\(\dagger\)} & 5.03 & 0.152 & 0.147 & 0.152 & 0.966 & 0.996
\\
 & EB  & 5.53\textsuperscript{\(\dagger\)} & 5.06 & 0.152 & 0.148 & 0.151 & 0.973 & 0.994
\\
 & IPW & 5.46\textsuperscript{\(\dagger\)} & 5.01 & 0.152 & 0.148 & 0.151 & 0.974 & 0.995
\\
\addlinespace[2pt]
Omit Key & Aug.  & 5.53\textsuperscript{\(\dagger\)} & 5.11 & 0.153 & 0.149 & 0.152 & 0.977 & 0.998
\\
 & EB  & 5.48\textsuperscript{\(\dagger\)} & 5.13 & 0.152 & 0.150 & 0.152 & 0.981 & 0.996
\\
 & IPW & 5.46\textsuperscript{\(\dagger\)} & 4.99 & 0.153 & 0.150 & 0.152 & 0.981 & 0.997
\\
\addlinespace[2pt]
Wrong Form & Aug.  & 5.51\textsuperscript{\(\dagger\)} & 5.04 & 0.151 & 0.147 & 0.150 & 0.974 & 0.995
\\
 & EB  & 5.44\textsuperscript{\(\dagger\)} & 5.24 & 0.151 & 0.147 & 0.150 & 0.978 & 0.993
\\
 & IPW & 5.45\textsuperscript{\(\dagger\)} & 5.04 & 0.151 & 0.147 & 0.150 & 0.978 & 0.993
\\
\bottomrule
\end{tabular}%
}
\end{table}

Table~\ref{tab:type1-hr-primary-variance-ipw} reports the common unadjusted
benchmark once, followed by the adjusted estimators under each working
covariate specification. With 10,000 replicates, an approximate 95\% MC
interval for a true 5\% rejection probability is $[4.57\%,\, 5.43\%]$. The
common unadjusted benchmark had Primary and bootstrap rejection rates of
4.90\% and 4.72\%, respectively, and corresponding SE/ESD ratios of 0.993
and 1.001. Under correct adjustment using the oracle score, the Primary
rejection rates were 5.25\%, 5.20\%, and 5.15\% for augmentation, EB, and IPW,
respectively; all were within the MC interval. Their mean Primary SEs were
98.2\%--98.4\% of the corresponding ESDs.

Under the \emph{Over-adjusted}, \emph{Omit Key}, and \emph{Wrong Form}
working covariate specifications, the Primary procedures for augmentation, EB, and IPW were mildly
anti-conservative. Their rejection rates ranged from 5.44\% to 5.76\%, with
mean SE/ESD ratios ranging from 0.966 to 0.981. This indicates modest
finite-sample underestimation of the sampling variability. The behavior of
IPW closely tracked that of EB: its Primary rejection rates were 5.46\%,
5.46\%, and 5.45\% under the three specifications, respectively.

In contrast, the bootstrap rejection rates for augmentation, EB, and IPW ranged from
4.99\% to 5.24\% across the more demanding working specifications, and all remained within the MC interval. The corresponding bootstrap SE/ESD ratios ranged from 0.993 to 0.998, across all four working specifications. Thus, the Primary analytic and M-estimation procedures were therefore accurate under correct adjustment
but exhibited modest finite-sample variance underestimation under more demanding adjustment sets. The explicit bootstrap provided the most uniform Type I error and variance calibration, while IPW and EB displayed nearly identical empirical behavior.

\subsection{Simulation Study: Prognostic strength and efficiency}

Prognostic strength and efficiency study fixed $n=400$ and $\beta_A=\log(0.7)$ and varied the prognostic
coefficient over $\beta_Z\in\{1,2,\ldots,10,12,15\}$ with total 12 levels. For each level, 8,000 MC  replicates were generated. As in the empirical equivalence study, this study also reports the variability through the ground-truth ESD. Efficiency was also compared for adjustment using either the oracle score $Z$ or the all baseline raw covariates in $X$.

The prognostic axes were calculated from a separate reference trial of 20,000
subjects. Within its control arm, Kendall's $\tau$ was the rank correlation
between $Z$ and the uncensored event time, and the C-index was the concordance
reported by a Cox model of the observed survival outcome on $Z$. Because larger
$Z$ implies shorter survival, the raw Kendall correlation is negative; the
figure displays $|\tau|$ as prognostic strength. Across the grid, $|\tau|$ increased from 0.059 to 0.528 and the C-index increased from 0.532 to 0.806.

\begin{figure}[htbp]
    \centering    \includegraphics[width=\linewidth]{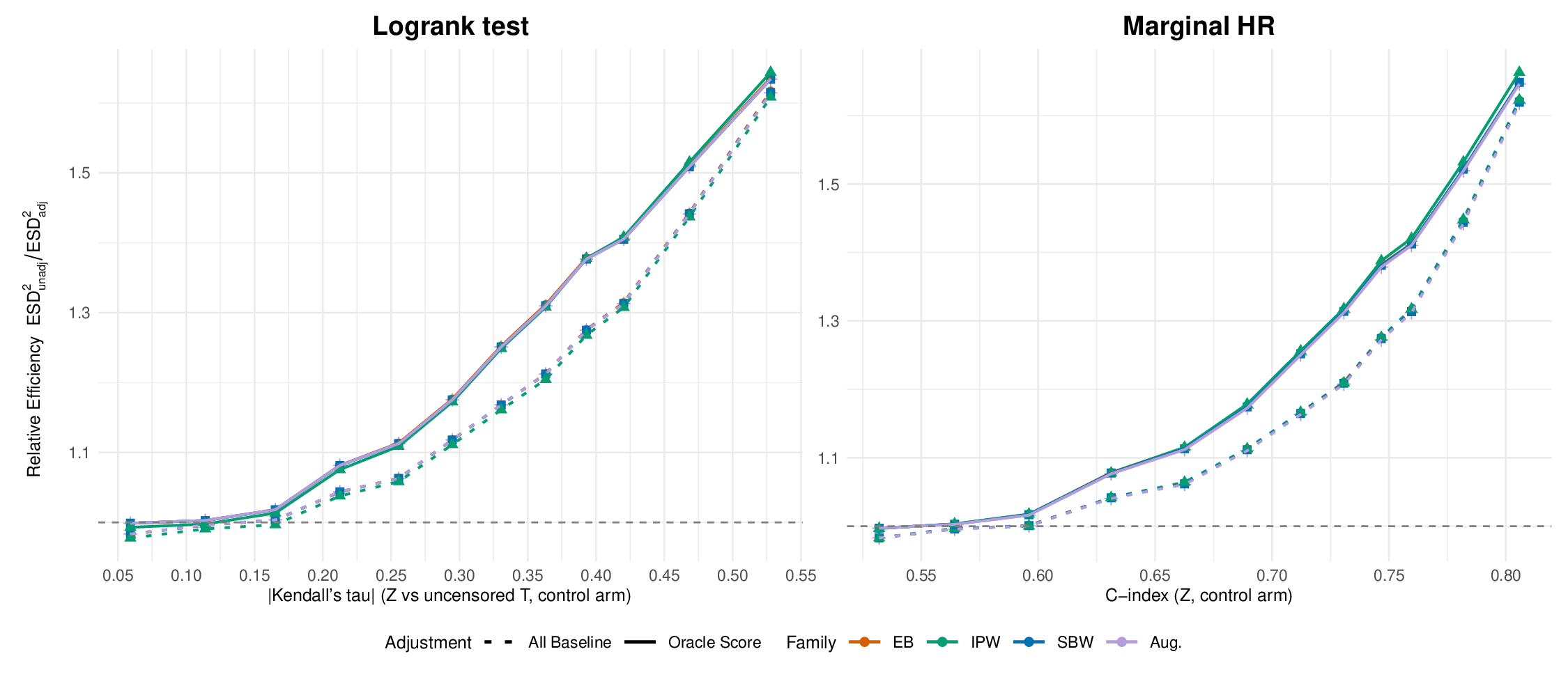}
    \caption{
    Relative efficiency as a function of prognostic strength. ESD is the empirical standard deviation across 8,000 MC replicates.
    }
    \label{fig:s2-prognostic-efficiency}
\end{figure}

Figure~\ref{fig:s2-prognostic-efficiency} shows that the efficiency gain increases monotonically with prognostic strength for both the log-rank test and the marginal Cox HR. Across both log rank test and marginal HR, augmentation and the balancing-weight implementations (EB, SBW, and IPW) yield nearly overlapping RE curves, supporting their empirical equivalence in this setting. When the prognostic signal is weak, the relative efficiency is close to one, indicating little gain from adjustment; as the C-index or Kendall's tau increases, the efficiency gain becomes substantial.

\section{Real Data Example}\label{sec:realdata}

We illustrate the proposed methods using the REWIND trial \citep{gerstein2019dulaglutide}, a multinational, randomized, double-blind, placebo-controlled cardiovascular outcomes trial of dulaglutide (1.5\,mg subcutaneous weekly) versus placebo in 9,901 participants with type~2 diabetes. Participants were aged 50 years or older with either established cardiovascular disease or cardiovascular risk factors, and the median follow-up was 5.4~years. Randomization was stratified by geographic region. The primary endpoint was the first occurrence of the three-component major adverse cardiovascular event (MACE-3): cardiovascular death, non-fatal myocardial infarction, or non-fatal stroke.

We applied three covariate-adjusted methods, including augmentation, EB, and IPW, to the MACE-3 endpoint, stratifying by geographic region. As in Section~\ref{sec:Sim_Infer}, SBW is omitted because a
packaged variance estimator is not available for it. Two adjustment sets were considered: (i) the AHA PREVENT 10-year cardiovascular risk score \citep{khan2024development}, a scalar composite of the baseline risk factors;  and
(ii) the individual PREVENT risk factors entered separately (age, sex, systolic blood pressure, antihypertensive use, total and HDL cholesterol, statin use, smoking, eGFR, and BMI).

\begin{table}[htbp]
\centering
\caption{Estimated marginal HRs for MACE-3 in the REWIND trial ($n=9{,}901$, 1{,}257 events). All adjusted analyses account for stratification
by geographic region. Variance reduction is
$1-\widehat{\Var}(\hat\theta)_{\mathrm{adj}}/\widehat{\Var}(\hat\theta)_{\mathrm{unadj}}$,
expressed as a percentage.}
\label{tab:rewind}
\begin{tabular}{llccc}
\toprule
Adjustment set & Method &$\exp{(\hat\theta)}$  (95\% CI) & $\mathrm{SE}(\hat\theta)$ & Var.\ reduction \\
\midrule
None & Unadj. & 0.882 (0.789, 0.985) & 0.0565 & --- \\
\addlinespace
PREVENT & Aug. & 0.884 (0.793, 0.986) & 0.0557 & 2.8\% \\
              & EB    & 0.884 (0.793, 0.986) & 0.0557 & 2.8\% \\
              & IPW   & 0.884 (0.793, 0.986) & 0.0557 & 2.8\% \\
\addlinespace
All Baseline & Aug. & 0.891 (0.799, 0.994) & 0.0555 & 3.5\% \\
                        & EB    & 0.896 (0.804, 0.999) & 0.0553 & 4.2\% \\
                        & IPW   & 0.891 (0.800, 0.994) & 0.0555 & 3.5\% \\
\bottomrule
\end{tabular}
\end{table}

Table~\ref{tab:rewind} reports the estimated marginal hazard ratios $\exp{(\hat\theta)}$. In the unadjusted analysis, dulaglutide reduced MACE-3 with $\exp{(\hat\theta)}=0.882$ (95\% CI: 0.789, 0.985). As expected under Theorem~\ref{thm:cox-equiv}, the three covariate-adjusted methods
give nearly identical estimates on this single dataset: adjusting for the PREVENT score, all three yield $\exp{(\hat\theta)}=0.884$ with identical confidence intervals; adjusting for the individual risk factors, they range from
0.891 (augmentation, IPW) to 0.896 (EB). 

The variance reduction column in Table~\ref{tab:rewind} reports the percentage reduction in the estimated variance of $\hat{\theta}$ relative to the unadjusted analysis. Adjustment for the PREVENT score reduced the estimated variance by 2.8\%, and adjustment for individual risk factors achieve 3.5--4.2\%, corresponding to standard errors of
0.0557 and 0.0553--0.0555 against 0.0565 unadjusted. 

The real-data findings reinforce two practical messages. First, the choice among augmentation, EB, and IPW is immaterial for the point estimate, so it can rest on computational convenience or interpretability; the consequential choice, as Section~\ref{sec:VarEst} discusses, is the variance estimator. Second, even in a well-powered trial whose unadjusted analysis is already significant, adjustment for pre-specified prognostic covariates reduces the estimated variance at no cost
in assumptions and without changing the estimand.

\section{Discussion}\label{sec:discussion}
We have shown that reweighting the arms for covariate balance and augmenting the score by within-arm regression are, to first order, the same covariate adjustment for time-to-event endpoints. The bridge is objective-free: any balancing-regular
weighting reproduces augmented score for both the log-rank test and the marginal hazard ratio, inheriting its validity and guaranteed efficiency gain while fitting no outcome model. Unlike the exact balancing--regression identity available for a linear estimator \citep{brunssmith2025augmented}, here the equivalence is only first-order, because the scores are nonlinear in the weights and the derived outcomes themselves depend on them; and it newly covers the covariate-adjusted log-rank test, which \citet{shao2026inverse} do not address.
 
Because the four available estimators agree to first order, the choice among them can rest on convenience; the consequential decisions are \emph{which} covariates to adjust for and how to perform inference. Although the weights are constructed without reference to the outcome, the efficiency gain comes entirely from the covariate--outcome association, so the selection rule is the same as for augmentation: balance on a few strongly prognostic covariates, pre-specified from external evidence \citep{fda2023covariate, vanlancker2024covariate}. The caveat is finite-sample, not asymptotic. Asymptotically the adjustment never loses efficiency, so a weakly associated covariate is neutral rather than harmful and merely shrinks the gain toward one. 

The Type~I error inflation we observed in Section~\ref{sec:Sim_Infer} is instead a variance-estimation issue, and it appeared nearly identically across the weighting
and augmentation estimators. It is consistent with the explanation offered in the literature, that with many covariates relative to the number of \emph{events} the analytic and M-estimation standard errors underestimate the true variability and the test over-rejects \citep{vanlancker2024covariate, tsiatis2008covariate}.
Because the point estimates stay accurate, the fix is on the variance side, a
bootstrap that re-estimates the weights, or a small-sample correction for the number of fitted parameters, restores near-nominal Type I error
\citep{austin2022bootstrap, ZhangVarEst2026, tsiatis2008covariate}. 
  
Several limitations bound these results. The equivalence is first-order and rests on first-moment balance, so it reproduces linear augmentation; it does not extend to the flexible or machine-learning augmentation of \citet{zhang2025unified}, for
which mean-balancing weights are not the design-side counterpart. We treat the Cox model as a working model, so when proportional hazards fails $\theta_0$ remains well defined but depends on the censoring distribution and the length of
follow-up, and its clinical interpretation is study-specific. In addition, we consider two arms and extension to multiple arms is conceptually straightforward but not developed here. Finally, SBW currently lacks a packaged variance estimator, which is why the inference study and the real-data analysis omit it.

Two directions extend the equivalence beyond first-moment balance. First, one can balance richer covariate functions than means, such as higher-order moments, interactions \citep{hainmueller2012entropy}, or a growing basis, in the limit a reproducing-kernel
Hilbert space \citep{wong2018kernel}, reproducing the richer augmentation that
approaches the nonparametric bound \citep{zhang2025unified, brunssmith2025augmented, athey2018approximate}.
As in the linear case, balancing and augmentation remain equivalent at every level
of richness \citep{brunssmith2025augmented}: a richer basis raises the attainable
efficiency gain but does not let weighting outperform the corresponding regression.

Second, balancing weights give a design-side, outcome-model-free way to adjust for a
pre-specified prognostic score, as in super-covariate or PROCOVA methods
\citep{holzhauer2023super, schuler2022increasing, ema2022procova}: the score is
predicted once from an external model and the trial arms are then balanced on it, which by our equivalence reproduces linear augmentation on the score while preserving the marginal survival estimand \citep{przybylski2026effect}. How much such a
score recovers relative to entering the risk factors separately will depend on how well the external model transfers to the trial population, and our REWIND analysis is a single dataset and cannot settle the question. When the external and trial populations differ, EB offers one route to reweighting the external data toward the trial before the score is fit \citep{josey2021transporting, chen2023entropy},  and quantifying the resulting gain is a natural next step.
 

\newpage
\appendix
\setcounter{equation}{0}
\renewcommand{\theequation}{S\arabic{equation}}
\setcounter{theorem}{0}
\renewcommand{\thetheorem}{S\arabic{theorem}}
\setcounter{section}{0}
\renewcommand{\thesection}{S\arabic{section}}

\part*{Supplementary Material}

\section{Equivalence of weighted log-rank test}

We do not attempt to analyze the weighted-score function $\Uh_h$ directly. Instead, we follow a \emph{proxy strategy}: identify a known, well-studied estimator that we expect to be
asymptotically equivalent to $\Uh_h$, prove the equivalence, and inherit the proxy's properties.
Here, the proxy is the augmented log-rank $\Uh_\mathrm{aug}$ of \cite{ye2024covariate}, for which the influence function, asymptotic variance, and
limiting distribution are well established. The bridge we will prove is
\begin{equation}\label{eq:strategy_bridge}
\sqrt n \bigl(\Uh_h - \Uh_\mathrm{aug}\bigr) \xrightarrow{p} 0.
\end{equation}
Once \eqref{eq:strategy_bridge} holds, Slutsky's theorem forces \(\sqrt n\,\Uh_h\)
and \(\sqrt n\,\Uh_\mathrm{aug}\) to share the same limiting distribution, so the weighted
log-rank test inherits both the validity and the guaranteed efficiency gain of
augmented test.

\paragraph{Roadmap.}
\begin{description}
  \item[Step 1 (Lemma~\ref{lem:identity}).]
  Establish an exact finite-sample representation of the weighted score
  \(\Uh_h\) in terms of weighted derived outcomes.
  \item[Step 2 (Lemmas~\ref{lem:ratio} and~\ref{lem:score_bridge}).]
  Linearize the perturbed risk-set ratio and obtain the first-order score
  expansion
  $
  \Uh_h-\Uh_\mathrm{unadj}
  =
  \frac1n\sum_{i=1}^n
  \xi_i\{A_i\Oh_{i1}-(1-A_i)\Oh_{i0}\}
  +\opn{n^{-1/2}}.$
  \item[Step 3 (Lemma~\ref{lem:car-mean-balance} and
  Propositions~\ref{prop:sbw-car}, \ref{prop:eb-car}, and
  \ref{prop:ipw-car}).]
  Verify that SBW, EB, and IPW are balancing-regular under the
  randomization condition.

  \item[Step 4 (Theorem~\ref{the:weighted-ye-bridge-car}).]
  Show that the leading score perturbation equals Ye's arm-specific
  regression adjustment up to \(\opn{n^{-1/2}}\), and hence
  $
  \Uh_h
  =
  \Uh_\mathrm{aug}
  +
  \opn{n^{-1/2}}.$
\end{description}

\subsection{Step 1 -- Weighted Algebraic Identity}
\begin{lemma}\label{lem:identity}
For any nonnegative weights, we have
\begin{equation}\label{eq:weightedidentity}
\Uh_h
\equiv
\frac{1}{n}\sum_{i=1}^n
h_i\{A_i\Oh_{i1}^{(h)}-(1-A_i)\Oh_{i0}^{(h)}\}.
\end{equation}
\end{lemma}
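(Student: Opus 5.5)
The identity is purely algebraic, so I will expand the right-hand side of \eqref{eq:weightedidentity} and regroup it until it matches the definition \eqref{eq:Uhdef}. No probabilistic argument is needed. Throughout, fix $t$ and abbreviate $p_1(t)=\Yb_1^{(h)}(t)/\Yb^{(h)}(t)$ and $p_0(t)=\Yb_0^{(h)}(t)/\Yb^{(h)}(t)$, so that $p_0+p_1=1$.

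\textbf{Step 1: split the weighted derived outcomes.} Using $A_iN_{i1}=A_iN_i$, $A_iY_{i1}=A_iY_i$, and the analogous relations for arm $0$, I split the right-hand side into an event part $E$ and a compensator part $R$:
\[
E=\frac1n\sum_{i=1}^n h_i\int_0^\tau\{A_ip_0(t)-(1-A_i)p_1(t)\}\,dN_i(t),
\]
\[
R=-\frac1n\sum_{i=1}^n h_i\int_0^\tau\{A_ip_0(t)-(1-A_i)p_1(t)\}\,Y_i(t)\frac{d\Nb^{(h)}(t)}{\Yb^{(h)}(t)}.
\]
In $E$, the integrand satisfies $A_ip_0-(1-A_i)p_1=A_i(p_0+p_1)-p_1=A_i-p_1$. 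Hence $E$ is exactly \eqref{eq:Uhdef}, and the lemma reduces to showing $R\equiv0$.

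\textbf{Step 2: the compensator part vanishes.} I interchange the finite sum with the integral and collect the risk-set terms arm by arm. This gives $\frac1n\sum_{i:A_i=1}h_iY_i=\Yb_1^{(h)}$ and $\frac1n\sum_{i:A_i=0}h_iY_i=\Yb_0^{(h)}$, so
\[
R=-\int_0^\tau\{\Yb_1^{(h)}(t)p_0(t)-\Yb_0^{(h)}(t)p_1(t)\}\frac{d\Nb^{(h)}(t)}{\Yb^{(h)}(t)}.
\]
The bracket equals $\{\Yb_1^{(h)}\Yb_0^{(h)}-\Yb_0^{(h)}\Yb_1^{(h)}\}/\Yb^{(h)}=0$ pointwise in $t$. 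Therefore $R=0$, which gives the identity for every finite sample and for any nonnegative weights with $\Yb^{(h)}>0$.

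\textbf{Where care is needed.} The main point to handle carefully is bookkeeping rather than substance. The derived outcome uses the factor $\Yb_{1-j}^{(h)}/\Yb^{(h)}$, that is, the \emph{opposite} arm's fraction, and the sign on the arm-$0$ contribution is negative; these two features together make the event terms collapse to $A_i-p_1$. I will also note that the compensator uses the pooled weighted $\Nb^{(h)}$, but only the arm-wise weighted risk sets enter the cancellation, which is why $R$ vanishes regardless of the weights. As a check, setting $h_i\equiv1$ recovers the unweighted representation $\Uh_\mathrm{unadj}=\frac1n\sum_i\{A_i\Oh_{i1}-(1-A_i)\Oh_{i0}\}$ quoted in Section~\ref{sec:augmentation}.
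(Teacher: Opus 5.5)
Your proof is correct and is essentially the paper's argument: the paper also splits each weighted derived outcome into an event piece and a compensator piece, and the compensator pieces cancel because each arm contributes $\int_0^\tau \Yb_0^{(h)}\Yb_1^{(h)}\{\Yb^{(h)}\}^{-2}\,d\Nb^{(h)}$, which is exactly your $R\equiv0$. The only difference is cosmetic: you match the event part subject by subject through $A_ip_0-(1-A_i)p_1=A_i-p_1$, whereas the paper first rewrites $\Uh_h$ in the arm-symmetric form $\int_0^\tau \Yb_0^{(h)}/\Yb^{(h)}\,d\Nb_1^{(h)}-\int_0^\tau \Yb_1^{(h)}/\Yb^{(h)}\,d\Nb_0^{(h)}$ and then matches the aggregated integrals.
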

 
\begin{proof}
\medskip
\noindent\textbf{Stage A: arm-symmetric form of \(\Uh_h\).}
 
Split the integrand in $\hat{U}_h$ into two pieces by distributivity:
\[
\Uh_h
= \frac{1}{n}\sum_{i=1}^n h_iA_i\int_0^\tau dN_i(t)
-\frac{1}{n}\sum_{i=1}^n h_i\int_0^\tau
\frac{\Yb_1^{(h)}(t)}{\Yb^{(h)}(t)}dN_i(t)=:\mathcal{A}-\mathcal{B}.
\]
 
\smallskip
\emph{Term} $\mathcal{A}$.
Since $
A_i\,dN_i(t)
=
A_i\{A_i\,dN_{i1}(t)+(1-A_i)\,dN_{i0}(t)\}
=
A_i\,dN_{i1}(t).
$ Then, 
\[
\mathcal{A}
=
\frac{1}{n}\sum_{i=1}^n h_iA_i\int_0^\tau dN_{i1}(t)
=
\int_0^\tau\frac{1}{n}\sum_{i:A_i=1}h_i\,dN_{i1}(t)
=
\int_0^\tau d\Nb_1^{(h)}(t).
\]
 
\smallskip
\emph{Term} $\mathcal{B}$.
The ratio \(\Yb_1^{(h)}(t)/\Yb^{(h)}(t)\) is independent of the index
\(i\) and can be moved outside the sum:
\[
\mathcal{B}
=
\int_0^\tau
\frac{\Yb_1^{(h)}(t)}{\Yb^{(h)}(t)}
\;\frac{1}{n}\sum_{i=1}^n h_i\,dN_i(t)
=
\int_0^\tau
\frac{\Yb_1^{(h)}(t)}{\Yb^{(h)}(t)}\,d\Nb^{(h)}(t)
\] 
\smallskip
Decompose
\(d\Nb^{(h)}=d\Nb_1^{(h)}+d\Nb_0^{(h)}\).
Combine the first two integrals, and use
\(1-\Yb_1^{(h)}/\Yb^{(h)}=\Yb_0^{(h)}/\Yb^{(h)}\):
\begin{equation}\label{eq:Uharm}
\begin{aligned}
    \Uh_h
    &=
    \int_0^\tau d\Nb_1^{(h)}(t)
    -
    \int_0^\tau\frac{\Yb_1^{(h)}(t)}{\Yb^{(h)}(t)}d\Nb_1^{(h)}(t)
    -
    \int_0^\tau\frac{\Yb_1^{(h)}(t)}{\Yb^{(h)}(t)}d\Nb_0^{(h)}(t) \\
    &=
    \int_0^\tau
    \frac{\Yb_0^{(h)}(t)}{\Yb^{(h)}(t)}
    \,d\Nb_1^{(h)}(t)
    -
    \int_0^\tau
    \frac{\Yb_1^{(h)}(t)}{\Yb^{(h)}(t)}
    \,d\Nb_0^{(h)}(t).
\end{aligned}
\end{equation}
\noindent\textbf{Stage B: arm-symmetric form of the right-hand side (Derived Outcome diff).}
First consider the contribution from the treated arm. By the
definition of \(\Oh_{i1}^{(h)}\),
\begin{equation}
    \begin{aligned}
      \frac{1}{n}\sum_{i=1}^n h_iA_i\Oh_{i1}^{(h)}
        &=
        \frac{1}{n}\sum_{i:A_i=1} h_i
        \int_0^\tau
        \frac{\Yb_0^{(h)}(t)}{\Yb^{(h)}(t)}
        \left\{
        dN_{i1}(t)
        -
        Y_{i1}(t)\frac{d\Nb^{(h)}(t)}{\Yb^{(h)}(t)}
        \right\}\\
        &=
        \int_0^\tau
        \frac{\Yb_0^{(h)}(t)}{\Yb^{(h)}(t)}
        d\Nb_1^{(h)}(t)
        -
        \int_0^\tau
        \frac{\Yb_0^{(h)}(t)}{\Yb^{(h)}(t)}
        \frac{\Yb_1^{(h)}(t)}{\Yb^{(h)}(t)}
        d\Nb^{(h)}(t),
    \end{aligned}
\end{equation}
Similarly, for the control arm,
\[
\frac{1}{n}\sum_{i=1}^n h_i(1-A_i)\Oh_{i0}^{(h)}
=
\int_0^\tau
\frac{\Yb_1^{(h)}(t)}{\Yb^{(h)}(t)}
d\Nb_0^{(h)}(t)
-
\int_0^\tau
\frac{\Yb_1^{(h)}(t)}{\Yb^{(h)}(t)}
\frac{\Yb_0^{(h)}(t)}{\Yb^{(h)}(t)}
d\Nb^{(h)}(t).
\]
 
Taking the treated contribution minus the control contribution yields the right part of \eqref{eq:Uharm}.  
\end{proof}

\subsection{Step 2 -- Risk-set ratio Linearization and Score Difference Representation}
Let weighted and unweighted risk-sets ratio denote as
$
R^{(h)}(t)=\frac{\Yb_1^{(h)}(t)}{\Yb^{(h)}(t)},$ and $
R(t)=\frac{\Yb_1(t)}{\Yb(t)}.$

\begin{lemma}[Ratio linearization]\label{lem:ratio}
Assume, for \(j=0,1\),
\begin{equation}\label{eq:xi^2}
   \frac1n\sum_{i:A_i=j}\xi_i^2=O_p(n^{-1}).
\end{equation} 
Then, uniformly over $t\in[0,\tau]$,
\[
R^{(h)}(t)-R(t) =
\frac{1}{\Yb(t)}
\frac1n\sum_{i=1}^n
\xi_i\{A_i-R(t)\}Y_i(t)
+
r_n(t),
\qquad
\sup_{t\le\tau}|r_n(t)|=\opn{n^{-1/2}}.
\]
In particular,
$
\sup_{t\le\tau}|R^{(h)}(t)-R(t)|=\Opn{n^{-1/2}}.
$
\end{lemma}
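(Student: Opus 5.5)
Writing $h_i=1+\xi_i$, the weighted risk-set averages split exactly as
\[
\Yb_j^{(h)}(t)=\Yb_j(t)+D_j(t),\qquad D_j(t)=\frac1n\sum_{i:A_i=j}\xi_iY_i(t),
\]
with $D(t)=D_0(t)+D_1(t)$ and $\Yb^{(h)}=\Yb+D$. The plan is to reduce the lemma to a first-order Taylor expansion of the smooth map $(a,b)\mapsto a/b$ around $(\Yb_1,\Yb)$, with the perturbations controlled uniformly in $t$.

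\textbf{Step 1: uniform size of the perturbation.} By Cauchy--Schwarz and $0\le Y_i(t)\le1$,
\[
\sup_{t\le\tau}|D_j(t)|\le\Bigl(\frac{n_j}{n}\Bigr)^{1/2}\Bigl(\frac1n\sum_{i:A_i=j}\xi_i^2\Bigr)^{1/2}=\Opn{n^{-1/2}}
\]
by \eqref{eq:xi^2}. The bound is uniform in $t$ because $t$ enters only through $Y_i(t)\in[0,1]$, so no empirical-process argument is needed.

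\textbf{Step 2: the denominator stays away from zero.} By the Glivenko--Cantelli theorem for the monotone process $\Yb(t)$, or a direct argument under the randomization condition since $\Yb$ is a sample mean of bounded monotone indicators, we have $\sup_t|\Yb(t)-\E Y_i(t)|=o_p(1)$. Together with $\inf_{t\le\tau}\E Y_i(t)>c$, this gives $\inf_t\Yb(t)\ge c/2$ with probability tending to one. Step 1 then gives the same bound for $\Yb^{(h)}=\Yb+D$. Under stratified randomization, $Y_i$ is not i.i.d. across arms, but $\Yb$ is still a pooled average of bounded monotone functions. I would handle this by applying the monotone-function Glivenko--Cantelli argument within each stratum and arm. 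This is the only place the design enters, and I expect it to be the main technical care point, although it is routine.

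\textbf{Step 3: the algebraic expansion.} Exactly,
\[
R^{(h)}-R=\frac{\Yb_1+D_1}{\Yb+D}-\frac{\Yb_1}{\Yb}=\frac{D_1-R\,D}{\Yb+D}=\frac{D_1-RD}{\Yb}-\frac{(D_1-RD)\,D}{\Yb\,(\Yb+D)}.
\]
The leading term is
\[
\frac{D_1-RD}{\Yb}=\frac{1}{\Yb(t)}\frac1n\sum_i\xi_i\{A_i-R(t)\}Y_i(t),
\]
because $D_1-RD=\frac1n\sum_i\xi_iY_i(A_i-R)$. This is the claimed linear term. For the remainder, $|R|\le1$ and Steps 1--2 give
\[
\sup_t|r_n(t)|\le\frac{\sup_t(|D_1|+|D|)\,\sup_t|D|}{(c/2)^2}=\Opn{n^{-1}}=\opn{n^{-1/2}}.
\]

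\textbf{Step 4: the rate of the full difference.} The final claim $\sup_t|R^{(h)}-R|=\Opn{n^{-1/2}}$ follows because the leading term is bounded by $2\sup_t(|D_1|+|D|)/c$, which is $\Opn{n^{-1/2}}$ by Step 1.
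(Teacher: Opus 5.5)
Your proposal is correct and follows the paper's proof essentially step for step. You use the same Cauchy--Schwarz bound on $D_j$, the same Glivenko--Cantelli argument that keeps $\bar Y$ away from zero, the same exact identity $R^{(h)}-R=(D_1-RD)/\bar Y^{(h)}$ split via $1/\bar Y^{(h)}=1/\bar Y-D/(\bar Y\bar Y^{(h)})$, and the same $O_p(n^{-1})$ remainder bound; your extra care about stratified designs in the Glivenko--Cantelli step addresses a point the paper passes over. One harmless nit: because $D$ is only $o_p(1)$, $\bar Y+D$ is bounded below by something like $c/4$ (not $c/2$) with probability tending to one, so your remainder constant should be $8/c^2$ rather than $(c/2)^{-2}$, which does not affect the rate.
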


\begin{proof}
Define the two perturbations
\[
D_1(t) = \Yb_1^{(h)}(t) - \Yb_1(t) = \frac{1}{n}\sum_{i=1}^n A_i \xi_i Y_i(t), \quad
D(t) = \Yb^{(h)}(t) - \Yb(t) = \frac{1}{n}\sum_{i=1}^n \xi_i Y_i(t).
\]

\noindent\textbf{Part A. Perturbations 
bounds.}
By Cauchy--Schwarz and \(0\le Y_i(t)\le1\),
\[
\sup_{t\le\tau}|D_j(t)|
=
\sup_{t\le\tau}
\Big|
\frac1n\sum_{i:A_i=j}\xi_iY_i(t)
\Big|
\le
\Big\{
\frac1n\sum_{i:A_i=j}\xi_i^2
\Big\}^{1/2}
=O_p(n^{-1/2}),
\]
for \(j=0,1\). Therefore,
$
\sup_{t\le\tau}|D(t)|
\le
\sup_{t\le\tau}|D_0(t)|
+
\sup_{t\le\tau}|D_1(t)|
=O_p(n^{-1/2}).
$

\noindent\textbf{Part B. Exact algebraic decomposition.}
\noindent A direct algebraic calculation gives the exact identity
\begin{align*}
R^{(h)}(t)-R(t)
&=\frac{\{\Yb_1(t)+D_1(t)\}\Yb(t)-\Yb_1(t)\{\Yb(t)+D(t)\}}
        {\Yb^{(h)}(t)\Yb(t)} =\frac{D_1(t)-R(t)D(t)}{\Yb^{(h)}(t)}.
\end{align*}
Next, by def of $D(t)$ we have
$
\frac1{\Yb^{(h)}(t)}=\frac1{\Yb(t)}-\frac{D(t)}{\Yb(t)\Yb^{(h)}(t)}
$, so
\begin{equation*}\label{eq:ratio_decomp_main}
R^{(h)}(t)-R(t)
= \frac{D_1(t)-R(t)D(t)}{\Yb(t)}
- \frac{\{D_1(t)-R(t)D(t)\}D(t)}{\Yb(t)\Yb^{(h)}(t)}: =\mathcal{A}(t)- \mathcal{B}(t).
\end{equation*}
By definition,
$
\mathcal{A}(t)=\frac1{\Yb(t)}\frac1n\sum_{i=1}^n\xi_i\{A_i-R(t)\}Y_i(t).
$

\noindent\textbf{Part C. Uniform bound for the remainder $\mathcal{B}(t)$.}
Since $r_n(t)=-\mathcal{B}(t)$, it suffices to bound $\mathcal{B}(t)$; we treat
the reciprocal denominator and the numerator separately.

\medskip
\noindent\emph{Step 1: the reciprocal denominator is $\Opn{1}$.}
Let $m(t)=\E\{Y_i(t)\}$. By the Glivenko--Cantelli theorem for the monotone
class $\{Y_i(t):t\le\tau\}$, $\sup_{t\le\tau}|\Yb(t)-m(t)|=\opn{1}$. Since
$\inf_{t\le\tau}m(t)>c>0$, for every $t\le\tau$
\[
\Yb(t)\ge m(t)-|\Yb(t)-m(t)|
\ge \inf_{s\le\tau}m(s)-\sup_{s\le\tau}|\Yb(s)-m(s)|,
\]
so that $\inf_{t\le\tau}\Yb(t)\ge c-\sup_{t\le\tau}|\Yb(t)-m(t)|$. Hence
$\{\inf_{t\le\tau}\Yb<c/2\}\subseteq\{\sup_{t\le\tau}|\Yb-m|>c/2\}$, and since the
right-hand event has probability tending to $0$,
\begin{equation}\label{eq:Ybar_lb}
\Pr\Big\{\inf_{t\le\tau}\Yb(t)\ge c/2\Big\}\to1.
\end{equation}
Next, $\Yb^{(h)}(t)=\Yb(t)+D(t)\ge\Yb(t)-|D(t)|$, so
$\inf_{t\le\tau}\Yb^{(h)}(t)\ge\inf_{t\le\tau}\Yb(t)-\sup_{t\le\tau}|D(t)|$ with
$\sup_{t\le\tau}|D(t)|=\Opn{n^{-1/2}}=\opn{1}$. If both
$\inf_{t\le\tau}\Yb\ge c/2$ and $\sup_{t\le\tau}|D|\le c/4$ hold, then
$\Yb^{(h)}(t)\ge c/2-c/4=c/4$ for all $t\le\tau$; equivalently
$\Big\{\inf_{t\le\tau}\Yb^{(h)}<c/4\Big\}
\subseteq
\Big\{\inf_{t\le\tau}\Yb<c/2\Big\}\cup\Big\{\sup_{t\le\tau}|D|>c/4\Big\}.
$
Both events on the right have probability tending to $0$ (by \eqref{eq:Ybar_lb}
and $\sup_{t\le\tau}|D|=\opn{1}$), so the union bound gives
\begin{equation}\label{eq:Ybarh_lb}
\Pr\Big\{\inf_{t\le\tau}\Yb^{(h)}(t)\ge c/4\Big\}\to1.
\end{equation}

Write $A_n=\{\inf_{t\le\tau}\Yb\ge c/2\}$ and
$B_n=\{\inf_{t\le\tau}\Yb^{(h)}\ge c/4\}$. On $A_n\cap B_n$ we have
$\inf_{t\le\tau}\Yb(t)\Yb^{(h)}(t)\ge(c/2)(c/4)=c^2/8$, and by
\eqref{eq:Ybar_lb}--\eqref{eq:Ybarh_lb},
$\Pr(A_n\cap B_n)\ge1-\Pr(A_n^c)-\Pr(B_n^c)\to1$. Therefore
\begin{equation}\label{eq:denomOp1}
\begin{aligned}
&\Pr(A_n\cap B_n)\to1\\
&\quad\implies\ \Pr\Big\{\inf_{t\le\tau}\Yb(t)\Yb^{(h)}(t)\ge c^2/8\Big\}\to1
&&\text{[bound holds on $A_n\cap B_n$]}\\
&\quad\iff\ \Pr\Big\{\sup_{t\le\tau}\tfrac{1}{\Yb(t)\Yb^{(h)}(t)}\le 8/c^2\Big\}\to1
&&\text{[reciprocal of a positive quantity]}\\
&\quad\implies\ \sup_{t\le\tau}\frac{1}{\Yb(t)\Yb^{(h)}(t)}=\Opn{1}
&&\text{[definition of $\Opn{1}$].}
\end{aligned}
\end{equation}

\smallskip
\noindent\emph{Step 2: the numerator is $\Opn{n^{-1}}$.}
Since $0\le R(t)\le1$, for every $t\le\tau$,
\[
|D_1(t)-R(t)D(t)|\le|D_1(t)|+R(t)|D(t)|\le|D_1(t)|+|D(t)|,
\]
so, taking the supremum and using
$\sup_{t\le\tau}|D_1(t)|=\Opn{n^{-1/2}}$,
$\sup_{t\le\tau}|D(t)|=\Opn{n^{-1/2}}$ from Part~A,
\[
\sup_{t\le\tau}|D_1(t)-R(t)D(t)|
\le\sup_{t\le\tau}|D_1(t)|+\sup_{t\le\tau}|D(t)|=\Opn{n^{-1/2}}.
\]
Multiplying by $\sup_{t\le\tau}|D(t)|=\Opn{n^{-1/2}}$,
\begin{equation}\label{eq:numbound}
\sup_{t\le\tau}\bigl|\{D_1(t)-R(t)D(t)\}D(t)\bigr|
\le\sup_{t\le\tau}|D_1(t)-R(t)D(t)|\;\sup_{t\le\tau}|D(t)|
=\Opn{n^{-1}}.
\end{equation}

\medskip
\noindent\emph{Step 3: combine numerator and denominator.}
Combining \eqref{eq:denomOp1} and \eqref{eq:numbound},
\begin{equation*}\label{eq:Bbound}
\begin{aligned}
\sup_{t\le\tau}|\mathcal{B}(t)|
&\le
\Big\{\sup_{t\le\tau}\frac{1}{\Yb(t)\Yb^{(h)}(t)}\Big\}
\Big\{\sup_{t\le\tau}\bigl|\{D_1(t)-R(t)D(t)\}D(t)\bigr|\Big\}
&&\text{[product of sup]}\\
&=\Opn{1}\cdot\Opn{n^{-1}}=\Opn{n^{-1}}=\opn{n^{-1/2}}
&&\text{[since $n^{-1}/n^{-1/2}\to0$].}
\end{aligned}
\end{equation*}
Since $r_n(t)=-\mathcal{B}(t)$, this proves
$\sup_{t\le\tau}|r_n(t)|=\opn{n^{-1/2}}$.
\end{proof}

\begin{lemma}[Derived-outcome representation of the score perturbation]
\label{lem:score_bridge}
Assume the conditions of Lemma~\ref{lem:ratio}.
Then
\begin{equation}\label{eq:scoreperturbation}
\Uh_h-\Uh_\mathrm{unadj}
=\frac1n\sum_{i=1}^n\xi_i\{A_i\Oh_{i1}-(1-A_i)\Oh_{i0}\}+\opn{n^{-1/2}}.
\end{equation}
\end{lemma}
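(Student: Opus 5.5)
Write $R^{(h)}=R+\Delta$. The plan is to start from the arm-symmetric form \eqref{eq:Uharm} obtained in Stage A of Lemma~\ref{lem:identity}, rather than from the weighted derived outcomes:
\[
\Uh_h=\int_0^\tau\{1-R^{(h)}(t)\}\,d\Nb_1^{(h)}(t)-\int_0^\tau R^{(h)}(t)\,d\Nb_0^{(h)}(t)
=\int_0^\tau d\Nb_1^{(h)}(t)-\int_0^\tau R^{(h)}(t)\,d\Nb^{(h)}(t).
\]
With $h_i=1$ this is the same formula for $\Uh_\mathrm{unadj}$, with $\Nb_1$, $\Nb$ and $R$. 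Writing $d\Nb^{(h)}=d\Nb+d\Nb^{(\xi)}$, where $\Nb^{(\xi)}(t)=n^{-1}\sum_i\xi_iN_i(t)$, the difference splits exactly as
\[
\Uh_h-\Uh_\mathrm{unadj}=\int_0^\tau d\Nb_1^{(\xi)}-\int_0^\tau R\,d\Nb^{(\xi)}-\int_0^\tau \Delta\,d\Nb-\int_0^\tau \Delta\,d\Nb^{(\xi)}.
\]
The first two terms equal $n^{-1}\sum_i\xi_i\int_0^\tau\{A_i-R(t)\}dN_i(t)$. This is the event part of the target, since $A_i\int R_0\,dN_{i1}-(1-A_i)\int R\,dN_{i0}=\int(A_i-R)\,dN_i$ with $R_0=1-R=\Yb_0/\Yb$.

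For the third term, substitute the linearization of Lemma~\ref{lem:ratio}. The remainder contributes at most $\sup_t|r_n(t)|\,\Nb(\tau)\le\sup_t|r_n(t)|=\opn{n^{-1/2}}$, because $\Nb(\tau)\le1$. The leading part is
\[
-\frac1n\sum_i\xi_i\int_0^\tau\{A_i-R(t)\}Y_i(t)\,\frac{d\Nb(t)}{\Yb(t)}.
\]
Now check the compensator part of the target: $-A_i\int R_0Y_{i1}\,d\Nb/\Yb+(1-A_i)\int R\,Y_{i0}\,d\Nb/\Yb$. For $A_i=1$ this is $-(1-R)Y_i\,d\Nb/\Yb=-(A_i-R)Y_i\,d\Nb/\Yb$, and for $A_i=0$ it is $R\,Y_i\,d\Nb/\Yb=-(A_i-R)Y_i\,d\Nb/\Yb$. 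So the two leading parts together reproduce exactly $n^{-1}\sum_i\xi_i\{A_i\Oh_{i1}-(1-A_i)\Oh_{i0}\}$.

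The fourth, cross term is $\int\Delta\,d\Nb^{(\xi)}=n^{-1}\sum_i\xi_i\int\Delta\,dN_i$. It is bounded by
\[
\sup_t|\Delta(t)|\cdot n^{-1}\sum_i|\xi_i|\le\Opn{n^{-1/2}}\Bigl(\frac1n\sum_i\xi_i^2\Bigr)^{1/2}=\Opn{n^{-1}},
\]
using Lemma~\ref{lem:ratio}'s uniform bound, Cauchy--Schwarz, \eqref{eq:xi^2}, and the fact that each $N_i$ has at most one jump.

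I expect the main obstacle to be the bookkeeping in the third term, namely confirming that the compensator pieces of $\Oh_{ij}$ match the ratio perturbation with the correct signs. That check, done above, is purely algebraic. The rest is uniform bounds with no empirical-process argument beyond what Lemma~\ref{lem:ratio} already supplies. It also helps that $\xi_i$ appears only through empirical averages integrated against measures of total mass at most one, so no martingale arguments are needed.
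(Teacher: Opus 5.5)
Your proposal is correct and is essentially the paper's argument. Your four terms are exactly the paper's $\mathcal{A}_n$, the unweighted part $\int_0^\tau\{R^{(h)}-R\}\,d\Nb$ of $\mathcal{B}_n$ (linearized via Lemma~\ref{lem:ratio}), and the $\xi$-weighted cross term (bounded by Cauchy--Schwarz with $N_i(\tau)\le1$), followed by the same arm-by-arm identification of the bracket with $\Oh_{i1}$ and $-\Oh_{i0}$; starting from the Stage~A form instead of the integrand split $h_i\{A_i-R^{(h)}\}-\{A_i-R\}=\xi_i\{A_i-R\}-h_i\{R^{(h)}-R\}$ is only a notational difference.
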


\begin{proof}
From the definitions of the $\hat{U}_h$ and $\hat{U}_L$, we have
\begin{equation}\label{eq:score_diff_start}
\Uh_h-\Uh_\mathrm{unadj}
=\frac1n\sum_{i=1}^n\int_0^\tau
\bigl[h_i\{A_i-R^{(h)}(t)\}-\{A_i-R(t)\}\bigr]\,dN_i(t).
\end{equation}
Using \(h_i=1+\xi_i\), the integrand decomposes exactly as
\begin{equation}\label{eq:integrand_decomp}
h_i\{A_i-R^{(h)}\}-\{A_i-R\}
=(h_i-1)\{A_i-R\}+h_i\{R-R^{(h)}\}
=\xi_i\{A_i-R\}-h_i\{R^{(h)}-R\}.
\end{equation}
Substituting \eqref{eq:integrand_decomp} into \eqref{eq:score_diff_start},
\begin{equation}\label{eq:score_exact_decomp}
\Uh_h -\Uh_\mathrm{unadj}
= \frac1n\sum_{i=1}^n \xi_i \int_0^\tau \{A_i-R(t)\}\,dN_i(t)
-\frac1n\sum_{i=1}^n h_i\int_0^\tau\{R^{(h)}(t)-R(t)\}\,dN_i(t):=\mathcal{A}_n - \mathcal{B}_n.
\end{equation}
Split $\mathcal{B}_n^{(h)}$ into two parts:
\[
\mathcal{B}_n^{(h)}
=\frac1n\sum_i\int_0^\tau\{R^{(h)}-R\}\,dN_i
+\frac1n\sum_i\xi_i\int_0^\tau\{R^{(h)}-R\}\,dN_i.
\]
We first show the second part is negligible. By Lemma~\ref{lem:ratio},
$\sup_{t\le\tau}|R^{(h)}(t)-R(t)|=\Opn{n^{-1/2}}$. Since $N_i(\tau)\le1$ for a
single TTE, $\frac1n\sum_i N_i(\tau)^2\le\frac1n\sum_i N_i(\tau)\le1$. So by Cauchy--Schwarz and \eqref{eq:xi^2}
,
\[
\frac1n\sum_{i=1}^n|\xi_i|N_i(\tau)
\le\Big(\frac1n\sum_{i=1}^n\xi_i^2\Big)^{1/2}
   \Big(\frac1n\sum_{i=1}^n N_i(\tau)^2\Big)^{1/2}
\le\Big(\frac1n\sum_{i=1}^n\xi_i^2\Big)^{1/2}=\Opn{n^{-1/2}}.
\]
Therefore
\[
\Big|\frac1n\sum_i\xi_i\int_0^\tau\{R^{(h)}-R\}\,dN_i\Big|
\le\sup_{t\le\tau}|R^{(h)}-R|\;\frac1n\sum_i|\xi_i|N_i(\tau)
=\Opn{n^{-1/2}}\cdot\Opn{n^{-1/2}}=\Opn{n^{-1}}=\opn{n^{-1/2}},
\]
and hence
\begin{equation}\label{eq:B_unweighted_main}
\Uh_h-\Uh_\mathrm{unadj}
=\mathcal{A}_n-\int_0^\tau\{R^{(h)}(t)-R(t)\}\,d\Nb(t)+\opn{n^{-1/2}}.
\end{equation}
By Lemma~\ref{lem:ratio},
and since $\Nb(\tau)\le1$,
$\big|\int_0^\tau r_n\,d\Nb\big|\le\sup_{t\le\tau}|r_n(t)|\,\Nb(\tau)=\opn{n^{-1/2}}$, so
\begin{equation}\label{eq:B_expand_main}
\int_0^\tau\{R^{(h)}(t)-R(t)\}\,d\Nb(t)
=\frac1n\sum_{i=1}^n\xi_i\int_0^\tau\frac{\{A_i-R(t)\}Y_i(t)}{\Yb(t)}\,d\Nb(t)+\opn{n^{-1/2}}.
\end{equation}
Combining \eqref{eq:B_unweighted_main} and \eqref{eq:B_expand_main},
\[
\Uh_h-\Uh_\mathrm{unadj}
=\frac1n\sum_i\xi_i
\left[\int_0^\tau\{A_i-R(t)\}\,dN_i(t)
-\int_0^\tau\frac{\{A_i-R(t)\}Y_i(t)}{\Yb(t)}\,d\Nb(t)\right]+\opn{n^{-1/2}}.
\]
In the bracket, If $A_i=1$, then $N_i=N_{i1}$, $Y_i=Y_{i1}$,
and $A_i-R(t)=\Yb_0(t)/\Yb(t)$, so the bracket equals
\[
\int_0^\tau\frac{\Yb_0(t)}{\Yb(t)}
\Big\{dN_{i1}(t)-Y_{i1}(t)\frac{d\Nb(t)}{\Yb(t)}\Big\}=\Oh_{i1}.
\]
If $A_i=0$, then $N_i=N_{i0}$, $Y_i=Y_{i0}$, and $A_i-R(t)=-\Yb_1(t)/\Yb(t)$, so
the bracket equals
\[
-\int_0^\tau\frac{\Yb_1(t)}{\Yb(t)}
\Big\{dN_{i0}(t)-Y_{i0}(t)\frac{d\Nb(t)}{\Yb(t)}\Big\}=-\Oh_{i0}.
\]
This proves \eqref{eq:scoreperturbation}.
\end{proof}

\subsection{Step 3 -- Weight expansions and stability under stratified randomization}
\label{subsec:car-step3}

\begin{remark}[Designs satisfying the Randomization condition]
\label{rem:common-car}
The randomization condition covers simple randomization as the single-stratum
case \(|\mathcal Z|=1\), and stratified randomization on the levels of \(Z\). We verify the rate \(n_{z1}/n_z-\pi=\Opn{n^{-1/2}}\) for the stratified schemes.
Under simple (or stratified simple) randomization, the treated
count in level \(z\) is \(n_{z1}\sim\mathrm{Bin}(n_z,\pi)\) conditionally on
the stratum sizes, so the central limit theorem gives
\(n_{z1}/n_z-\pi=\Opn{n^{-1/2}}\).  Under stratified permuted-block
randomization with maximal block size \(C_B\), the imbalance in each level comes
only from one incomplete block, so \(|n_{z1}-\pi n_z|\le C_B\) deterministically
and, since \(n_z/n\to_p\Pr(Z=z)>0\),
\[
\Big|\frac{n_{z1}}{n_z}-\pi\Big|\le\frac{C_B}{n_z}=\Opn{n^{-1}},
\]
which is stronger than required.  
\end{remark}

\begin{lemma}[Stratified covariate-mean balance and within-arm covariance consistency]
\label{lem:car-mean-balance}
\label{lem:car-cov-consistency}
Under the randomization condition, for \(j=0,1\):
\begin{enumerate}[label=\textup{(\alph*)},leftmargin=2.2em]
\item \emph{(mean balance)}
\begin{equation}\label{eq:car_mean_balance}
\Xb_j-\Xb=\Opn{n^{-1/2}};
\end{equation}
\item \emph{(covariance consistency)}
\begin{equation}\label{eq:car_cov_consistency}
\Sh_{X|j}
=\frac1{n_j}\sum_{i:A_i=j}(X_i-\Xb_j)(X_i-\Xb_j)^\top
\xrightarrow{p}\Sigma_X,
\end{equation}
and the pooled, grand-mean-centered covariance
\(\hat\Sigma_X\) satisfies
\(\hat\Sigma_X\xrightarrow{p}\Sigma_X\), so \(\hat\Sigma_X-\Sh_{X|j}=\opn{1}\).
\end{enumerate}
\end{lemma}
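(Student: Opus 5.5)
The plan is to reduce both parts to stratum-level statements, using Condition~\ref{cond:car-design}. Stratified randomization with finitely many levels $z\in\mathcal Z$ is treated throughout, and simple randomization is the special case $|\mathcal Z|=1$. Write $n_z$ for the stratum size, $n_{zj}$ for the number of arm-$j$ subjects in stratum $z$, and $\Xb_{zj}$, $\Xb_z$ for the corresponding stratum means.

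\textbf{Step 1: decomposition of the arm mean.} Start from the exact identity
\[
\Xb_j-\Xb=\sum_{z}\frac{n_{zj}}{n_j}\bigl(\Xb_{zj}-\Xb_z\bigr)+\sum_z\Bigl(\frac{n_{zj}}{n_j}-\frac{n_z}{n}\Bigr)\Xb_z.
\]
This splits the imbalance into a within-stratum part and a stratum-size part.

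\textbf{Step 2: the stratum-size part.} Since $n_{z1}/n_z-\pi=\Opn{n^{-1/2}}$ and $n_z/n\to_p\Pr(Z=z)>0$, we get $n_1/n-\pi=\Opn{n^{-1/2}}$ and $n_{zj}/n_j-n_z/n=\Opn{n^{-1/2}}$. The stratum means satisfy $\Xb_z=\Opn{1}$ by the weak law, using $\E\|X\|^2<\infty$. Hence the second sum is $\Opn{n^{-1/2}}$.

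\textbf{Step 3: the within-stratum part.} Condition on $\mathcal F=(Z_1,\ldots,Z_n,A_1,\ldots,A_n)$. Assignment is conditionally independent of the covariates given the strata, so within stratum $z$ the $X_i$ remain i.i.d.\ from $X\mid Z=z$ regardless of assignment. Therefore $\Xb_{zj}-\Xb_z$ is a difference between a subsample mean and the full stratum mean. Its conditional mean is zero, and its conditional variance is at most a constant times $\Var(X\mid Z=z)\,(1/n_{zj}+1/n_z)=\Opn{n^{-1}}$. Conditional Chebyshev, followed by the standard step that conditional $O_p$ implies unconditional $O_p$, gives $\Opn{n^{-1/2}}$. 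With finitely many strata and bounded coefficients $n_{zj}/n_j$, this proves (a).

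\textbf{Step 4: covariance consistency, part (b).} The same conditioning shows that, within each stratum, the arm-$j$ units are an i.i.d.\ sample of growing size $n_{zj}\to\infty$. The weak law then gives $n_{zj}^{-1}\sum_{i:A_i=j,Z_i=z}X_iX_i^\top\to_p\E(XX^\top\mid Z=z)$ and $\Xb_{zj}\to_p\E(X\mid Z=z)$. The stratum weights $n_{zj}/n_j$ converge to $\Pr(Z=z)$, since $n_{zj}/n_z\to\pi$ for $j=1$ and $1-\pi$ for $j=0$, which cancels against $n_j/n$. Averaging over strata gives $n_j^{-1}\sum_{i:A_i=j}X_iX_i^\top\to_p\E XX^\top$ and $\Xb_j\to_p\E X$, so $\Sh_{X|j}\to_p\Sigma_X$. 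The pooled $\hat\Sigma_X\to_p\Sigma_X$ follows from the ordinary law of large numbers, so the difference is $\opn{1}$.

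\textbf{Main obstacle.} The delicate point is that under permuted-block designs the $A_i$ are dependent. The argument must therefore condition on the entire assignment vector, using only exchangeability of the covariates within strata, and never rely on independence of the $A_i$. With only a second moment, the weak law for a random-size subsample needs care. The plan is to apply the conditional weak law via Chebyshev, or truncation for the second moments, with the sample size fixed given $\mathcal F$, and then integrate out.
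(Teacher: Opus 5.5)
Your proposal is correct and follows essentially the same route as the paper. It uses the same stratum decomposition of $\Xb_j-\Xb$ into a stratum-size term and a within-stratum term, and then stratum-wise laws of large numbers (with truncation for the second moments) aggregated over finitely many strata for part (b). Where you differ is only in detail: conditioning on the full vector $(Z_1,\ldots,Z_n,A_1,\ldots,A_n)$ with a conditional Chebyshev bound makes explicit the subsample-versus-stratum-mean ``shared fact'' that the paper asserts in one line.
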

 
\begin{proof}
We argue for \(j=1\); arm \(0\) follows by replacing \(\pi\) with \(1-\pi\).
For each \(z\in\mathcal Z\) write
\(n_z=\sum_i\mathbf 1(Z_i=z)\), \(n_{z1}=\sum_i A_i\mathbf 1(Z_i=z)\),
\(\Xb_z=n_z^{-1}\sum_{i:Z_i=z}X_i\), and
\(\Xb_{z1}=n_{z1}^{-1}\sum_{i:Z_i=z,A_i=1}X_i\).
 
\medskip
\noindent\textbf{Shared fact}
Within each stratum \(z\), treatment labels are assigned without using
covariate values beyond the stratification variables. Hence, for any
\(\phi\) with finite second moment,
\begin{equation}\label{eq:car_repr}
\frac1{n_{z1}}\sum_{i:Z_i=z,A_i=1}\phi(X_i)
-
\frac1{n_z}\sum_{i:Z_i=z}\phi(X_i)
=
\Opn{n_z^{-1/2}}.
\end{equation}
 
\medskip
\noindent\textbf{Part (a): mean balance.}
Both \(\Xb_1\) and \(\Xb\) are stratum-weighted averages of stratum quantities, $\Xb_1=\sum_{z\in\mathcal Z}\frac{n_{z1}}{n_1}\Xb_{z1}, \Xb=\sum_{z\in\mathcal Z}\frac{n_z}{n}\Xb_z.$ Writing \(\Xb_{z1}=\Xb_z+(\Xb_{z1}-\Xb_z)\) in the first sum and subtracting,
\[
\Xb_1-\Xb
=\sum_{z\in\mathcal Z}\Big(\frac{n_{z1}}{n_1}-\frac{n_z}{n}\Big)\Xb_z
+\sum_{z\in\mathcal Z}\frac{n_{z1}}{n_1}(\Xb_{z1}-\Xb_z),
\]
where the first sum collects the within-stratum mean \(\Xb_z\) against the weight difference, and the second collects the treated-vs-stratum deviation. For the first sum, \(\frac{n_{z1}}{n_1} - \frac{n_z}{n}
= \frac{n_z}{n}\frac{(n_{z1}/n_z) - (n_1/n)}{n_1/n}\); since both \(n_{z1}/n_z - \pi\) and \(n_1/n - \pi\) are \(\Opn{n^{-1/2}}\) while
\(n_1/n\to_p\pi>0\), each term is \(\Opn{n^{-1/2}}\), and with \(\Xb_z=\Opn{1}\)
(as \(\E\|X_i\|^2<\infty\)) and \(\mathcal Z\) finite the sum is
\(\Opn{n^{-1/2}}\).  For the second sum, \eqref{eq:car_repr} with \(\phi(x)=x\) gives
$ \Xb_{z1}-\Xb_z=\Opn{n_z^{-1/2}}=\Opn{n^{-1/2}}, $
because \(n_z/n\to_p p_z>0\). Since \(n_{z1}/n_1=\Opn1\) and
\(\mathcal Z\) is finite, the second sum is \(\Opn(n^{-1/2})\).
Combining the two sums proves \eqref{eq:car_mean_balance}.
 
\medskip
\medskip
\noindent\textbf{Part (b): covariance consistency.}
By \eqref{eq:car_repr} with \(\phi(x)=x\), $\Xb_{z1}-\Xb_z=\opn1$ for each \(z\).  Applying the same representative-subsample argument
componentwise to \(\phi(x)=xx^\top\), with truncation if needed under
\(\E\|X_i\|^2<\infty\), gives
\[
\frac1{n_{z1}}\sum_{i:Z_i=z,A_i=1}X_iX_i^\top
-
\frac1{n_z}\sum_{i:Z_i=z}X_iX_i^\top
=
\opn1 .
\]
Since \(n_{z1}/n_z\to_p\pi\) and \(\mathcal Z\) is finite, aggregating over
strata gives
$ \Xb_1\xrightarrow{p}\E X_i,$ and $
\frac1{n_1}\sum_{i:A_i=1}X_iX_i^\top
\xrightarrow{p}
\E(X_iX_i^\top).$
Therefore
\[
\Sh_{X|1}
=
\frac1{n_1}\sum_{i:A_i=1}X_iX_i^\top-\Xb_1\Xb_1^\top
\xrightarrow{p}
\E(X_iX_i^\top)-\E X_i\,\E X_i^\top
=
\Sigma_X.
\]
The pooled covariance is a full-sample average and converges to the same limit by the law of large numbers. Hence $\hat\Sigma_X-\Sh_{X|j}=\opn1.$
The argument for arm \(0\) is identical.
\end{proof}
\vspace{20pt}

Calibrated weights \(h_i=1+\xi_i\) satisfy,
within each arm \(j=0,1\),
\begin{equation}\label{eq:hconstraints-car}
\frac1{n_j}\sum_{i:A_i=j}h_i=1,
\qquad
\frac1{n_j}\sum_{i:A_i=j}h_iX_i=\Xb,
\qquad h_i\ge0,
\end{equation}
\begin{proposition}[SBW is Balancing-regular under randomization conditions]
\label{prop:sbw-car}
For each arm \(j=0,1\), define the SBW by
\begin{equation}\label{eq:sbw_objective_car}
\min_{\{h_i:A_i=j\}}\frac1{2n_j}\sum_{i:A_i=j}(h_i-1)^2
\quad\text{subject to }\eqref{eq:hconstraints-car}.
\end{equation}
Assume randomization condition.  Then, with probability tending to
one,
\begin{equation}\label{eq:sbw_xiexpansion_car}
\xi_i=(X_i-\Xb_j)^\top\Sh_{X|j}^{-1}(\Xb-\Xb_j),
\qquad i:A_i=j,
\end{equation}
i.e. balancing-regular weight holds with remainder
\(\rho_i\equiv0\), and
\begin{equation}\label{eq:sbw_xisquare_car}
\frac1n\sum_{i:A_i=j}\xi_i^2=\Opn{n^{-1}},
\qquad j=0,1.
\end{equation}
\end{proposition}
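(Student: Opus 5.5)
The plan is to solve the SBW problem \eqref{eq:sbw_objective_car} in closed form after temporarily dropping the nonnegativity constraint. Then I will show that the closed-form solution is nonnegative with probability tending to one, so it also solves the constrained problem.

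Fix arm $j$ and write $h_i=1+\xi_i$. The equality constraints in \eqref{eq:hconstraints-car} become
\[
\sum_{i:A_i=j}\xi_i=0 \quad\text{and}\quad \frac1{n_j}\sum_{i:A_i=j}\xi_iX_i=\Xb-\Xb_j .
\]
Using the first, the second can be rewritten with centered covariates as $\frac1{n_j}\sum_{i:A_i=j}\xi_i(X_i-\Xb_j)=\Xb-\Xb_j$. The problem is then to minimize $\frac1{2n_j}\sum\xi_i^2$ over this affine set. The objective is strictly convex and the feasible set is affine, so the minimizer is unique whenever the set is nonempty.

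The Lagrange conditions give $\xi_i=\mu+(X_i-\Xb_j)^\top\lambda$. The sum constraint forces $\mu=0$. The mean constraint then gives $\Sh_{X|j}\lambda=\Xb-\Xb_j$. By Lemma~\ref{lem:car-cov-consistency}(b), $\Sh_{X|j}\to_p\Sigma_X\succ0$, so $\Sh_{X|j}$ is invertible with probability tending to one. On that event $\lambda=\Sh_{X|j}^{-1}(\Xb-\Xb_j)$, which is exactly \eqref{eq:sbw_xiexpansion_car}. Since this problem is a projection onto an affine subspace, sufficiency of the first-order conditions is immediate and no duality argument is needed.

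The second-moment bound \eqref{eq:sbw_xisquare_car} follows by direct computation:
\[
\frac1n\sum_{i:A_i=j}\xi_i^2=\frac{n_j}{n}(\Xb-\Xb_j)^\top\Sh_{X|j}^{-1}(\Xb-\Xb_j).
\]
Here $n_j/n=O_p(1)$, the inverse matrix is $O_p(1)$, and $\Xb-\Xb_j=O_p(n^{-1/2})$ by Lemma~\ref{lem:car-mean-balance}(a). The product is therefore $O_p(n^{-1})$.

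The main obstacle is showing that nonnegativity does not bind, i.e.\ that $\min_i(1+\xi_i)\ge0$ with probability tending to one. This needs $\max_{i:A_i=j}|\xi_i|=o_p(1)$. The bound gives $\max_i|\xi_i|\le\max_i\|X_i-\Xb_j\|\,\|\Sh_{X|j}^{-1}\|\,\|\Xb-\Xb_j\|$. Under $\E\|X_i\|^2<\infty$, a standard argument (union bound with truncation, using $n\Pr(\|X\|>\epsilon\sqrt n)\to0$) gives $\max_{i\le n}\|X_i\|=o_p(n^{1/2})$. Combined with $\|\Xb-\Xb_j\|=O_p(n^{-1/2})$, this yields $\max_i|\xi_i|=o_p(1)$. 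Hence on an event of probability tending to one all $h_i>0$, the unconstrained minimizer is feasible for the full problem \eqref{eq:hconstraints-car}, and it is therefore the SBW solution. The conclusion $\rho_i\equiv0$ holds on this event, which is the sense in which the statement says ``with probability tending to one''.

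The stratified case requires nothing extra beyond Lemma~\ref{lem:car-mean-balance}. The maximum bound uses only the i.i.d.\ marginal law of the $X_i$, which randomization does not alter.
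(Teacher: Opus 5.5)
Your proposal is correct and follows the same route as the paper. The Lagrangian first-order condition gives $\xi_i=\alpha_j+\eta_j^\top(X_i-\Xb_j)$, the sum constraint forces $\alpha_j=0$, the balance constraint gives $\Sh_{X|j}\eta_j=\Xb-\Xb_j$, and the second-moment bound is the same quadratic-form computation.

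The one difference is in your favor. The paper's proof simply \emph{assumes} that the nonnegativity constraints are inactive with probability tending to one. You derive this from $\E\|X_i\|^2<\infty$, using $\max_{i\le n}\|X_i\|=o_p(n^{1/2})$ and $\|\Xb-\Xb_j\|=O_p(n^{-1/2})$. Your argument therefore establishes the proposition under the randomization condition alone, as the statement claims.
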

 
\begin{proof}
Assume that the non-negativity constraints are inactive with probability tending to one. 
 
\medskip\noindent\textbf{Step 1 (Lagrangian and first-order conditions).}
The Lagrangian for arm \(j\) is
\[
\mathcal{L}_j(\xi_i;\alpha_j,\eta_j)
=
\frac1{2n_j}\sum_{i:A_i=j}\xi_i^2
-\alpha_j\Big
(\frac1{n_j}\sum_{i:A_i=j}\xi_i\Big)
-\eta_j^\top\Big\{\frac1{n_j}\sum_{i:A_i=j}\xi_i(X_i-\Xb_j)-(\Xb-\Xb_j)\Big\}.
\]
The first-order condition with respect to \(\xi_i\) gives
$\xi_i=\alpha_j+\eta_j^\top(X_i-\Xb_j).
$ Then, averaging over \(i:A_i=j\) and using
\(n_j^{-1}\sum_{i:A_i=j}(X_i-\Xb_j)=0\) and
\(n_j^{-1}\sum_{i:A_i=j}\xi_i=0\) yields \(\alpha_j=0\).  Hence
\[
\xi_i=\eta_j^\top(X_i-\Xb_j).
\]
 
\medskip\noindent\textbf{Step 2 (Closed form for \(\eta_j\)).}
Substituting \(\xi_i=\eta_j^\top(X_i-\Xb_j)\) into the covariate-balance
constraint gives
\[
\frac1{n_j}\sum_{i:A_i=j}\{\eta_j^\top(X_i-\Xb_j)\}(X_i-\Xb_j)
=\frac1{n_j}\sum_{i:A_i=j}(X_i-\Xb_j)(X_i-\Xb_j)^\top\eta_j
=\Xb-\Xb_j,
\]
so by the definition of \(\Sh_{X|j}\),
\[
\Sh_{X|j}*\eta_j=\Xb-\Xb_j .
\]
By Lemma~\ref{lem:car-cov-consistency}, \(\Sh_{X|j}\overset{p}{\to}\Sigma_X\),
and since \(\lambda_{\min}\) is continuous in the matrix entries,
\(\lambda_{\min}(\Sh_{X|j})\overset{p}{\to}\lambda_{\min}(\Sigma_X)>0\).  Hence,
with probability tending to one, \(\Sh_{X|j}\) is nonsingular and
\(\eta_j=\Sh_{X|j}^{-1}(\Xb-\Xb_j)\).  Substituting back yields the exact
expansion
\[
\xi_i=(X_i-\Xb_j)^\top\Sh_{X|j}^{-1}(\Xb-\Xb_j),\qquad i:A_i=j,
\]
which proves \eqref{eq:sbw_xiexpansion_car}; in particular the remainder is \(\rho_i\equiv0\).
 
\medskip\noindent\textbf{Step 3 (Preliminary stochastic orders).}
We record the orders needed for \eqref{eq:sbw_xisquare_car}.
 
\emph{(i) Mean difference.}
By Lemma~\ref{lem:car-mean-balance},
$\Xb-\Xb_j=\Opn{n^{-1/2}}.
$
 
\emph{(ii) Covariance and its inverse.}
From \(\Sh_{X|j}\overset{p}{\to}\Sigma_X\), \(\lambda_{\min}(\Sigma_X)>0\), and
continuity of \(\lambda_{\min}\), with probability tending to one \(\Sh_{X|j}\)
is nonsingular and, in the spectral norm,
$
\|\Sh_{X|j}^{-1}\|=\lambda_{\min}(\Sh_{X|j})^{-1}=\Opn{1}.
$
 
\emph{(iii) Second-moment bound.}
For any vector \(v\), \(\|v\|^2=\operatorname{tr}(vv^\top)\), so
$ \frac1{n_j}\sum_{i:A_i=j}\|X_i-\Xb_j\|^2
=\operatorname{tr}(\Sh_{X|j})\overset{p}{\to}\operatorname{tr}(\Sigma_X)<\infty,$
the dimension of \(X_i\) being fixed; hence \(\operatorname{tr}(\Sh_{X|j})=\Opn{1}\),
and since \(n_j/n=\Opn{1}\),
\[
\frac1n\sum_{i:A_i=j}\|X_i-\Xb_j\|^2
=\frac{n_j}{n}\operatorname{tr}(\Sh_{X|j})=\Opn{1}.
\]
With \(\eta_j=\Sh_{X|j}^{-1}(\Xb-\Xb_j)\), parts (i)--(ii) of Step 4 give
\[
\|\eta_j\|\le\|\Sh_{X|j}^{-1}\|\,\|\Xb-\Xb_j\|=\Opn{n^{-1/2}}.
\]
 
\medskip\noindent\textbf{Step 4 (Proof of the second-moment bound in \eqref{eq:sbw_xisquare_car}).}
Using the exact expansion \(\xi_i=(X_i-\Xb_j)^\top\eta_j\),
\[
\frac1n\sum_{i:A_i=j}\xi_i^2
=\frac{n_j}{n}\,\eta_j^\top
\Big
\{\frac1{n_j}\sum_{i:A_i=j}(X_i-\Xb_j)(X_i-\Xb_j)^\top\Big
\}\eta_j
=\frac{n_j}{n}\,\eta_j^\top\Sh_{X|j}\eta_j.
\]
Since \(n_j/n=\Opn{1}\), \(\|\Sh_{X|j}\|=\Opn{1}\), and
\(\|\eta_j\|=\Opn{n^{-1/2}}\),
then $
\frac1n\sum_{i:A_i=j}\xi_i^2
\le\frac{n_j}{n}\,\|\Sh_{X|j}\|\,\|\eta_j\|^2=\Opn{n^{-1}}.$
\end{proof}

\begin{proposition}[EB is balancing-regular under the randomization condition]
\label{prop:eb-car}
For each arm \(j=0,1\), let the EB weights solve
\begin{equation}\label{eq:eb_primal_car}
\min_{\{h_i:A_i=j\}}
\frac1{n_j}\sum_{i:A_i=j}h_i\log h_i
\quad\text{subject to }\eqref{eq:hconstraints-car}.
\end{equation}
Assume randomization condition, that
\eqref{eq:eb_primal_car} admits an interior solution with a finite
dual multiplier \(\lambda_j\), and that
\[
\max_{i:A_i=j}
\left|
\lambda_j^\top(X_i-\Xb_j)
\right|
=
\opn{1},
\qquad
\frac1{n_j}\sum_{i:A_i=j}
\|X_i-\Xb_j\|^4
=
\Opn{1}.
\]
Then the EB weights are balancing-regular. In particular,
\begin{equation}\label{eq:eb_expansion_car}
h_i-1
=
(X_i-\Xb_j)^\top
\Sh_{X|j}^{-1}
(\Xb-\Xb_j)
+\rho_i,
\qquad i:A_i=j,
\end{equation}
where
\begin{equation}\label{eq:eb_remainder_car}
\frac1n\sum_{i:A_i=j}\rho_i^2
=
\opn{n^{-1}},
\qquad
\frac1n\sum_{i:A_i=j}(h_i-1)^2
=
\Opn{n^{-1}}.
\end{equation}
\end{proposition}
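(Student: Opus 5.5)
The proof runs through the dual of the entropy problem, whose solution is exponential-tilting weights; a second-order Taylor expansion of the exponential then shows they match the SBW projection up to a negligible remainder. Write $\tilde X_i=X_i-\Xb_j$ and $\Delta_j=\Xb-\Xb_j$ for $i:A_i=j$. The Lagrangian of \eqref{eq:eb_primal_car} has first-order condition $\log h_i+1-\alpha_j-\lambda_j^\top\tilde X_i=0$. Hence
\[
h_i=\exp\{\lambda_j^\top\tilde X_i\}\Big/\frac1{n_j}\sum_{k:A_k=j}\exp\{\lambda_j^\top\tilde X_k\},
\]
which builds in the normalization constraint. The dual multiplier $\lambda_j$ is then determined by the mean-balance equation
\[
\frac1{n_j}\sum_{i:A_i=j}h_i\tilde X_i=\Delta_j .
\]
Nonnegativity is automatic.

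The first step is to show $\|\lambda_j\|=\Opn{n^{-1/2}}$. Write $u_i=\lambda_j^\top\tilde X_i$, so $\max_i|u_i|=\opn{1}$ by assumption. Then $e^{u_i}=1+u_i+\tfrac12 u_i^2 e^{\bar u_i}$ with $|\bar u_i|\le|u_i|$, so the exponential factors are uniformly $1+\opn{1}$ with probability tending to one. Substituting into the balance equation and using $n_j^{-1}\sum\tilde X_i=0$ gives
\[
\Sh_{X|j}\lambda_j+r_j=\Delta_j\{1+\opn{1}\}.
\]
The remainder obeys $\|r_j\|\le C\,n_j^{-1}\sum u_i^2\|\tilde X_i\|\le C\max_i|u_i|\,\|\lambda_j\|\,n_j^{-1}\sum\|\tilde X_i\|^2=\opn{\|\lambda_j\|}$. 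Since $\Sh_{X|j}^{-1}=\Opn{1}$ (Lemma~\ref{lem:car-cov-consistency}) and $\Delta_j=\Opn{n^{-1/2}}$ (Lemma~\ref{lem:car-mean-balance}), we obtain $\|\lambda_j\|\{1-\opn{1}\}\le\Opn{n^{-1/2}}$, and the rate follows. Feeding this back gives $\lambda_j=\Sh_{X|j}^{-1}\Delta_j+\delta_j$, and I need to sharpen $\delta_j$. The refinement uses the fourth-moment assumption: by Cauchy--Schwarz,
\[
\|r_j\|\lesssim\|\lambda_j\|^2\Big(n_j^{-1}\sum\|\tilde X_i\|^4\Big)^{1/2}\Big(n_j^{-1}\sum\|\tilde X_i\|^2\Big)^{1/2}=\Opn{n^{-1}},
\]
and the normalizer contributes only $\Opn{\|\lambda_j\|^2}$. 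So $\delta_j=\Opn{n^{-1}}$, or at least $\opn{n^{-1/2}}$, which suffices.

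Next I define $\rho_i=h_i-1-\tilde X_i^\top\Sh_{X|j}^{-1}\Delta_j$ and decompose it as
\[
\rho_i=\tilde X_i^\top\delta_j+(e^{u_i}-1-u_i)+(h_i-e^{u_i}).
\]
For the first term, $n^{-1}\sum(\tilde X_i^\top\delta_j)^2\le\|\delta_j\|^2\operatorname{tr}\Sh_{X|j}=\opn{n^{-1}}$. For the second, $(e^{u_i}-1-u_i)^2\lesssim u_i^4\le\|\lambda_j\|^4\|\tilde X_i\|^4$, and by the fourth-moment condition its average is $\Opn{n^{-2}}$. For the third, $|h_i-e^{u_i}|=e^{u_i}|1/S_j-1|$, where $S_j=n_j^{-1}\sum_k e^{u_k}=1+\Opn{\|\lambda_j\|^2}$; its mean square is therefore $\Opn{n^{-2}}$. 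Together these give $n^{-1}\sum\rho_i^2=\opn{n^{-1}}$. The stability bound then follows from $(a+b)^2\le2a^2+2b^2$ combined with the SBW computation in Proposition~\ref{prop:sbw-car}, Step 4, which bounds the leading term by $\Opn{n^{-1}}$.

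The main obstacle is the first step, obtaining the $n^{-1/2}$ rate for $\lambda_j$ rather than merely assuming it. The delicate point is that the local-regularity hypothesis $\max_i|u_i|=\opn{1}$ is stated in terms of $\lambda_j$ itself. I will use it exactly as above: to make the Hessian of the convex dual, $n_j^{-1}\sum h_i\tilde X_i\tilde X_i^\top-\Delta_j\Delta_j^\top$, uniformly close to $\Sh_{X|j}$. A mean-value argument on the dual gradient then delivers the rate directly. The remaining steps are routine moment bookkeeping.
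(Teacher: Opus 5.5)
Your proposal is correct and follows the paper's own route: the exponential-tilting dual, linearization of the calibration map at $\lambda=0$ with Jacobian $\hat\Sigma_{X|j}$, inversion to get $\lambda_j=\hat\Sigma_{X|j}^{-1}(\bar X-\bar X_j)+O_p(n^{-1})$, and a Taylor expansion of the tilted weights, with the stability bound then borrowed from the SBW computation. You are more explicit than the paper in two places: you show where $\max_i|\lambda_j^\top(X_i-\bar X_j)|=o_p(1)$ yields the $O_p(n^{-1/2})$ rate for $\lambda_j$, which the paper's ``inversion'' step takes for granted, and you show where the fourth-moment condition controls the quadratic and normalizer remainders.
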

 
\begin{proof}
Fix arm \(j\) and write \(\tilde X_i=X_i-\Xb_j\), \(d_j=\Xb-\Xb_j\), and
\(\xi_i=h_i-1\).  Throughout we use \(d_j=\Opn{n^{-1/2}}\), \(\Sh_{X|j}\overset{p}{\to}\Sigma_X\succ0\)
(Lemma~\ref{lem:car-cov-consistency}), and the local EB regularity condition.
 
\medskip\noindent\textbf{Step 1 (Dual/tilting solution).}
Entropy balancing has the exponential-tilting solution
\[
h_i(\lambda_j)
=\frac{\exp(\lambda_j^\top\tilde X_i)}
{\,n_j^{-1}\sum_{k:A_k=j}\exp(\lambda_j^\top\tilde X_k)\,},
\qquad i:A_i=j,
\]
where the dual multiplier \(\lambda_j\) solves the calibration equation
\(\Psi_j(\lambda_j)=d_j\) with
\(\Psi_j(\lambda)=n_j^{-1}\sum_{i:A_i=j}h_i(\lambda)\tilde X_i\)
\citep{hainmueller2012entropy}.
 
\medskip\noindent\textbf{Step 2 (Linearization of the calibration map).}
Since \(\Psi_j(0)=0\) and the Jacobian at the origin is
\(\dot\Psi_j(0)=n_j^{-1}\sum_{i:A_i=j}\tilde X_i\tilde X_i^\top=\Sh_{X|j}\),
the local regularity condition gives the Taylor expansion
\[
\Psi_j(\lambda)=\Sh_{X|j}\lambda+\Opn{\|\lambda\|^2}.
\]
 
\medskip\noindent\textbf{Step 3 (Closed form for \(\lambda_j\)).}
By Lemmas~\ref{lem:car-mean-balance},
\(d_j=\Opn{n^{-1/2}}\) and \(\Sh_{X|j}\overset{p}{\to}\Sigma_X\succ0\), so
\(\|\Sh_{X|j}^{-1}\|=\Opn{1}\).  Inverting the calibration equation
\(\Psi_j(\lambda_j)=d_j\) through the expansion of Step 2,
\[
\lambda_j=\Sh_{X|j}^{-1}d_j+\Opn{n^{-1}}=\Opn{n^{-1/2}}.
\]
 
\medskip\noindent\textbf{Step 4 (Weight expansion and remainder).}
Taylor expanding the tilting weights and using
\(n_j^{-1}\sum_{i:A_i=j}\tilde X_i=0\),
\[
\xi_i=h_i-1=\lambda_j^\top\tilde X_i+\tilde\rho_i,
\qquad
\frac1n\sum_{i:A_i=j}\tilde\rho_i^2=\Opn{n^{-2}}=\opn{n^{-1}},
\]
where \(\tilde\rho_i\) collects the quadratic tilting remainder, controlled by
the local regularity condition.  Substituting the expansion for \(\lambda_j\)
from Step 3 and absorbing the \(\Opn{n^{-1}}\) multiplier error into \(\rho_i\),
\[
\xi_i=(X_i-\Xb_j)^\top\Sh_{X|j}^{-1}(\Xb-\Xb_j)+\rho_i,
\qquad
\frac1n\sum_{i:A_i=j}\rho_i^2=\opn{n^{-1}},
\]
which is the expansion in \eqref{eq:eb_expansion_car}.
 
\medskip\noindent\textbf{Step 5 (Stability bounds).}
The leading term of \(\xi_i\) is \(\eta_j^\top(X_i-\Xb_j)\) with
\(\eta_j=\Sh_{X|j}^{-1}d_j=\Opn{n^{-1/2}}\), exactly as for SBW; the remainder
adds \(n^{-1}\sum_{i:A_i=j}\rho_i^2=\opn{n^{-1}}\). Hence the second-moment bounds
$
\frac1n\sum_{i=1}^n\xi_i^2=\Opn{n^{-1}},
$
follow exactly in Proposition~\ref{prop:sbw-car}.
\end{proof}

\vspace{20pt}
\begin{proposition}[Stabilized IPW is Balancing-regular under Randomization condition]
\label{prop:ipw-car}
Assume the randomization condition. Suppose \(X_i\in\mathbb R^q\) has bounded support. Define
\[
e_i(\gamma)=\expit\{\logit(\pi)+(X_i-\Xb)^\top\gamma\},
\qquad \hat e_i=e_i(\hat\gamma),
\]
where \(\hat\gamma\) solves the centered logistic score equation
\(\sum_{i=1}^n\{A_i-e_i(\hat\gamma)\}(X_i-\Xb)=0\), assumed to have a regular
local solution. 
Define the stabilized IPW weights
\[
h_i=A_i\frac{\pi}{\hat e_i}+(1-A_i)\frac{1-\pi}{1-\hat e_i}.
\]
Then the weights are Balancing-regular: for each \(j=0,1\), write $\xi_i=h_i-1$, we have 
\begin{equation}\label{eq:ipw_xiexpansion_car}
\xi_i=(X_i-\Xb_j)^\top\Sh_{X|j}^{-1}(\Xb-\Xb_j)+\rho_i,
\qquad \frac1n\sum_{i:A_i=j}\rho_i^2=\opn{n^{-1}},
\end{equation}
and
\begin{equation}\label{eq:ipw_xisquare_car}
\frac1n\sum_{i:A_i=j}\xi_i^2=\Opn{n^{-1}}.
\end{equation}
\end{proposition}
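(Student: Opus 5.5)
The plan has four steps. First show $\hat\gamma=\Opn{n^{-1/2}}$ with an explicit linearization. Then Taylor-expand the stabilized weights around $\gamma=\mathbf 0$. Next, convert the resulting pooled-covariance tilt into the arm-specific projection $\Sh_{X|j}^{-1}(\Xb-\Xb_j)$. Finally, collect everything else into $\rho_i$.

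\textbf{Linearizing $\hat\gamma$.} Write the centered score as $S_n(\gamma)=n^{-1}\sum_i\{A_i-e_i(\gamma)\}(X_i-\Xb)$. At the origin $e_i(\mathbf 0)=\pi$, so
\[
S_n(\mathbf 0)=n^{-1}\sum_i (A_i-\pi)(X_i-\Xb)=\frac{n_1}{n}(\Xb_1-\Xb),
\]
where I used $\sum_i (X_i-\Xb)=0$. By Lemma~\ref{lem:car-mean-balance}(a) this is $\Opn{n^{-1/2}}$. The Jacobian is $-n^{-1}\sum_i e_i(\gamma)\{1-e_i(\gamma)\}(X_i-\Xb)(X_i-\Xb)^\top$, which at the origin equals $-\pi(1-\pi)\hat\Sigma_X$. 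Bounded support makes the second derivatives uniformly bounded, so $\|S_n(\gamma)-S_n(\mathbf 0)+\pi(1-\pi)\hat\Sigma_X\gamma\|\le C\|\gamma\|^2$. Using the assumed regular local solution, $\hat\Sigma_X\to_p\Sigma_X\succ0$, and a standard inverse-function argument, I get
\[
\hat\gamma=\{\pi(1-\pi)\}^{-1}\hat\Sigma_X^{-1}\frac{n_1}{n}(\Xb_1-\Xb)+\Opn{n^{-1}}=\Opn{n^{-1/2}}.
\]

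\textbf{Expanding the weights.} For $A_i=1$, write $h_i=\pi/\hat e_i$ with $u_i=(X_i-\Xb)^\top\hat\gamma$. Since $\max_i|u_i|\le C\|\hat\gamma\|=\Opn{n^{-1/2}}$ by bounded support, a uniform Taylor expansion gives $\pi/\expit(\logit\pi+u)=1-(1-\pi)u+O(u^2)$. Hence
\[
\xi_i=-(1-\pi)(X_i-\Xb)^\top\hat\gamma+O(u_i^2).
\]
Substituting the leading term of $\hat\gamma$ and using $n_1/n\to_p\pi$ yields
\[
\xi_i=(X_i-\Xb)^\top\hat\Sigma_X^{-1}(\Xb-\Xb_1)+r_i ,
\]
where $\max_i|r_i|=\Opn{n^{-1/2}\cdot o_p(1)}+\Opn{n^{-1}}$. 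The $o_p(1)$ factor comes from replacing $(1-\pi)n_1/\{n\pi(1-\pi)\}$ by $1$, which incurs a relative error of order $\Opn{n^{-1/2}}$. So this piece is in fact $\Opn{n^{-1}}$ uniformly. Arm $0$ is symmetric: $(1-\pi)/(1-\hat e_i)$ gives coefficient $+\pi u$, and $n_0(\Xb_0-\Xb)=-n_1(\Xb_1-\Xb)$ produces the same form with $j=0$.

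\textbf{Matching the arm-specific leading term.} The weight expansion involves $X_i-\Xb$ and $\hat\Sigma_X^{-1}$, but Definition~\ref{def:first-order-balancing} requires $X_i-\Xb_j$ and $\Sh_{X|j}^{-1}$. I write the difference as
\[
(\Xb_j-\Xb)^\top\hat\Sigma_X^{-1}d_j+(X_i-\Xb_j)^\top(\hat\Sigma_X^{-1}-\Sh_{X|j}^{-1})d_j,\qquad d_j=\Xb-\Xb_j .
\]
The first piece is $\Opn{n^{-1}}$ uniformly in $i$. For the second, bounded support gives $\|X_i-\Xb_j\|\le C$, Lemma~\ref{lem:car-cov-consistency}(b) gives $\|\hat\Sigma_X^{-1}-\Sh_{X|j}^{-1}\|=o_p(1)$, and $d_j=\Opn{n^{-1/2}}$. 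So every component of $\rho_i$ satisfies $\max_{i:A_i=j}|\rho_i|=o_p(n^{-1/2})$, which implies $n^{-1}\sum_{i:A_i=j}\rho_i^2=o_p(n^{-1})$ and establishes \eqref{eq:ipw_xiexpansion_car}. The stability bound \eqref{eq:ipw_xisquare_car} then follows as in Step~4 of Proposition~\ref{prop:sbw-car}: the leading term contributes $\frac{n_j}{n}\eta_j^\top\Sh_{X|j}\eta_j=\Opn{n^{-1}}$, and the remainder contributes $o_p(n^{-1})$.

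\textbf{Main obstacle.} The delicate step is the rigorous linearization of $\hat\gamma$ at rate $\Opn{n^{-1}}$. It needs a uniform quadratic bound on $S_n$ over a shrinking neighbourhood, together with the local uniqueness guaranteed by the regularity assumption. Bounded support is what makes both this and the uniform weight expansion straightforward. I would first prove $\hat\gamma=\Opn{n^{-1/2}}$ from strict monotonicity of the concave log-likelihood on a ball, and only then perform the second-order expansion.
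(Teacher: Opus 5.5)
Your proposal is correct and follows essentially the same route as the paper. It linearizes $\hat\gamma$ through the centered score with Jacobian $-\pi(1-\pi)\hat\Sigma_X$ at $\mathbf 0$, Taylor-expands $\pi/\hat e_i$ and $(1-\pi)/(1-\hat e_i)$ using bounded support, makes the same three substitutions (coefficient $(n_1/n)/\pi\to1$, centering $X_i-\Xb\to X_i-\Xb_j$, and $\hat\Sigma_X^{-1}\to\Sh_{X|j}^{-1}$), and then reuses the SBW stability argument. Your refinements are minor: you derive $\hat\gamma=\Opn{n^{-1/2}}$ from strict concavity of the fixed-intercept logistic likelihood, whereas the paper builds this rate into the ``regular local solution'' assumption, and this gives an $\Opn{n^{-1}}$ rather than $\opn{n^{-1/2}}$ linearization error; you also bound $\rho_i$ uniformly in $i$ rather than only in $L_2$.
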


\begin{proof}
The first-order expansion of estimated-propensity IPW weights is closely related
to the expansion in \citet{shao2026inverse}.  We recall the argument in the
present notation and verify that.

\medskip\noindent\textbf{Step 1 (Treatment--covariate imbalance).}
Let \(\Delta_n=\frac1n\sum_{i=1}^n(A_i-\pi)(X_i-\Xb)\) as the empirical treatment---covariate imbalance. Since
\(\frac1n\sum_i(X_i-\Xb)=0\),
$ \Delta_n=\frac1n\sum_{i=1}^n A_i(X_i -\Xb)=\frac{n_1}{n}(\Xb_1-\Xb) =-\frac{n_0}{n}(\Xb_0-\Xb),$
which is \(\Opn{n^{-1/2}}\) by Lemma~\ref{lem:car-mean-balance} and
\(n_j/n=\Opn1\). This design-robust bound is the only place the stratified
design enters.

\medskip\noindent\textbf{Step 2 (Expansion of the logistic score and the weights).}
Define the centered logistic score
\[
S_n(\gamma)
=
\frac1n\sum_{i=1}^n
\{A_i-e_i(\gamma)\}(X_i-\Xb),
\]
so that \(\hat\gamma\) satisfies \(S_n(\hat\gamma)=0\). Since
\(e_i(0)=\pi\),
\[
S_n(0)
=
\frac1n\sum_{i=1}^n(A_i-\pi)(X_i-\Xb)
=
\Delta_n .
\]
Moreover,
$\left.
\frac{\partial e_i(\gamma)}{\partial\gamma}
\right|_{\gamma=0}
=
\pi(1-\pi)(X_i-\Xb).$ 
Therefore, a Taylor expansion of \(S_n(\gamma)\) around \(\gamma=0\) gives
\[
0
=
S_n(\hat\gamma) = S_n(0)+\left.
\frac{\partial S_n(\gamma)}{\partial\gamma}
\right|_{\gamma=0}  \hat{\gamma} + R_{\gamma,n}
=
\Delta_n
-
\pi(1-\pi)\hat\Sigma_X\hat\gamma
+
R_{\gamma,n}.
\]
Under bounded support of \(X_i\), the second-order remainder satisfies
$
\|R_{\gamma,n}\|
=
O_p(\|\hat\gamma\|^2).
$
By the assumed regular local solution, we take the local root satisfying
\(\|\hat\gamma\|=\Opn{n^{-1/2}}\). Hence
$
\|R_{\gamma,n}\|
=
O_p(n^{-1})
=
o_p(n^{-1/2}).$
Since \(\hat\Sigma_X\overset{p}{\to}\Sigma_X\succ0\), we have
\(\|\hat\Sigma_X^{-1}\|=\Opn1\). Rearranging the expanded score equation yields
\begin{equation}\label{eq:ipw_gamma_expansion}
\hat\gamma
=
\{\pi(1-\pi)\hat\Sigma_X\}^{-1}\Delta_n
+
o_p(n^{-1/2}).
\end{equation}
This is the centered fixed-intercept analogue of the estimated-propensity
expansion in \citet{shao2026inverse}.

Next, applying the same Taylor expansion to the individual fitted propensity
score \(e_i(\gamma)\), we have, uniformly over \(i\),
\[
\hat e_i-\pi
=
\pi(1-\pi)(X_i-\Xb)^\top\hat\gamma
+
\Opn{\|\hat\gamma\|^2}.
\]
For treated subjects, \(A_i=1\), a first-order Taylor expansion of
\(u\mapsto \pi/u-1\) around \(u=\pi\) gives
\[
\frac{\pi}{\hat e_i}-1
=
-\frac{\hat e_i-\pi}{\pi}
+
O_p\{(\hat e_i-\pi)^2\}
=
-(1-\pi)(X_i-\Xb)^\top\hat\gamma
+
\Opn{\|\hat\gamma\|^2}.
\]
For control subjects, \(A_i=0\), a first-order Taylor expansion of
\(u\mapsto (1-\pi)/(1-u)-1\) around \(u=\pi\) gives
\[
\frac{1-\pi}{1-\hat e_i}-1
=
\frac{\hat e_i-\pi}{1-\pi}
+
O_p\{(\hat e_i-\pi)^2\}
=
\pi(X_i-\Xb)^\top\hat\gamma
+
\Opn{\|\hat\gamma\|^2}.
\]
Combining the two cases,
$
\xi_i = h_i-1
=
-(A_i-\pi)(X_i-\Xb)^\top\hat\gamma
+
\Opn{\|\hat\gamma\|^2},
$
uniformly over \(i\). Since \(\hat\gamma=\Opn{n^{-1/2}}\), the quadratic
remainder has squared average \(\Opn{n^{-2}}=\opn{n^{-1}}\). Substituting
\eqref{eq:ipw_gamma_expansion} yields
\begin{equation}\label{eq:ipw_pooled_expansion}
\xi_i
=
-\frac{A_i-\pi}{\pi(1-\pi)}
(X_i-\Xb)^\top\hat\Sigma_X^{-1}\Delta_n
+
r_i^{*},
\qquad
\frac1n\sum_{i=1}^n(r_i^{*})^2=\opn{n^{-1}}.
\end{equation}
This is the centered fixed-intercept version of the estimated-propensity
weight expansion in \citet{shao2026inverse}.
\medskip\noindent\textbf{Step 3 (Arm-specific calibration form).}
For \(A_i=1\), substitute \(\Delta_n=\frac{n_1}{n}(\Xb_1-\Xb)\) into
\eqref{eq:ipw_pooled_expansion}:
\[
\xi_i=\frac{n_1/n}{\pi}(X_i-\Xb)^\top\hat\Sigma_X^{-1}(\Xb-\Xb_1)+r_i^{*}.
\]
We replace this leading term by the target
\((X_i-\Xb_1)^\top\Sh_{X|1}^{-1}(\Xb-\Xb_1)\) through three substitutions, each
changing it by an \(L_2\)-negligible amount: the coefficient \(\frac{n_1/n}{\pi}\)
by \(1\); the centering \(X_i-\Xb\) by \(X_i-\Xb_1\); and the pooled covariance
\(\hat\Sigma_X^{-1}\) by the arm-specific \(\Sh_{X|1}^{-1}\). The three induced
differences are
\[
\Big(\tfrac{n_1/n}{\pi}-1\Big)(X_i-\Xb)^\top\hat\Sigma_X^{-1}(\Xb-\Xb_1),
\quad
(\Xb_1-\Xb)^\top\hat\Sigma_X^{-1}(\Xb-\Xb_1),
\quad
(X_i-\Xb_1)^\top(\hat\Sigma_X^{-1}-\Sh_{X|1}^{-1})(\Xb-\Xb_1).
\]
Each is a product of factors that are \(\Opn{n^{-1/2}}\) or \(\opn1\) (the
coefficient gap and \(\Xb-\Xb_1\) by Lemma~\ref{lem:car-mean-balance};
\(\hat\Sigma_X^{-1}-\Sh_{X|1}^{-1}=\opn1\) by Lemma~\ref{lem:car-cov-consistency})
and factors that are \(\Opn1\) (the bounded covariates and
\(\|\hat\Sigma_X^{-1}\|\)); together with \(\Xb-\Xb_1=\Opn{n^{-1/2}}\), each
difference has squared average \(\opn{n^{-1}}\). Collecting the three differences
and \(r_i^{*}\) into a single remainder \(\rho_i\),
\[
\xi_i=(X_i-\Xb_1)^\top\Sh_{X|1}^{-1}(\Xb-\Xb_1)+\rho_i,
\qquad \frac1n\sum_{i:A_i=1}\rho_i^2=\opn{n^{-1}}.
\]
The control arm, using \(\Delta_n=\frac{n_0}{n}(\Xb-\Xb_0)\) and replacing
\(\pi,\Xb_1\) by \(1-\pi,\Xb_0\), is identical. This proves
\eqref{eq:ipw_xiexpansion_car}.

\medskip\noindent\textbf{Step 4 (Stability bounds).}
The bounds \eqref{eq:ipw_xisquare_car} follow exactly as in
Proposition~\ref{prop:sbw-car}: with \(\eta_j=\Sh_{X|j}^{-1}(\Xb-\Xb_j)\),
\(\|\eta_j\|=\Opn{n^{-1/2}}\) and
\(\frac1n\sum_{i:A_i=j}\|X_i-\Xb_j\|^2=\Opn1\)
(Lemmas~\ref{lem:car-mean-balance}) give
$\frac1n\sum_{i:A_i=j}\xi_i^2=\Opn{n^{-1}}.$
\end{proof}

\subsection{Step 4 -- Reduction to Covariate-Adjusted Score}
\label{subsec:car-step4}

\begin{theorem}[Log-Rank Score Equivalence]
\label{the:weighted-ye-bridge-car}
Assume the weights are Balancing-regular.
Then
\begin{equation}\label{eq:Uh_linear_expansion_main}
  \hat U_h
  =\hat U_\mathrm{unadj}
  +\frac{1}{n}\sum_{i=1}^n
   \bigl\{A_i\xi_i\hat O_{i1}-(1-A_i)\xi_i\hat O_{i0}\bigr\}
  +\opn{n^{-1/2}},
\end{equation}
and consequently
\begin{equation}\label{eq:weighted_ye_bridge_car}
\Uh_h=\Uh_\mathrm{aug}+\opn{n^{-1/2}}.
\end{equation}
\end{theorem}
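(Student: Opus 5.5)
The first display \eqref{eq:Uh_linear_expansion_main} is Lemma~\ref{lem:score_bridge} rewritten. The only hypothesis that lemma needs is the arm-specific second-moment bound $n^{-1}\sum_{i:A_i=j}\xi_i^2=\Opn{n^{-1}}$, and this is part of Definition~\ref{def:first-order-balancing}. So \eqref{eq:Uh_linear_expansion_main} holds immediately. The real work is to show that the perturbation term $P_n:=n^{-1}\sum_i\{A_i\xi_i\Oh_{i1}-(1-A_i)\xi_i\Oh_{i0}\}$ equals the augmentation correction
\[
-\frac1n\sum_{i=1}^n\bigl\{A_i(X_i-\Xb)^\top\bh_1-(1-A_i)(X_i-\Xb)^\top\bh_0\bigr\}
\]
up to $\opn{n^{-1/2}}$.

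The plan is to split $P_n$ arm by arm and substitute the expansion \eqref{eq:generic_cal_expansion_car}.

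\emph{Remainder terms.} Cauchy--Schwarz gives
\[
\Bigl|n^{-1}\sum_{i:A_i=j}\rho_i\Oh_{ij}\Bigr|\le\Bigl(n^{-1}\sum_{i:A_i=j}\rho_i^2\Bigr)^{1/2}\Bigl(n^{-1}\sum_{i:A_i=j}\Oh_{ij}^2\Bigr)^{1/2}.
\]
The first factor is $\opn{n^{-1/2}}$ by Definition~\ref{def:first-order-balancing}. For the second factor, note that $|\Oh_{ij}|\le N_{ij}(\tau)+\int_0^\tau Y_{ij}\,d\Nb/\Yb$. On the event where $\inf_t\Yb\ge c/2$, this is bounded by $1+2/c$, and that event has probability tending to one by \eqref{eq:Ybar_lb}. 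Hence the derived outcomes are uniformly $\Opn{1}$ and the remainder terms are $\opn{n^{-1/2}}$.

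\emph{Leading terms.} The leading term is exactly algebraic. Using $\sum_{i:A_i=j}(X_i-\Xb_j)\Oh_{ij}=n_j\Sh_{X|j}\bh_j$, the definition of the OLS slope gives
\[
n^{-1}\sum_{i:A_i=j}(X_i-\Xb_j)^\top\Sh_{X|j}^{-1}(\Xb-\Xb_j)\Oh_{ij}=\frac{n_j}{n}(\Xb-\Xb_j)^\top\bh_j,
\]
with no error at all. Thus
\[
P_n=\frac{n_1}{n}(\Xb-\Xb_1)^\top\bh_1-\frac{n_0}{n}(\Xb-\Xb_0)^\top\bh_0+\opn{n^{-1/2}}.
\]

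\emph{Matching the augmentation term.} The augmentation correction computes exactly as
\[
-\frac{n_1}{n}(\Xb_1-\Xb)^\top\bh_1+\frac{n_0}{n}(\Xb_0-\Xb)^\top\bh_0,
\]
because $n^{-1}\sum_{i:A_i=j}(X_i-\Xb)=(n_j/n)(\Xb_j-\Xb)$. This coincides identically with the expression for $P_n$ above. Combining it with \eqref{eq:UCL} gives \eqref{eq:weighted_ye_bridge_car}.

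The main obstacle is only mild: the whole argument needs $\Sh_{X|j}$ to be invertible so that $\bh_j$ is well defined, and it needs the $\Opn{1}$ control of $n^{-1}\sum\Oh_{ij}^2$. I would handle invertibility by working on the event where $\Sh_{X|j}$ is nonsingular, which has probability tending to one by Lemma~\ref{lem:car-cov-consistency}. The risk-set lower bound needed for the derived outcomes is already available from \eqref{eq:Ybar_lb}. Neither step requires consistency of $\bh_j$, since the identity in the leading term is exact.
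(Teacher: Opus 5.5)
Your proposal is correct and follows the paper's proof essentially step for step. You invoke Lemma~\ref{lem:score_bridge} through the second-moment condition, substitute the balancing-regular expansion arm by arm, remove the $\rho_i$ term by Cauchy--Schwarz, use the exact OLS identity $\frac1n\sum_{i:A_i=j}(X_i-\Xb_j)\Oh_{ij}=\frac{n_j}{n}\Sh_{X|j}\bh_j$, and finish with the centering identities. Your explicit bound $|\Oh_{ij}|\le 1+2/c$ on the event in \eqref{eq:Ybar_lb}, and your restriction to the event where $\Sh_{X|j}$ is nonsingular, just make precise what the paper invokes as ``boundedness of $\Oh_{ij}$.''
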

 
\begin{proof}
By Lemma~\ref{lem:score_bridge},
\[
\Uh_h-\Uh_\mathrm{unadj}
=\frac1n\sum_{i:A_i=1}\xi_i\Oh_{i1}-\frac1n\sum_{i:A_i=0}\xi_i\Oh_{i0}
+\opn{n^{-1/2}},
\]
For each arm \(j\), substitute the calibration expansion
\(\xi_i=(X_i-\Xb_j)^\top\Sh_{X|j}^{-1}(\Xb-\Xb_j)+\rho_i\). The linear term is an exact identity and the remainder term is $\opn{n^{-1/2}}$ because, by
Cauchy--Schwarz, the $L_2$-regular remainder
$\frac1n\sum_{i:A_i=j}\rho_i^2=\opn{n^{-1}}$ and the boundedness of $\Oh_{ij}$
give
\[
\Big|\frac1n\sum_{i:A_i=j}\rho_i\Oh_{ij}\Big|
\le\Big(\frac1n\sum_{i:A_i=j}\rho_i^2\Big)^{1/2}
   \Big(\frac1n\sum_{i:A_i=j}\Oh_{ij}^2\Big)^{1/2}
=\opn{n^{-1/2}}\cdot\Opn1=\opn{n^{-1/2}}.
\]
Hence, 
\[
\frac1n\sum_{i:A_i=j}\xi_i\Oh_{ij}
=(\Xb-\Xb_j)^\top\Sh_{X|j}^{-1}\,
 \underbrace{\frac1n\sum_{i:A_i=j}(X_i-\Xb_j)\Oh_{ij}}_{=\,(n_j/n)\,\Sh_{X|j}\bh_j}
+\frac1n\sum_{i:A_i=j}\rho_i\Oh_{ij}
=\frac{n_j}{n}(\Xb-\Xb_j)^\top\bh_j+\opn{n^{-1/2}},
\]
using \(\Sh_{X|j}^{-1}\Sh_{X|j}=\mathrm{I}\) in the last step.  Therefore
\[
\Uh_h-\Uh_\mathrm{unadj}
=\frac{n_1}{n}(\Xb-\Xb_1)^\top\bh_1-\frac{n_0}{n}(\Xb-\Xb_0)^\top\bh_0
+\opn{n^{-1/2}}.
\]
Using the centering identities
\(n^{-1}\sum_iA_i(X_i-\Xb)=(n_1/n)(\Xb_1-\Xb)\) and
\(n^{-1}\sum_i(1-A_i)(X_i-\Xb)=(n_0/n)(\Xb_0-\Xb)\), together with the sign
reversal \(\Xb-\Xb_j=-(\Xb_j-\Xb)\),
\[
\frac{n_1}{n}(\Xb-\Xb_1)^\top\bh_1
=-\frac1n\sum_i A_i(X_i-\Xb)^\top\bh_1,
\qquad
-\frac{n_0}{n}(\Xb-\Xb_0)^\top\bh_0
=\frac1n\sum_i (1-A_i)(X_i-\Xb)^\top\bh_0,
\]
so that
\[
\Uh_h-\Uh_\mathrm{unadj}
=-\frac1n\sum_{i=1}^n\bigl\{A_i(X_i-\Xb)^\top\bh_1-(1-A_i)(X_i-\Xb)^\top\bh_0\bigr\}
+\opn{n^{-1/2}}
=\Uh_\mathrm{aug}-\Uh_\mathrm{unadj}+\opn{n^{-1/2}}.
\]
\end{proof}

\section{Equivalence for the marginal Cox Hazard Ratio}
\label{subsec:q3-cox-hr}

\paragraph{Roadmap.}
\begin{description}
  \item[Step 1 (Lemma~\ref{lem:cox_identity}).]
  Establish an exact finite-sample representation of the weighted Cox score in terms of weighted Cox derived outcomes.
  \item[Step 2 (Lemma~\ref{lem:cox_score_bridge}).]
  Linearize the perturbed Cox risk-set ratio uniformly over
  \(\Theta_0\) and obtain the first-order score expansion
  $
  \Uh_h(\vartheta)-\Uh_\mathrm{unadj}(\vartheta)
  =
  \frac1n\sum_{i=1}^n
  \xi_i
  \{A_i\Oh_{i1}(\vartheta)
  -(1-A_i)\Oh_{i0}(\vartheta)\}
  +\opn{n^{-1/2}}.$
  \item[Step 3 (Proposition~\ref{prop:cox_bal_bridge}).]
  Use balancing regularity to reduce the leading weighted-score
  perturbation to Ye's arm-specific linear augmentation. By
  Propositions~\ref{prop:sbw-car}, \ref{prop:eb-car}, and
  \ref{prop:ipw-car}, this reduction applies to SBW, EB, and stabilized
  IPW.

  \item[Step 4 (Lemma~\ref{lem:cox_consistency_freeze} and
  Theorem~\ref{thm:cox_root_equiv}).]
  Establish \(n^{-1/2}\)-consistency, freeze the augmentation at
  \(\hat\theta_\mathrm{unadj}\), and prove root equivalence.
\end{description}

The weight expansions established in the preceding section are outcome-free. Hence no additional weight-specific argument is required for the Cox endpoint: once a weighting scheme is balancing-regular, the generic Cox results below apply. Following \cite{lin2000fitting, binder1992fitting}, We now introduce a weighted Cox score.  At this point, the weights \(h_i\) are
arbitrary nonnegative analysis weights.  Define
\[
D_\vartheta^{(h)}(t)
=
e^\vartheta\Yb_1^{(h)}(t)+\Yb_0^{(h)}(t),
\qquad
R_\vartheta^{(h)}(t)
=
\frac{e^\vartheta\Yb_1^{(h)}(t)}{D_\vartheta^{(h)}(t)}.
\]
The weighted marginal Cox score is
\begin{equation}\label{eq:Uhtheta}
\Uh_h(\vartheta)
=
\frac1n\sum_{i=1}^n
h_i
\int_0^\tau
\{A_i-R_\vartheta^{(h)}(t)\}\,dN_i(t)
=
\int_0^\tau
\frac{\Yb_0^{(h)}(t)}{D_\vartheta^{(h)}(t)}
\,d\Nb_1^{(h)}(t)
-
\int_0^\tau
\frac{e^\vartheta\Yb_1^{(h)}(t)}{D_\vartheta^{(h)}(t)}
\,d\Nb_0^{(h)}(t).
\end{equation}
The derivation process above follows the  algebra we used in Lemma \ref{lem:identity}. The same weights \(h_i\) enter both the event contribution and the risk-set
denominator.  When \(h_i\equiv1\), \(\Uh_h(\vartheta)\) reduces to the
unweighted Cox score \(\Uh_\mathrm{unadj}(\vartheta)\).  When additionally
\(\vartheta=0\), it reduces to the unadjusted log-rank numerator.

\subsection{Step 1 -- Weighted Cox algebraic identity}
\label{subsec:cox-step1}

\begin{lemma}[Cox Score derived-outcome identity] \label{lem:cox_identity}
For any nonnegative weights with \(D_\vartheta^{(h)}(t)>0\) on \([0,\tau]\),
\begin{equation}\label{eq:cox_identity}
\Uh_h(\vartheta)
\equiv
\frac1n\sum_{i=1}^n
h_i
\left\{
A_i\Oh_{i1}^{(h)}(\vartheta)
-
(1-A_i)\Oh_{i0}^{(h)}(\vartheta)
\right\}.
\end{equation}
\end{lemma}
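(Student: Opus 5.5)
The plan follows the two-stage argument of Lemma~\ref{lem:identity}, with the ratio $R^{(h)}$ replaced by $R^{(h)}_\vartheta$ and the pooled weighted risk set $\Yb^{(h)}$ replaced by $D^{(h)}_\vartheta$. Throughout, the weighted Cox derived outcome is taken to be the weighted analogue of $\Oh_{ij}(\vartheta)$ in Section~\ref{sec:cox-extension}:
\[
\Oh_{ij}^{(h)}(\vartheta)=\int_0^\tau\frac{\{e^\vartheta\Yb_1^{(h)}(t)\}^{1-j}\{\Yb_0^{(h)}(t)\}^{j}}{D_\vartheta^{(h)}(t)}\Big\{dN_{ij}(t)-Y_{ij}(t)\frac{e^\vartheta\,d\Nb^{(h)}(t)}{D_\vartheta^{(h)}(t)}\Big\}.
\]
With this convention the treated arm carries the factor $\Yb_0^{(h)}/D_\vartheta^{(h)}$ and the control arm carries $e^\vartheta\Yb_1^{(h)}/D_\vartheta^{(h)}$.

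\textbf{Stage A: left-hand side.} I will record the arm-symmetric form of $\Uh_h(\vartheta)$, which is already displayed in \eqref{eq:Uhtheta}. It follows from three facts:
\begin{itemize}
\item $A_i\,dN_i=A_i\,dN_{i1}$, so the $A_i$-term sums to $\int d\Nb_1^{(h)}$;
\item the ratio $R^{(h)}_\vartheta(t)$ does not depend on $i$, so it can be pulled outside the sum, leaving $\int R_\vartheta^{(h)}\,d\Nb^{(h)}$;
\item $d\Nb^{(h)}=d\Nb_1^{(h)}+d\Nb_0^{(h)}$, together with $1-R_\vartheta^{(h)}=\Yb_0^{(h)}/D_\vartheta^{(h)}$.
\end{itemize}
Together these give
\[
\Uh_h(\vartheta)=\int_0^\tau\frac{\Yb_0^{(h)}}{D_\vartheta^{(h)}}\,d\Nb_1^{(h)}-\int_0^\tau\frac{e^\vartheta\Yb_1^{(h)}}{D_\vartheta^{(h)}}\,d\Nb_0^{(h)}.
\]

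\textbf{Stage B: right-hand side.} I will expand the right-hand side arm by arm.

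For the treated arm, the event pieces $\frac1n\sum_{i:A_i=1}h_i\,dN_{i1}$ aggregate to $d\Nb_1^{(h)}$. The compensator pieces $\frac1n\sum_{i:A_i=1}h_iY_{i1}$ aggregate to $\Yb_1^{(h)}$. This gives
\[
\frac1n\sum_{i=1}^n h_iA_i\Oh_{i1}^{(h)}(\vartheta)=\int\frac{\Yb_0^{(h)}}{D_\vartheta^{(h)}}\,d\Nb_1^{(h)}-\int\frac{e^\vartheta\Yb_0^{(h)}\Yb_1^{(h)}}{(D_\vartheta^{(h)})^2}\,d\Nb^{(h)}.
\]

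For the control arm, the same aggregation gives
\[
\frac1n\sum_{i=1}^n h_i(1-A_i)\Oh_{i0}^{(h)}(\vartheta)=\int\frac{e^\vartheta\Yb_1^{(h)}}{D_\vartheta^{(h)}}\,d\Nb_0^{(h)}-\int\frac{e^\vartheta\Yb_1^{(h)}\Yb_0^{(h)}}{(D_\vartheta^{(h)})^2}\,d\Nb^{(h)}.
\]

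The two quadratic compensator terms are identical, so they cancel in the treated-minus-control difference. What remains is exactly the Stage A expression. The hypothesis $D_\vartheta^{(h)}>0$ guarantees that every integrand is well defined.

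\textbf{Main obstacle.} The only delicate point is bookkeeping the tilt $e^\vartheta$. The cancellation requires both cross terms to carry exactly one factor of $e^\vartheta$. In the treated arm this factor comes from the compensator. In the control arm it comes from the leading factor $e^\vartheta\Yb_1^{(h)}$. I will check this against the definition of $\Oh_{ij}(\vartheta)$ in Section~\ref{sec:cox-extension}, which indeed places $e^\vartheta$ in the compensator for both $j$.

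At $h\equiv1$ the identity should reduce to the unweighted Cox identity, and at $\vartheta=0$ to Lemma~\ref{lem:identity}. I will use both as sanity checks.
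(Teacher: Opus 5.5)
Your two-stage plan is the same as the paper's, and Stage A is fine. Stage B, however, fails for the derived outcome you fixed at the outset. With $e^\vartheta$ in the compensator of both arms, the control-arm aggregate is
\[
\frac1n\sum_{i=1}^n h_i(1-A_i)\Oh_{i0}^{(h)}(\vartheta)=\int_0^\tau\frac{e^\vartheta\Yb_1^{(h)}}{D_\vartheta^{(h)}}\,d\Nb_0^{(h)}-\int_0^\tau\frac{e^{2\vartheta}\Yb_1^{(h)}\Yb_0^{(h)}}{(D_\vartheta^{(h)})^2}\,d\Nb^{(h)}.
\]
One factor $e^\vartheta$ comes from the leading ratio $e^\vartheta\Yb_1^{(h)}/D_\vartheta^{(h)}$ and a second from the compensator. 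Your ``main obstacle'' paragraph asserts two things that cannot both hold: that the definition puts $e^\vartheta$ in the $j=0$ compensator, and that the control cross term carries exactly one $e^\vartheta$. Under your definition, the right-hand side minus $\Uh_h(\vartheta)$ equals $(e^{2\vartheta}-e^{\vartheta})\int_0^\tau \Yb_0^{(h)}\Yb_1^{(h)}(D_\vartheta^{(h)})^{-2}\,d\Nb^{(h)}$. This is nonzero as soon as $\vartheta\neq0$ and some positively weighted event occurs while both weighted risk sets are positive. Your $\vartheta=0$ sanity check cannot detect it.

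The fix is in the definition, not the argument. The compensator in arm $j$ must carry $e^{j\vartheta}$, i.e.\ it is $Y_{ij}(t)\,e^{j\vartheta}\,d\Nb^{(h)}(t)/D_\vartheta^{(h)}(t)$: the Breslow-type increment $d\Nb^{(h)}/D_\vartheta^{(h)}$ scaled by that arm's relative hazard. Your $\Oh_{i1}^{(h)}(\vartheta)$ is correct as written, and the control outcome should be
\[
\Oh_{i0}^{(h)}(\vartheta)=\int_0^\tau\frac{e^\vartheta\Yb_1^{(h)}(t)}{D_\vartheta^{(h)}(t)}\Big\{dN_{i0}(t)-Y_{i0}(t)\frac{d\Nb^{(h)}(t)}{D_\vartheta^{(h)}(t)}\Big\}.
\]
This is the convention the paper's proof actually uses: its control-arm display \eqref{eq:cox_id_control} has a single $e^\vartheta$. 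It is also the convention used downstream. The choice $g_{0,\vartheta}=e^\vartheta\Yb_1/D_\vartheta^2$ in Lemma~\ref{lem:cox_Otheta_bounded}, and the $A_i=0$ bracket in Lemma~\ref{lem:cox_score_bridge}, both correspond to an untilted control compensator. The main-text display of $\hat O_{ij}(\vartheta)$, which places $e^\vartheta$ in both compensators, is inconsistent with these and should not be taken literally. With this correction, your Stage B aggregation and the cancellation of the two cross terms go through exactly as in the paper.
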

 
\begin{proof}
The argument follows two-stage cancellation as in
Lemma~\ref{lem:identity}.

\medskip\noindent\textbf{Stage A.}
From \(\Uh_h(\vartheta)=\frac1n\sum_ih_i\int_0^\tau\{A_i-R_\vartheta^{(h)}\}dN_i\),
the same distributivity step as in Lemma~\ref{lem:identity} gives
\(\Uh_h(\vartheta)=\int_0^\tau d\Nb_1^{(h)}-\int_0^\tau R_\vartheta^{(h)}\,d\Nb^{(h)}\).
Splitting \(d\Nb^{(h)}=d\Nb_1^{(h)}+d\Nb_0^{(h)}\) and using
\(1-R_\vartheta^{(h)}=\Yb_0^{(h)}/D_\vartheta^{(h)}\),
\(R_\vartheta^{(h)}=e^\vartheta\Yb_1^{(h)}/D_\vartheta^{(h)}\), we have the last term of \ref{eq:Uhtheta}.

\medskip\noindent\textbf{Stage B.}
The ratios in \(\Oh_{i1}^{(h)}(\vartheta)\) do not depend on \(i\); pulling them
out and using \(\frac1n\sum_{i:A_i=1}h_i\,dN_{i1}=d\Nb_1^{(h)}\),
\(\frac1n\sum_{i:A_i=1}h_iY_{i1}=\Yb_1^{(h)}\),
\begin{equation}\label{eq:cox_id_treated}
\frac1n\sum_i h_iA_i\Oh_{i1}^{(h)}(\vartheta)
=\int_0^\tau\frac{\Yb_0^{(h)}}{D_\vartheta^{(h)}}\,d\Nb_1^{(h)}
-\int_0^\tau\frac{e^\vartheta\Yb_1^{(h)}\Yb_0^{(h)}}{\{D_\vartheta^{(h)}\}^2}\,d\Nb^{(h)}.
\end{equation}
By the same pull-out argument for the control arm,
\begin{equation}\label{eq:cox_id_control}
\frac1n\sum_i h_i(1-A_i)\Oh_{i0}^{(h)}(\vartheta)
=\int_0^\tau\frac{e^\vartheta\Yb_1^{(h)}}{D_\vartheta^{(h)}}\,d\Nb_0^{(h)}
-\int_0^\tau\frac{e^\vartheta\Yb_1^{(h)}\Yb_0^{(h)}}{\{D_\vartheta^{(h)}\}^2}\,d\Nb^{(h)}.
\end{equation}
Subtracting \eqref{eq:cox_id_control} from \eqref{eq:cox_id_treated}, the two
double integrals cancel exactly, leaving the same as Stage A results.
\end{proof}
 
\subsection{Step 2 -- Uniform linearization and score-perturbation representation}
\label{subsec:cox-step2}
\begin{lemma}[Cox derived-outcome representation of the score perturbation]
\label{lem:cox_score_bridge}
Let \(\Theta_0=[\theta_0-a,\theta_0+a]\) be a fixed compact
neighborhood of \(\theta_0\), and define
$
\bar c=e^{\theta_0+a},
\kappa=\min(e^{\theta_0-a},1)>0.$ Assume for \(j=0,1\),
\begin{equation}\label{eq:cox_arm_l2}
\frac1n\sum_{i:A_i=j}\xi_i^2
=
\Opn{n^{-1}}.
\end{equation}
Then, uniformly over \(\vartheta\in\Theta_0\),
\begin{equation}\label{eq:cox_score_bridge}
\Uh_h(\vartheta)-\Uh_\mathrm{unadj}(\vartheta)
=
\frac1n\sum_{i=1}^n
\xi_i
\bigl\{
A_i\Oh_{i1}(\vartheta)
-(1-A_i)\Oh_{i0}(\vartheta)
\bigr\}
+\opn{n^{-1/2}}.
\end{equation}
\end{lemma}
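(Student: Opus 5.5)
The plan is to repeat the argument of Lemma~\ref{lem:ratio} and Lemma~\ref{lem:score_bridge}, now with the tilted risk-set ratio $R_\vartheta^{(h)}$ in place of $R^{(h)}$. Every bound will be made uniform in $\vartheta\in\Theta_0$ by using the constants $\bar c$ and $\kappa$. Write $R_\vartheta(t)=e^\vartheta\Yb_1(t)/D_\vartheta(t)$ with $D_\vartheta=e^\vartheta\Yb_1+\Yb_0$. Reuse the perturbations $D_1(t)=\Yb_1^{(h)}-\Yb_1$ and $D_0(t)=\Yb_0^{(h)}-\Yb_0$; Part~A of Lemma~\ref{lem:ratio} already gives $\sup_t|D_j|=\Opn{n^{-1/2}}$ under \eqref{eq:cox_arm_l2}. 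Set $D^{\vartheta}(t)=e^\vartheta D_1(t)+D_0(t)=D_\vartheta^{(h)}-D_\vartheta$. Then $\sup_{\vartheta\in\Theta_0,t\le\tau}|D^\vartheta(t)|\le\bar c\sup|D_1|+\sup|D_0|=\Opn{n^{-1/2}}$.

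\textbf{Step 1 (uniform ratio linearization).} The same algebra as Part~B of Lemma~\ref{lem:ratio} gives the exact identity
\[
R_\vartheta^{(h)}-R_\vartheta=\frac{e^\vartheta D_1-R_\vartheta D^\vartheta}{D_\vartheta}-\frac{(e^\vartheta D_1-R_\vartheta D^\vartheta)D^\vartheta}{D_\vartheta D_\vartheta^{(h)}}.
\]
For the denominators, use $D_\vartheta(t)\ge\kappa\,\Yb(t)$. Combined with \eqref{eq:Ybar_lb}, this gives $\inf_{\vartheta,t}D_\vartheta\ge\kappa c/2$ with probability tending to one. Since $D^{(h)}_\vartheta\ge D_\vartheta-\sup|D^\vartheta|$, we also get $\inf D^{(h)}_\vartheta\ge\kappa c/4$ with probability tending to one. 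The second term is therefore $\Opn{n^{-1}}$ uniformly in $(\vartheta,t)$. The leading term equals $D_\vartheta(t)^{-1}\,n^{-1}\sum_i\xi_i\{e^\vartheta A_i-R_\vartheta(t)\}Y_i(t)$. Because $(1-A_i)$ appears in $D_0$, I will rewrite this per arm as $\xi_i Y_i(t)\{A_ie^\vartheta(1-R_\vartheta)-(1-A_i)R_\vartheta\}/D_\vartheta$, which is convenient for matching the derived outcomes. As a byproduct, $\sup_{\vartheta,t}|R^{(h)}_\vartheta-R_\vartheta|=\Opn{n^{-1/2}}$.

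\textbf{Step 2 (score perturbation).} Decompose exactly as in \eqref{eq:integrand_decomp}:
\[
\Uh_h(\vartheta)-\Uh_\mathrm{unadj}(\vartheta)=\frac1n\sum_i\xi_i\int\{A_i-R_\vartheta\}dN_i-\frac1n\sum_ih_i\int\{R_\vartheta^{(h)}-R_\vartheta\}dN_i.
\]
The $\xi_i$-part of the second sum is bounded by $\sup|R^{(h)}_\vartheta-R_\vartheta|\cdot(n^{-1}\sum\xi_i^2)^{1/2}=\Opn{n^{-1}}$, uniformly in $\vartheta$. The remaining part, $\int(R^{(h)}_\vartheta-R_\vartheta)\,d\Nb$, is replaced by its Step~1 linear term; the error is at most $\sup|r_n|\,\Nb(\tau)$. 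Collecting terms leaves $n^{-1}\sum_i\xi_i$ times the bracket
\[
\int\{A_i-R_\vartheta\}dN_i-\int\frac{Y_i\{A_ie^\vartheta(1-R_\vartheta)-(1-A_i)R_\vartheta\}}{D_\vartheta}\,d\Nb.
\]
Now check the two cases. For $A_i=1$ we have $1-R_\vartheta=\Yb_0/D_\vartheta$, so the bracket equals $\int(\Yb_0/D_\vartheta)\{dN_{i1}-Y_{i1}e^\vartheta d\Nb/D_\vartheta\}=\Oh_{i1}(\vartheta)$. For $A_i=0$ the bracket equals $-\int(e^\vartheta\Yb_1/D_\vartheta)\{dN_{i0}-Y_{i0}\,d\Nb/D_\vartheta\}$.

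\textbf{Main obstacle.} The $A_i=0$ case exposes a mismatch with the stated definition of $\Oh_{i0}(\vartheta)$, which carries the factor $e^\vartheta$ inside the compensator $Y_{ij}e^\vartheta d\Nb/D_\vartheta$ for both arms. The linearization produces no $e^\vartheta$ there for the control arm. Before committing, I would recheck the Step~1 linear term carefully, since it is the only source of the compensator factor. If the discrepancy persists, I would read the definition with the compensator $Y_{ij}\,e^{j\vartheta}d\Nb/D_\vartheta$. This reading agrees at $\vartheta=0$ and is what the exact identity of Lemma~\ref{lem:cox_identity} forces, since the cancellation of double integrals there requires exactly this form. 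I would state this reading explicitly. All bounds above use only $\bar c$, $\kappa$, $\Nb(\tau)\le1$ and \eqref{eq:cox_arm_l2}, so the $\opn{n^{-1/2}}$ remainder is uniform over $\Theta_0$.
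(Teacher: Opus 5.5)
Your proposal is correct and follows the paper's proof essentially step for step: the uniform ratio linearization controlled by $\bar c$ and $\kappa$, the exact integrand decomposition, Cauchy--Schwarz for the $\xi$-weighted part, and arm-by-arm identification of the bracket. There is one small slip: the leading-term factor is $e^{\vartheta A_i}\{A_i-R_\vartheta(t)\}$ rather than $\{e^\vartheta A_i-R_\vartheta(t)\}$, but the per-arm form you actually carry forward is the right one.

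The mismatch you flag is genuine, and your resolution is correct. The control-arm compensator is $Y_{i0}\,d\Nb/D_\vartheta$, so the displayed definition of $\Oh_{i0}(\vartheta)$ must be read with $e^{j\vartheta}$ in the compensator. This is exactly what the paper implicitly uses elsewhere: in Lemma~\ref{lem:cox_identity}, through the $e^\vartheta\Yb_1^{(h)}\Yb_0^{(h)}/\{D_\vartheta^{(h)}\}^2$ term in the control display, and in Lemma~\ref{lem:cox_Otheta_bounded}, through $g_{0,\vartheta}=e^\vartheta\Yb_1/D_\vartheta^2$. The paper's proof of this lemma, however, simply asserts that the bracket equals $-\Oh_{i0}(\vartheta)$. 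Under the literal definition the discrepancy is generally not $\opn{n^{-1/2}}$ for $\vartheta\neq0$, so your reading is necessary rather than cosmetic.
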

\begin{proof}
Part~I linearizes the weighted risk-set ratio
\(R_\vartheta^{(h)}\) around \(R_\vartheta\), uniformly over \(\vartheta\in\Theta_0\)
and \(t\in[0,\tau]\); this is the Cox analogue of Lemma~\ref{lem:ratio}, the only
new ingredient being that the factor \(e^\vartheta\) must be controlled uniformly
on the compact set \(\Theta_0\).  Part~II inserts this expansion into the score
difference and identifies the result with the Cox derived outcomes, as in
Lemma~\ref{lem:score_bridge}. Throughout write
\[
\Delta_j(t)=\Yb_j^{(h)}(t)-\Yb_j(t)=\frac1n\sum_{i:A_i=j}\xi_iY_i(t),
\qquad j=0,1,
\]
and \(\Delta_D(t,\vartheta)=D_\vartheta^{(h)}(t)-D_\vartheta(t)
=e^\vartheta\Delta_1(t)+\Delta_0(t)\).
 
\medskip
\noindent\textbf{Part I: ratio linearization.}
We show that, uniformly over \(\vartheta\in\Theta_0\) and \(t\in[0,\tau]\),
\begin{equation}\label{eq:cox_ratio_expansion}
R_\vartheta^{(h)}(t)-R_\vartheta(t)
=
\frac1{D_\vartheta(t)}
\frac1n\sum_{i=1}^n
\xi_i
\Bigl[
e^\vartheta A_i\{1-R_\vartheta(t)\}
-
(1-A_i)R_\vartheta(t)
\Bigr]Y_i(t)
+
r_n(t,\vartheta),
\end{equation}
with \(\sup_{\vartheta\in\Theta_0, t\le\tau}|r_n(t,\vartheta)|=\opn{n^{-1/2}}\);
in particular
\(\sup_{\vartheta\in\Theta_0, t\le\tau}|R_\vartheta^{(h)}(t)-R_\vartheta(t)|
=\Opn{n^{-1/2}}\).
 
\emph{(I.a) Perturbation bounds.}
By Cauchy--Schwarz and \(0\le Y_i(t)\le1\), for each
\(j=0,1\),
\[
\begin{aligned}
\sup_{t\le\tau}|\Delta_j(t)|
=
\sup_{t\le\tau}
\Big|
\frac1n\sum_{i:A_i=j}\xi_iY_i(t)
\Big|
\le
\Big\{
\frac1n\sum_{i:A_i=j}\xi_i^2
\Big\}^{1/2}
\sup_{t\le\tau}
\Big\{
\frac1n\sum_{i:A_i=j}Y_i(t)^2
\Big\}^{1/2}\le
\Big\{
\frac1n\sum_{i:A_i=j}\xi_i^2
\Big\}^{1/2}
=
\Opn{n^{-1/2}}.
\end{aligned}
\]
Since \(e^\vartheta\le\bar c\) uniformly over
\(\vartheta\in\Theta_0\),
\begin{equation}\label{eq:cox_DeltaD_bound}
\sup_{\vartheta\in\Theta_0, t\le\tau}
|\Delta_D(t,\vartheta)|
\le
\bar c\times \sup_{t\le\tau}|\Delta_1(t)|
+
\sup_{t\le\tau}|\Delta_0(t)|
=
\Opn{n^{-1/2}}.
\end{equation}
 
\emph{(I.b) Exact algebraic decomposition.}
Using \(R_\vartheta=e^\vartheta\Yb_1/D_\vartheta\),
\[
R_\vartheta^{(h)}(t)-R_\vartheta(t)
=\frac{e^\vartheta\Yb_1^{(h)}(t)D_\vartheta(t)-e^\vartheta\Yb_1(t) D_\vartheta^{(h)}(t)}
       {D_\vartheta^{(h)}(t)D_\vartheta(t)}
=\frac{e^\vartheta\Delta_1(t)-R_\vartheta(t)\Delta_D(t,\vartheta)}
       {D_\vartheta^{(h)}(t)} .
\]
Writing \(1/D_\vartheta^{(h)}=1/D_\vartheta-\Delta_D/(D_\vartheta D_\vartheta^{(h)})\),
\begin{equation}\label{eq:cox_ratio_decomp}
R_\vartheta^{(h)}(t)-R_\vartheta(t)
= \frac{e^\vartheta\Delta_1(t)-R_\vartheta(t)\Delta_D(t,\vartheta)}{D_\vartheta(t)}
-\frac{\{e^\vartheta\Delta_1(t)-R_\vartheta(t)\Delta_D(t,\vartheta)\}\Delta_D(t,\vartheta)}
   {D_\vartheta(t)D_\vartheta^{(h)}(t)} :=\mathcal A_\vartheta(t) - \mathcal B_\vartheta(t) .
\end{equation}
Since \(\Delta_1=\tfrac1n\sum_i\xi_iA_iY_i\) and
\(\Delta_0=\tfrac1n\sum_i\xi_i(1-A_i)Y_i\),
\[
e^\vartheta\Delta_1-R_\vartheta\Delta_D
=e^\vartheta(1-R_\vartheta)\Delta_1-R_\vartheta\Delta_0
=\frac1n\sum_{i=1}^n\xi_i\bigl[e^\vartheta A_i(1-R_\vartheta)-(1-A_i)R_\vartheta\bigr]Y_i,
\]
so \(\mathcal A_\vartheta(t)\) is exactly the linear term in
\eqref{eq:cox_ratio_expansion} and \(r_n(t,\vartheta)=-\mathcal B_\vartheta(t)\).
 
\emph{(I.c) Uniform bound for \(\mathcal B_\vartheta\).}
For every \(\vartheta\in\Theta_0\),
\[
D_\vartheta(t)=e^\vartheta\Yb_1(t)+\Yb_0(t)\ge\kappa\,\Yb(t),
\qquad
D_\vartheta^{(h)}(t)\ge\kappa\,\Yb^{(h)}(t),
\qquad \kappa=\min(e^{\theta_0-a},1)>0.
\]
The positivity established in the proof of Lemma~\ref{lem:ratio} gives
\(\inf_{t\le\tau}\Yb(t)\ge c/2\) and \(\inf_{t\le\tau}\Yb^{(h)}(t)\ge c/4\) with
probability tending to one; hence
\(\inf_{\vartheta,t}D_\vartheta\ge\kappa c/2\),
\(\inf_{\vartheta,t}D_\vartheta^{(h)}\ge\kappa c/4\), so that
\begin{equation}\label{eq:cox_denomOp1}
\sup_{\vartheta\in\Theta_0, t\le\tau}
\frac1{D_\vartheta(t)D_\vartheta^{(h)}(t)}
\le\frac{8}{\kappa^2 c^2}=\Opn{1}.
\end{equation}
Since \(0\le R_\vartheta\le1\), by \eqref{eq:cox_DeltaD_bound} and (I.a),
\[
\sup_{\vartheta,t} \bigl |e^\vartheta\Delta_1- R_\vartheta \Delta_D\bigr|
\le\bar c\times \sup_{t\le\tau}|\Delta_1|+\sup_{\vartheta,t}|\Delta_D|=\Opn{n^{-1/2}},
\]
and multiplying by \(\sup_{\vartheta,t}|\Delta_D|=\Opn{n^{-1/2}}\) gives
\(\sup_{\vartheta,t}|\{e^\vartheta\Delta_1-R_\vartheta\Delta_D\}\Delta_D|
=\Opn{n^{-1}}\).  Combining with \eqref{eq:cox_denomOp1},
we have $\sup_{\vartheta\in\Theta_0, t\le\tau}|\mathcal B_\vartheta(t)|
\le\Opn{1}\cdot\Opn{n^{-1}}=\opn{n^{-1/2}},
$
which is the remainder bound; \(\mathcal A_\vartheta\) has
numerator \(\Opn{n^{-1/2}}\) and denominator bounded below, giving
\(\sup_{\vartheta,t}|R_\vartheta^{(h)}-R_\vartheta|=\Opn{n^{-1/2}}\). 

\medskip
\noindent\textbf{Part II: score difference and derived-outcome identification.}
From \eqref{eq:Uhtheta} and definition of $\hat{U}_L(\vartheta)$,
\[
\Uh_h(\vartheta)-\Uh_\mathrm{unadj}(\vartheta)
=\frac1n\sum_{i=1}^n
\int_0^\tau
\Bigl[h_i\{A_i-R_\vartheta^{(h)}(t)\}-\{A_i-R_\vartheta(t)\}\Bigr]\,dN_i(t).
\]
The integrand decomposes exactly as
\(h_i\{A_i-R_\vartheta^{(h)}\}-\{A_i-R_\vartheta\}
=\xi_i\{A_i-R_\vartheta\}-h_i\{R_\vartheta^{(h)}-R_\vartheta\}\), so
\[
\Uh_h(\vartheta)-\Uh_\mathrm{unadj}(\vartheta)
=\frac1n\sum_i\xi_i\int_0^\tau\{A_i-R_\vartheta\}\,dN_i
-\frac1n\sum_i h_i\int_0^\tau\{R_\vartheta^{(h)}-R_\vartheta\}\,dN_i
   := \mathcal A_n - \mathcal B_n^{(h)}  .
\]
Split \(\mathcal B_n^{(h)}\) into an unweighted part and a \(\xi\)-weighted part.
By Part~I, \(\sup_{\vartheta,t}|R_\vartheta^{(h)}-R_\vartheta|=\Opn{n^{-1/2}}\); and
by \eqref{eq:cox_arm_l2} with \(N_i(\tau)\le1\) and Cauchy--Schwarz,
\(\frac1n\sum_i|\xi_i|N_i(\tau)\le(\frac1n\sum_i\xi_i^2)^{1/2}=\Opn{n^{-1/2}}\).
Hence the \(\xi\)-weighted part is, uniformly over \(\vartheta\in\Theta_0\),
\[
\Bigl|\frac1n\sum_i\xi_i\int_0^\tau\{R_\vartheta^{(h)}-R_\vartheta\}\,dN_i\Bigr|
\le\sup_{\vartheta,t}|R_\vartheta^{(h)}-R_\vartheta|\cdot\frac1n\sum_i|\xi_i|N_i(\tau)
=\Opn{n^{-1}}=\opn{n^{-1/2}},
\]
so that $
\Uh_h(\vartheta)-\Uh_\mathrm{unadj}(\vartheta)
=\mathcal A_n-\int_0^\tau\{R_\vartheta^{(h)}(t)-R_\vartheta(t)\}\,d\Nb(t)+\opn{n^{-1/2}} .$
Insert \eqref{eq:cox_ratio_expansion}.  Since \(\Nb(\tau)\le1\) and
\(\sup_{\vartheta,t}|r_n|=\opn{n^{-1/2}}\), the integrated remainder is
\(\opn{n^{-1/2}}\) uniformly in \(\vartheta\), so
\[
\int_0^\tau\{R_\vartheta^{(h)}-R_\vartheta\}\,d\Nb
=\frac1n\sum_{i=1}^n\xi_i
\int_0^\tau
\frac{\bigl[e^\vartheta A_i\{1-R_\vartheta\}-(1-A_i)R_\vartheta\bigr]Y_i(t)}
     {D_\vartheta(t)}\,d\Nb(t)
+\opn{n^{-1/2}} .
\]
Combining the last two displays,
\[
\Uh_h(\vartheta)-\Uh_\mathrm{unadj}(\vartheta)
=\frac1n\sum_{i=1}^n\xi_i
\Biggl[
\int_0^\tau\{A_i-R_\vartheta\}\,dN_i
-\int_0^\tau
\frac{\bigl[e^\vartheta A_i\{1-R_\vartheta\}-(1-A_i)R_\vartheta\bigr]Y_i}
     {D_\vartheta}\,d\Nb
\Biggr]+\opn{n^{-1/2}} .
\]
Next we identify the bracket, if \(A_i=1\), then \(N_i=N_{i1}\),
\(Y_i=Y_{i1}\), and \(1-R_\vartheta=\Yb_0/D_\vartheta\), so the bracket equals
$
\int_0^\tau\frac{\Yb_0(t)}{D_\vartheta(t)}
\Bigl\{dN_{i1}(t)-Y_{i1}(t)\frac{e^\vartheta\,d\Nb(t)}{D_\vartheta(t)}\Bigr\}
=\Oh_{i1}(\vartheta).$
If \(A_i=0\), then \(N_i=N_{i0}\), \(Y_i=Y_{i0}\), and
\(R_\vartheta=e^\vartheta\Yb_1/D_\vartheta\), so the bracket equals
\(-\Oh_{i0}(\vartheta)\).
\end{proof}

The balancing-regular expansion is outcome-free. To apply it to the
Cox score, we verify that the Cox derived outcomes and their associated
arm-specific regression slopes are uniformly well behaved over a
neighborhood of the target parameter.

\begin{lemma}[Uniform regularity of the Cox derived outcomes]
\label{lem:cox_Otheta_bounded}
Assume the at-risk positivity condition and the covariance conclusion
of Lemma~\ref{lem:car-cov-consistency}. Let
\(\Theta_0=[\theta_0-a,\theta_0+a]\) be compact. Then there exist finite
deterministic constants \(C_O\) and \(C_{\dot O}\) such that, with
probability tending to one,
\[
\sup_{\vartheta\in\Theta_0}\max_{1\le i\le n}\max_{j=0,1}
|\Oh_{ij}(\vartheta)|\le C_O,
\qquad
\sup_{\vartheta\in\Theta_0}\max_{1\le i\le n}\max_{j=0,1}
|\partial_\vartheta\Oh_{ij}(\vartheta)|\le C_{\dot O}.
\]
Consequently, for \(j=0,1\),
\[
\sup_{\vartheta\in\Theta_0}
\frac1n\sum_{i:A_i=j}\Oh_{ij}(\vartheta)^2=\Opn{1},
\qquad
\sup_{\vartheta\in\Theta_0}\|\bh_j(\vartheta)\|=\Opn{1},
\qquad
\sup_{\vartheta\in\Theta_0}
\|\partial_\vartheta\bh_j(\vartheta)\|=\Opn{1}.
\]
\end{lemma}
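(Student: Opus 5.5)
The plan is to bound the Cox derived outcomes deterministically on the high-probability event where the risk sets stay bounded away from zero, and then read off the slope bounds from the covariance consistency in Lemma~\ref{lem:car-cov-consistency}. Work on the event $E_n=\{\inf_{t\le\tau}\Yb(t)\ge c/2\}$, which has probability tending to one by \eqref{eq:Ybar_lb} in the proof of Lemma~\ref{lem:ratio}. On $E_n$, for every $\vartheta\in\Theta_0$ we have $D_\vartheta(t)=e^\vartheta\Yb_1(t)+\Yb_0(t)\ge\kappa\Yb(t)\ge\kappa c/2$, exactly as in Part I.c of Lemma~\ref{lem:cox_score_bridge}. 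The prefactors $\{e^\vartheta\Yb_1\}^{1-j}\{\Yb_0\}^j/D_\vartheta$ and $e^\vartheta/D_\vartheta$ are therefore bounded by $\bar c/(\kappa c/2)$ and $2\bar c/(\kappa c)$, uniformly in $t$ and $\vartheta$. Indeed, the first prefactor is already at most $1$, since it is a ratio of one term of $D_\vartheta$ to $D_\vartheta$.

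The first bound then follows by splitting $\Oh_{ij}(\vartheta)$ into its $dN_{ij}$ part and its compensator part. The $dN_{ij}$ part has absolute value at most $N_{ij}(\tau)\le1$. For the compensator part, use $Y_{ij}\le1$ and $\Nb(\tau)\le1$ to get the bound $\int_0^\tau e^\vartheta\,d\Nb/D_\vartheta\le 2\bar c/(\kappa c)$. Hence $C_O=1+2\bar c/(\kappa c)$ works.

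For $\partial_\vartheta\Oh_{ij}$, differentiate under the integral. The integrands are smooth in $\vartheta$, and only the prefactors depend on $\vartheta$. Each prefactor is a rational function of $e^\vartheta$ whose derivative is a combination of terms such as $e^\vartheta\Yb_1\Yb_0/D_\vartheta^2$ and $e^\vartheta/D_\vartheta-e^{2\vartheta}\Yb_1/D_\vartheta^2$. On $E_n$ each of these is bounded by powers of $\bar c/(\kappa c/2)$. Integrating against $dN_{ij}$ or $Y_{ij}\,d\Nb$, both of total mass at most $1$, gives $C_{\dot O}$. The constants are deterministic, since they depend only on $c$, $\theta_0$ and $a$.

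The consequences follow from these bounds. Uniform boundedness gives $\sup_\vartheta n^{-1}\sum_{i:A_i=j}\Oh_{ij}(\vartheta)^2\le C_O^2$ with probability tending to one. For the slope, write $\bh_j(\vartheta)=\Sh_{X|j}^{-1}\,n_j^{-1}\sum_{i:A_i=j}(X_i-\Xb_j)\Oh_{ij}(\vartheta)$. By Cauchy--Schwarz the covariance term is at most $\{\operatorname{tr}\Sh_{X|j}\}^{1/2}C_O=\Opn{1}$, uniformly in $\vartheta$. Lemma~\ref{lem:car-cov-consistency} together with $\Sigma_X\succ0$ gives $\|\Sh_{X|j}^{-1}\|=\Opn{1}$, as in Step 3 of Proposition~\ref{prop:sbw-car}. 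Since $\Sh_{X|j}$ does not depend on $\vartheta$, we also have $\partial_\vartheta\bh_j=\Sh_{X|j}^{-1}n_j^{-1}\sum(X_i-\Xb_j)\partial_\vartheta\Oh_{ij}$, which is bounded the same way using $C_{\dot O}$.

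The only delicate step is justifying that the derivative can be taken inside the Stieltjes integrals uniformly over $\Theta_0$. I would handle it by noting that for fixed data the integrals are finite sums over observed event times. Each summand is then $C^1$ in $\vartheta$, with the derivative bounds above holding on $E_n$, so no dominated-convergence argument is needed.
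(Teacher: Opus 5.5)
Your proposal is correct and follows essentially the same route as the paper. Both arguments restrict to the event $\{\inf_{t\le\tau}\Yb(t)\ge c/2\}$ and use $D_\vartheta\ge\kappa\Yb\ge\kappa c/2$ to bound the prefactors and their $\vartheta$-derivatives by deterministic constants, integrated against measures of total mass at most one. The slope bounds then come from Cauchy--Schwarz with $\|\Sh_{X|j}^{-1}\|=\Opn{1}$ and $\operatorname{tr}(\Sh_{X|j})=\Opn{1}$. The only difference is cosmetic: the paper writes the compensator integrand as $g_{j,\vartheta}$ and differentiates via $\partial_\vartheta p_\vartheta=p_\vartheta q_\vartheta$ and $\partial_\vartheta g_{j,\vartheta}=g_{j,\vartheta}(1-2p_\vartheta)$, while you bound the factors separately and also justify differentiating under the integral by reducing to finite sums over event times.
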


\begin{proof}
Let
\(\kappa=\min\{e^{\theta_0-a},1\}>0\) and
\(\mathcal E_n=\{\inf_{t\le\tau}\Yb(t)\ge c/2\}\).
By the positivity argument in Lemma~\ref{lem:ratio},
\(\Pr(\mathcal E_n)\to1\). On \(\mathcal E_n\), uniformly over
\((\vartheta,t)\in\Theta_0\times[0,\tau]\),
\[
D_\vartheta(t)
=e^\vartheta\Yb_1(t)+\Yb_0(t)
\ge\kappa\Yb(t)\ge\kappa c/2=:d_*>0.
\]

Write
\[
p_\vartheta=\frac{e^\vartheta\Yb_1}{D_\vartheta},
\qquad
q_\vartheta=\frac{\Yb_0}{D_\vartheta}=1-p_\vartheta,
\qquad
g_{1,\vartheta}=\frac{e^\vartheta\Yb_0}{D_\vartheta^2},
\qquad
g_{0,\vartheta}=\frac{e^\vartheta\Yb_1}{D_\vartheta^2}.
\]
On \(\mathcal E_n\), \(0\le p_\vartheta,q_\vartheta\le1\) and
\(0\le g_{j,\vartheta}\le G_\Theta\), where
\(G_\Theta=e^{\theta_0+a}/d_*^2<\infty\). Moreover,
\[
\Oh_{i1}(\vartheta)
=
\int_0^\tau q_\vartheta(t)\,dN_{i1}(t)
-
\int_0^\tau Y_{i1}(t)g_{1,\vartheta}(t)\,d\Nb(t), \quad
\Oh_{i0}(\vartheta)
=
\int_0^\tau p_\vartheta(t)\,dN_{i0}(t)
-
\int_0^\tau Y_{i0}(t)g_{0,\vartheta}(t)\,d\Nb(t).
\]
Since \(\int_0^\tau dN_{ij}\le1\), \(\Nb(\tau)\le1\), and
\(0\le Y_{ij}\le1\), it follows that
\[
\sup_{\vartheta\in\Theta_0}\max_{i,j}
|\Oh_{ij}(\vartheta)|
\le1+G_\Theta.
\]

For the derivatives,
\[
\partial_\vartheta p_\vartheta=p_\vartheta q_\vartheta,
\qquad
\partial_\vartheta q_\vartheta=-p_\vartheta q_\vartheta,
\qquad
\partial_\vartheta g_{j,\vartheta}
=g_{j,\vartheta}(1-2p_\vartheta).
\]
Thus
\(|\partial_\vartheta p_\vartheta|,
 |\partial_\vartheta q_\vartheta|\le1/4\) and
\(|\partial_\vartheta g_{j,\vartheta}|\le G_\Theta\). Differentiating
the two displays above therefore gives
\[
\sup_{\vartheta\in\Theta_0}\max_{i,j}
|\partial_\vartheta\Oh_{ij}(\vartheta)|
\le\frac14+G_\Theta.
\]
Hence one may take \(C_O=1+G_\Theta\) and
\(C_{\dot O}=1/4+G_\Theta\). The first consequence follows immediately from
\[
\sup_{\vartheta\in\Theta_0}
\frac1n\sum_{i:A_i=j}\Oh_{ij}(\vartheta)^2
\le\frac{n_j}{n}C_O^2=\Opn{1}.
\]
Furthermore, by Cauchy--Schwarz,
\[
\begin{aligned}
\sup_{\vartheta\in\Theta_0}\|\bh_j(\vartheta)\|
\le
\|\Sh_{X|j}^{-1}\|
\big\{
\frac1{n_j}\sum_{i:A_i=j}\|X_i-\Xb_j\|^2
\big\}^{1/2}C_O  
=
\|\Sh_{X|j}^{-1}\|
\{\operatorname{tr}(\Sh_{X|j})\}^{1/2}C_O
=\Opn{1},
\end{aligned}
\]
where Lemma~\ref{lem:car-cov-consistency} gives
\(\|\Sh_{X|j}^{-1}\|=\Opn{1}\) and
\(\operatorname{tr}(\Sh_{X|j})=\Opn{1}\). Finally, because \(\Sh_{X|j}\), \(X_i\), and \(\Xb_j\) do not depend on
\(\vartheta\),
\[
\partial_\vartheta\bh_j(\vartheta)
=
\Sh_{X|j}^{-1}
\frac1{n_j}\sum_{i:A_i=j}
(X_i-\Xb_j)\partial_\vartheta\Oh_{ij}(\vartheta).
\]
The same Cauchy--Schwarz bound, with \(C_{\dot O}\) in place of \(C_O\),
yields
\(\sup_{\vartheta\in\Theta_0}
\|\partial_\vartheta\bh_j(\vartheta)\|=\Opn{1}\).
\end{proof}

\subsection{Step 3 -- Reduction to Covariate-adjusted Cox score}
\label{subsec:cox-step4}

\begin{proposition}[Balancing-regular reduction of the Cox score perturbation]
\label{prop:cox_bal_bridge}
Assume the conditions of Lemma~\ref{lem:cox_score_bridge} and suppose
that the weights are balancing-regular. Then, uniformly over
$\vartheta\in\Theta_0$,
\begin{equation}\label{eq:cox_cal_reduction}
\Uh_h(\vartheta)
=\Uh_\mathrm{unadj}(\vartheta)
-\frac1n\sum_{i=1}^n
\Bigl\{
A_i(X_i-\Xb)^\top\bh_1(\vartheta)
-(1-A_i)(X_i-\Xb)^\top\bh_0(\vartheta)
\Bigr\}
+\opn{n^{-1/2}}.
\end{equation}
\end{proposition}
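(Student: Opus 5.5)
The plan mirrors the proof of Theorem~\ref{the:weighted-ye-bridge-car}, now carried out uniformly in $\vartheta\in\Theta_0$. First, Lemma~\ref{lem:cox_score_bridge} applies because balancing regularity includes the arm-specific second-moment condition \eqref{eq:cox_arm_l2}. It gives, uniformly over $\Theta_0$,
\[
\Uh_h(\vartheta)-\Uh_\mathrm{unadj}(\vartheta)
=\frac1n\sum_{i:A_i=1}\xi_i\Oh_{i1}(\vartheta)-\frac1n\sum_{i:A_i=0}\xi_i\Oh_{i0}(\vartheta)+\opn{n^{-1/2}}.
\]
So it remains to reduce each arm-specific sum $\frac1n\sum_{i:A_i=j}\xi_i\Oh_{ij}(\vartheta)$ to $\frac{n_j}{n}(\Xb-\Xb_j)^\top\bh_j(\vartheta)$, with an error of order $\opn{n^{-1/2}}$ uniformly in $\vartheta$.

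For fixed $j$, I will substitute the expansion \eqref{eq:generic_cal_expansion_car}, $\xi_i=(X_i-\Xb_j)^\top\Sh_{X|j}^{-1}(\Xb-\Xb_j)+\rho_i$. The linear part is an exact algebraic identity for every $\vartheta$. This holds because $\bh_j(\vartheta)$ is by definition the within-arm least-squares slope of $\Oh_{ij}(\vartheta)$ on $X_i-\Xb_j$, so
\[
\frac1n\sum_{i:A_i=j}(X_i-\Xb_j)\Oh_{ij}(\vartheta)=\frac{n_j}{n}\Sh_{X|j}\bh_j(\vartheta).
\]
This relation holds on the event, of probability tending to one, where $\Sh_{X|j}$ is invertible (Lemma~\ref{lem:car-cov-consistency}). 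The leading term is therefore $\frac{n_j}{n}(\Xb-\Xb_j)^\top\bh_j(\vartheta)$ with no error.

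The remainder is $\frac1n\sum_{i:A_i=j}\rho_i\Oh_{ij}(\vartheta)$. By Cauchy--Schwarz its absolute value is at most $(\frac1n\sum_{i:A_i=j}\rho_i^2)^{1/2}\,\sup_{\vartheta\in\Theta_0}(\frac1n\sum_{i:A_i=j}\Oh_{ij}(\vartheta)^2)^{1/2}$. The $\rho_i$ do not depend on $\vartheta$, since the weights are outcome-free, so the first factor is $\opn{n^{-1/2}}$. Lemma~\ref{lem:cox_Otheta_bounded} bounds the second factor by $C_O$ uniformly with probability tending to one. The remainder is thus $\opn{n^{-1/2}}$ uniformly over $\Theta_0$. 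This uniformity is the only step that goes beyond the log-rank case. It would be the main obstacle if the derived outcomes were not uniformly bounded, but Lemma~\ref{lem:cox_Otheta_bounded} supplies exactly the bound needed.

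Finally, I will apply the centering identities $\frac{n_1}{n}(\Xb-\Xb_1)=-\frac1n\sum_iA_i(X_i-\Xb)$ and $\frac{n_0}{n}(\Xb-\Xb_0)=-\frac1n\sum_i(1-A_i)(X_i-\Xb)$. These rewrite the combined correction, treated arm minus control arm, as
\[
-\frac1n\sum_i\bigl\{A_i(X_i-\Xb)^\top\bh_1(\vartheta)-(1-A_i)(X_i-\Xb)^\top\bh_0(\vartheta)\bigr\}.
\]
This yields \eqref{eq:cox_cal_reduction}. The two arm-level $\opn{n^{-1/2}}$ remainders and the remainder from Lemma~\ref{lem:cox_score_bridge} are each uniform in $\vartheta$, so their sum is uniform as well.
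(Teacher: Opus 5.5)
Your proposal is correct and follows essentially the same route as the paper's proof. Like the paper, you invoke Lemma~\ref{lem:cox_score_bridge}, substitute the balancing-regular expansion, and treat the linear part as exact via the within-arm normal equations. You then control the $\rho_i$ term uniformly by Cauchy--Schwarz with the uniform bound $C_O$ from Lemma~\ref{lem:cox_Otheta_bounded}, and finish with the centering identities.
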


\begin{proof}
For each arm \(j\), balancing regularity gives
$
\xi_i
=
(X_i-\Xb_j)^\top
\Sh_{X|j}^{-1}(\Xb-\Xb_j)
+\rho_i,$ with $
\frac1n\sum_{i:A_i=j}\rho_i^2
=
\opn{n^{-1}}.$
Moreover, by Cauchy--Schwarz and
Lemma~\ref{lem:cox_Otheta_bounded},
\[
\sup_{\vartheta\in\Theta_0}
\Big|
\frac1n\sum_{i:A_i=j}
\rho_i\Oh_{ij}(\vartheta)
\Big|
\le
\Big\{
\frac1n\sum_{i:A_i=j}\rho_i^2
\Big\}^{1/2}
\Big\{
\frac1n\sum_{i:A_i=j}
\sup_{\vartheta\in\Theta_0}
\Oh_{ij}(\vartheta)^2
\Big\}^{1/2}
=
\opn{n^{-1/2}}.
\]

By Lemma~\ref{lem:cox_score_bridge}, uniformly over $\vartheta\in\Theta_0$,
$
\Uh_h(\vartheta)-\Uh_\mathrm{unadj}(\vartheta)
=\frac1n\sum_{i:A_i=1}\xi_i\Oh_{i1}(\vartheta)
-\frac1n\sum_{i:A_i=0}\xi_i\Oh_{i0}(\vartheta)+\opn{n^{-1/2}}.$
For each arm $j$, substitute the balancing-regular expansion
$\xi_i=(X_i-\Xb_j)^\top\Sh_{X|j}^{-1}(\Xb-\Xb_j)+\rho_i$. The linear term is an
exact identity, and the $\rho_i$ term is $\opn{n^{-1/2}}$ uniformly in $\vartheta$:
\[
\begin{aligned}
\frac1n\sum_{i:A_i=j}\xi_i\Oh_{ij}(\vartheta)
&=(\Xb-\Xb_j)^\top\Sh_{X|j}^{-1}
\frac1n\sum_{i:A_i=j}(X_i-\Xb_j)\Oh_{ij}(\vartheta)
+\frac1n\sum_{i:A_i=j}\rho_i\Oh_{ij}(\vartheta)\\
&=\frac{n_j}{n}(\Xb-\Xb_j)^\top\bh_j(\vartheta)+\opn{n^{-1/2}},
\end{aligned}
\]
uniformly in $\vartheta$, where the last step uses
$\frac1n\sum_{i:A_i=j}(X_i-\Xb_j)\Oh_{ij}(\vartheta)=\frac{n_j}{n}\Sh_{X|j}\bh_j(\vartheta)$
and $\Sh_{X|j}^{-1}\Sh_{X|j}=\mathrm I$. Hence
\[
\Uh_h(\vartheta)-\Uh_\mathrm{unadj}(\vartheta)
=\frac{n_1}{n}(\Xb-\Xb_1)^\top\bh_1(\vartheta)
-\frac{n_0}{n}(\Xb-\Xb_0)^\top\bh_0(\vartheta)+\opn{n^{-1/2}}.
\]
Applying the centering identities
$\frac1n\sum_iA_i(X_i-\Xb)=\frac{n_1}{n}(\Xb_1-\Xb)$ and
$\frac1n\sum_i(1-A_i)(X_i-\Xb)=\frac{n_0}{n}(\Xb_0-\Xb)$, together with
$\Xb-\Xb_j=-(\Xb_j-\Xb)$, yields \eqref{eq:cox_cal_reduction}.
\end{proof}

\subsection{Step 4 -- From score equivalence to root equivalence}
\label{subsec:cox-step5}

Define the running Cox augmentation
\[
A_n(\vartheta)
=
\frac1n\sum_{i=1}^n
\Bigl\{
A_i(X_i-\Xb)^\top\bh_1(\vartheta)
-(1-A_i)(X_i-\Xb)^\top\bh_0(\vartheta)
\Bigr\}.
\]
Then Proposition~\ref{prop:cox_bal_bridge} states
$\Uh_h(\vartheta)=\Uh_\mathrm{unadj}(\vartheta)-A_n(\vartheta)+\opn{n^{-1/2}} $
uniformly over \(\Theta_0\).  Ye's adjusted Cox score instead uses the frozen
augmentation \(A_n(\hat\theta_\mathrm{unadj})\).

\begin{lemma}[Consistency of the weighted Cox root and freezing of the augmentation]
\label{lem:cox_consistency_freeze}
Assume the conditions of Proposition~\ref{prop:cox_bal_bridge}, and let
$g_n(\vartheta)=-\partial_\vartheta\Uh_\mathrm{unadj}(\vartheta)$. Suppose $\Theta_0$ is a
convex neighborhood of $\theta_0$, $\sqrt n\,\Uh_\mathrm{unadj}(\theta_0)=\Opn1$,
$\hat\theta_h\in\Theta_0$ solves $\Uh_h(\vartheta)=0$, and a positive-information
condition holds on $\Theta_0$: $\sup_{\vartheta\in\Theta_0}|g_n(\vartheta)|=\Opn1$,
and $\inf_{\vartheta\in\Theta_0}g_n(\vartheta)\ge g_*$ for some $g_*>0$ with
probability tending to one.
\begin{enumerate}[label=\textup{(\alph*)},leftmargin=2.2em]
\item \emph{($n^{-1/2}$-consistency.)} $\hat\theta_h-\theta_0=\Opn{n^{-1/2}}$.
\item \emph{(Freezing the augmentation.)} If in addition
$\hat\theta_\mathrm{unadj}-\theta_0=\Opn{n^{-1/2}}$, then for any $\vartheta\in\Theta_0$ with
$|\vartheta-\hat\theta_\mathrm{unadj}|=\Opn{n^{-1/2}}$,
\[
A_n(\vartheta)-A_n(\hat\theta_\mathrm{unadj})=\opn{n^{-1/2}},
\qquad
\Uh_h(\vartheta)=\Uh_\mathrm{aug}(\vartheta)+\opn{n^{-1/2}}.
\]
\end{enumerate}
\end{lemma}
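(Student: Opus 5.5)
For part (a), I will start from Proposition~\ref{prop:cox_bal_bridge}, which gives, uniformly over $\Theta_0$,
\[
\Uh_h(\vartheta)=\Uh_\mathrm{unadj}(\vartheta)-A_n(\vartheta)+\opn{n^{-1/2}} .
\]
The first task is to show $\sup_{\vartheta\in\Theta_0}|A_n(\vartheta)|=\Opn{n^{-1/2}}$. The centering identities in the proof of Proposition~\ref{prop:cox_bal_bridge} write $A_n(\vartheta)$ as
\[
A_n(\vartheta)=\tfrac{n_1}{n}(\Xb_1-\Xb)^\top\bh_1(\vartheta)-\tfrac{n_0}{n}(\Xb_0-\Xb)^\top\bh_0(\vartheta).
\]
Lemma~\ref{lem:car-mean-balance}(a) gives $\Xb_j-\Xb=\Opn{n^{-1/2}}$. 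Lemma~\ref{lem:cox_Otheta_bounded} gives $\sup_{\Theta_0}\|\bh_j(\vartheta)\|=\Opn1$. Together these yield the uniform $\Opn{n^{-1/2}}$ bound.

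Next I evaluate at the root. Setting $\vartheta=\hat\theta_h$ gives
\[
0=\Uh_h(\hat\theta_h)=\Uh_\mathrm{unadj}(\hat\theta_h)+\Opn{n^{-1/2}} .
\]
The mean value theorem on the convex set $\Theta_0$ gives
\[
\Uh_\mathrm{unadj}(\hat\theta_h)=\Uh_\mathrm{unadj}(\theta_0)-g_n(\tilde\vartheta)(\hat\theta_h-\theta_0)
\]
for some intermediate point $\tilde\vartheta\in\Theta_0$. This holds because $\Uh_\mathrm{unadj}$ is differentiable in $\vartheta$ once the denominators $D_\vartheta$ are bounded away from zero, which happens with probability tending to one. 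Then
\[
|\hat\theta_h-\theta_0|\le g_*^{-1}\bigl\{|\Uh_\mathrm{unadj}(\theta_0)|+\Opn{n^{-1/2}}\bigr\}=\Opn{n^{-1/2}},
\]
using the assumed $\sqrt n\,\Uh_\mathrm{unadj}(\theta_0)=\Opn1$ and the lower bound $g_n\ge g_*$, which holds with probability tending to one.

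For part (b), I will use the mean value theorem again, now for $A_n$:
\[
|A_n(\vartheta)-A_n(\hat\theta_\mathrm{unadj})|\le|\vartheta-\hat\theta_\mathrm{unadj}|\,\sup_{\Theta_0}|\partial_\vartheta A_n| .
\]
The derivative has the form
\[
\partial_\vartheta A_n=\tfrac{n_1}{n}(\Xb_1-\Xb)^\top\partial_\vartheta\bh_1-\tfrac{n_0}{n}(\Xb_0-\Xb)^\top\partial_\vartheta\bh_0 .
\]
Lemma~\ref{lem:cox_Otheta_bounded} gives $\sup_{\Theta_0}\|\partial_\vartheta\bh_j\|=\Opn1$, so $\sup_{\Theta_0}|\partial_\vartheta A_n|=\Opn{n^{-1/2}}$. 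With $|\vartheta-\hat\theta_\mathrm{unadj}|=\Opn{n^{-1/2}}$, the difference is $\Opn{n^{-1}}=\opn{n^{-1/2}}$.

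The main point I must make explicit is this: the derivative of $A_n$ is small because it carries the imbalance factor $\Xb_j-\Xb$, not merely because $\partial_\vartheta\bh_j$ is bounded. This bound also needs $\hat\theta_\mathrm{unadj}\in\Theta_0$ with probability tending to one, which follows from its root-$n$ consistency. The interval between $\vartheta$ and $\hat\theta_\mathrm{unadj}$ then lies in $\Theta_0$ by convexity.

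Finally, the definition of the frozen augmentation gives $\Uh_\mathrm{aug}(\vartheta)=\Uh_\mathrm{unadj}(\vartheta)-A_n(\hat\theta_\mathrm{unadj})$. Combining this with the uniform representation from Proposition~\ref{prop:cox_bal_bridge} gives
\[
\Uh_h(\vartheta)-\Uh_\mathrm{aug}(\vartheta)=-\{A_n(\vartheta)-A_n(\hat\theta_\mathrm{unadj})\}+\opn{n^{-1/2}}=\opn{n^{-1/2}} .
\]
The remainder holds uniformly, so it is valid at the random point $\vartheta$.
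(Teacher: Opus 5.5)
Your proof is correct and follows the paper's argument essentially step for step. Part (a) uses the uniform bound $\sup_{\vartheta\in\Theta_0}|A_n(\vartheta)|=O_p(n^{-1/2})$, the mean value theorem on the convex $\Theta_0$, and the positive-information lower bound exactly as the paper does. In (b) you apply the mean value theorem to the scalar $A_n$, whose derivative carries the $O_p(n^{-1/2})$ imbalance factor. The paper instead pulls out the imbalance prefactors first and then applies the mean-value inequality to $\hat\beta_j(\cdot)$; this gives the same bound, and your explicit check that $\hat\theta_\mathrm{unadj}\in\Theta_0$ with probability tending to one, so the segment lies in $\Theta_0$, fills a point the paper leaves implicit.
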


\begin{proof}
Throughout, Lemma~\ref{lem:car-mean-balance} gives
$n^{-1}\sum_i A_i(X_i-\Xb)=\tfrac{n_1}{n}(\Xb_1-\Xb)=\Opn{n^{-1/2}}$ and the same
bound with $A_i$ replaced by $1-A_i$, while
Lemma~\ref{lem:cox_Otheta_bounded} gives
$\sup_{\vartheta\in\Theta_0}\|\bh_j(\vartheta)\|=\Opn1$ and
$\sup_{\vartheta\in\Theta_0}\|\partial_\vartheta\bh_j(\vartheta)\|=\Opn1$. In
particular $\sup_{\vartheta\in\Theta_0}|A_n(\vartheta)|=\Opn{n^{-1/2}}$.

\smallskip
\textbf{(a)} Using Proposition~\ref{prop:cox_bal_bridge} at the root
$\hat\theta_h$,
\[
0=\Uh_h(\hat\theta_h)=\Uh_\mathrm{unadj}(\hat\theta_h)-A_n(\hat\theta_h)+\opn{n^{-1/2}}.
\]
By the mean-value theorem, for some $\bar\theta$ between $\hat\theta_h$ and
$\theta_0$, $\Uh_\mathrm{unadj}(\hat\theta_h)=\Uh_\mathrm{unadj}(\theta_0)-g_n(\bar\theta)(\hat\theta_h-\theta_0)$,
and $\bar\theta\in\Theta_0$ by convexity. Hence
\[
g_n(\bar\theta)(\hat\theta_h-\theta_0)
=\Uh_\mathrm{unadj}(\theta_0)-A_n(\hat\theta_h)+\opn{n^{-1/2}}=\Opn{n^{-1/2}}.
\]
The positive-information condition makes $g_n(\bar\theta)$ bounded away from zero
with probability tending to one, so $\hat\theta_h-\theta_0=\Opn{n^{-1/2}}$.

\smallskip
\textbf{(b)} Only the slopes $\bh_j(\cdot)$ in $A_n(\cdot)$ depend on $\vartheta$,
so
\[
|A_n(\vartheta)-A_n(\hat\theta_\mathrm{unadj})|
\le\Big\|\tfrac1n\sum_i A_i(X_i-\Xb)\Big\|\,\|\bh_1(\vartheta)-\bh_1(\hat\theta_\mathrm{unadj})\|
+\Big\|\tfrac1n\sum_i(1-A_i)(X_i-\Xb)\Big\|\,\|\bh_0(\vartheta)-\bh_0(\hat\theta_\mathrm{unadj})\|.
\]
The covariate prefactors are $\Opn{n^{-1/2}}$, and by the mean-value inequality
with $\sup_{u\in\Theta_0}\|\partial_u\bh_j(u)\|=\Opn1$,
\[
\|\bh_j(\vartheta)-\bh_j(\hat\theta_\mathrm{unadj})\|
\le\sup_{u\in\Theta_0}\|\partial_u\bh_j(u)\|\,|\vartheta-\hat\theta_\mathrm{unadj}|=\Opn{n^{-1/2}}.
\]
Therefore $A_n(\vartheta)-A_n(\hat\theta_\mathrm{unadj})=\Opn{n^{-1}}=\opn{n^{-1/2}}$.
Combining with Proposition~\ref{prop:cox_bal_bridge},
$\Uh_h(\vartheta)=\Uh_\mathrm{unadj}(\vartheta)-A_n(\vartheta)+\opn{n^{-1/2}}$, and
$\Uh_\mathrm{aug}(\vartheta)=\Uh_\mathrm{unadj}(\vartheta)-A_n(\hat\theta_\mathrm{unadj})$, gives
$\Uh_h(\vartheta)=\Uh_\mathrm{aug}(\vartheta)+\opn{n^{-1/2}}$.
\end{proof}

\begin{theorem}[Root equivalence for the marginal Cox hazard ratio]
\label{thm:cox_root_equiv}
Assume the conditions of Lemma~\ref{lem:cox_consistency_freeze}, that
$\hat\theta_\mathrm{unadj}-\theta_0=\Opn{n^{-1/2}}$, and that $\hat\theta_\mathrm{aug}\in\Theta_0$
solves $\Uh_\mathrm{aug}(\vartheta)=0$ with $\hat\theta_\mathrm{aug}-\theta_0=\Opn{n^{-1/2}}$.
Then
\[
\sqrt n\,(\hat\theta_h-\hat\theta_\mathrm{aug})\xrightarrow{p}0 .
\]
Under the method-specific conditions of
Propositions~\ref{prop:sbw-car}, \ref{prop:eb-car}, and
\ref{prop:ipw-car}, the conclusion therefore applies to SBW, EB, and stabilized IPW.
\end{theorem}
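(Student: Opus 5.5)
The plan is to compare the two roots through the augmented score $\Uh_\mathrm{aug}(\cdot)$ and its derivative. The link is supplied by Lemma~\ref{lem:cox_consistency_freeze}. By part (a) of that lemma, $\hat\theta_h-\theta_0=\Opn{n^{-1/2}}$. Together with $\hat\theta_\mathrm{unadj}-\theta_0=\Opn{n^{-1/2}}$, this gives $|\hat\theta_h-\hat\theta_\mathrm{unadj}|=\Opn{n^{-1/2}}$. Part (b), applied at $\vartheta=\hat\theta_h$, then yields
\[
0=\Uh_h(\hat\theta_h)=\Uh_\mathrm{aug}(\hat\theta_h)+\opn{n^{-1/2}},
\]
so $\Uh_\mathrm{aug}(\hat\theta_h)=\opn{n^{-1/2}}$. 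By definition of $\hat\theta_\mathrm{aug}$, we also have $\Uh_\mathrm{aug}(\hat\theta_\mathrm{aug})=0$.

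The next step is to exploit the structure of the augmented score. Because the slopes are frozen at $\hat\theta_\mathrm{unadj}$, the augmentation term in $\Uh_\mathrm{aug}(\vartheta)$ does not depend on $\vartheta$. Hence
\[
\partial_\vartheta\Uh_\mathrm{aug}(\vartheta)=\partial_\vartheta\Uh_\mathrm{unadj}(\vartheta)=-g_n(\vartheta).
\]
Both $\hat\theta_h$ and $\hat\theta_\mathrm{aug}$ lie in the convex set $\Theta_0$. The mean-value theorem therefore gives some $\tilde\theta\in\Theta_0$ between them with
\[
\Uh_\mathrm{aug}(\hat\theta_h)-\Uh_\mathrm{aug}(\hat\theta_\mathrm{aug})=-g_n(\tilde\theta)(\hat\theta_h-\hat\theta_\mathrm{aug}).
\]
The left side is $\opn{n^{-1/2}}$. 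The positive-information condition gives $g_n(\tilde\theta)\ge g_*>0$ with probability tending to one. Dividing by $g_n(\tilde\theta)$ yields $\hat\theta_h-\hat\theta_\mathrm{aug}=\opn{n^{-1/2}}$, which is the claimed $\sqrt n(\hat\theta_h-\hat\theta_\mathrm{aug})\to_p0$.

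The delicate point is the application of part (b) at a random argument. Lemma~\ref{lem:cox_consistency_freeze}(b) is stated for $\vartheta$ with $|\vartheta-\hat\theta_\mathrm{unadj}|=\Opn{n^{-1/2}}$. Its remainder comes from Proposition~\ref{prop:cox_bal_bridge}, which holds \emph{uniformly} over $\Theta_0$, and from the bound $\Opn{n^{-1/2}}\cdot\sup_{u\in\Theta_0}\|\partial_u\bh_j(u)\|\cdot|\vartheta-\hat\theta_\mathrm{unadj}|$. Both remain valid when $\vartheta=\hat\theta_h$ is data-dependent, so plugging in the random root is legitimate. I will state this uniformity explicitly rather than treat $\vartheta$ as fixed.

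The final statement about SBW, EB and stabilized IPW then follows immediately. Propositions~\ref{prop:sbw-car}, \ref{prop:eb-car} and \ref{prop:ipw-car} show that each of these weighting schemes is balancing-regular, which includes the $L_2$ stability condition \eqref{eq:cox_arm_l2} required by Lemma~\ref{lem:cox_score_bridge}. The argument above therefore applies to each of them.
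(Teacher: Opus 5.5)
Your proposal is correct and follows the paper's proof essentially step for step. You use Lemma~\ref{lem:cox_consistency_freeze}(a) for $\sqrt n$-consistency of $\hat\theta_h$, part (b) at $\vartheta=\hat\theta_h$ to get $\Uh_\mathrm{aug}(\hat\theta_h)=\opn{n^{-1/2}}$, and then the mean-value theorem with the frozen-slope derivative $-g_n$ and the positive-information lower bound; your remark that part (b) may be applied at the random argument $\hat\theta_h$ because its remainders are uniform over $\Theta_0$ is a sound clarification the paper leaves implicit.
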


\begin{proof}
By Lemma~\ref{lem:cox_consistency_freeze}(a),
$\hat\theta_h-\theta_0=\Opn{n^{-1/2}}$. Since
$\hat\theta_\mathrm{unadj}-\theta_0=\Opn{n^{-1/2}}$, Lemma~\ref{lem:cox_consistency_freeze}(b)
applies at $\vartheta=\hat\theta_h$, giving
\[
0=\Uh_h(\hat\theta_h)=\Uh_\mathrm{aug}(\hat\theta_h)+\opn{n^{-1/2}},
\]
so $\Uh_\mathrm{aug}(\hat\theta_h)=\opn{n^{-1/2}}$. Because $\Uh_\mathrm{aug}(\hat\theta_\mathrm{aug})=0$,
\[
\Uh_\mathrm{aug}(\hat\theta_h)-\Uh_\mathrm{aug}(\hat\theta_\mathrm{aug})=\opn{n^{-1/2}}.
\]
The augmentation in $\Uh_\mathrm{aug}$ is frozen at $\hat\theta_\mathrm{unadj}$, so
$\partial_\vartheta\Uh_\mathrm{aug}(\vartheta)=\partial_\vartheta\Uh_\mathrm{unadj}(\vartheta)=-g_n(\vartheta)$.
By the mean-value theorem, for some $\bar\theta$ between $\hat\theta_h$ and
$\hat\theta_\mathrm{aug}$,
\[
\Uh_\mathrm{aug}(\hat\theta_h)-\Uh_\mathrm{aug}(\hat\theta_\mathrm{aug})=-g_n(\bar\theta)(\hat\theta_h-\hat\theta_\mathrm{aug}).
\]
Both roots lie in the convex set $\Theta_0$, so $\bar\theta\in\Theta_0$, and the
positive-information condition of Lemma~\ref{lem:cox_consistency_freeze} makes
$g_n(\bar\theta)$ bounded away from zero with probability tending to one.
Dividing gives $\hat\theta_h-\hat\theta_\mathrm{aug}=\opn{n^{-1/2}}$, i.e.
$\sqrt n\,(\hat\theta_h-\hat\theta_\mathrm{aug})\xrightarrow{p}0$.
\end{proof}

\section{Prognostic covariates, derived outcomes, and the efficiency gain}
\label{supp:prognostic}

This section gives an interpretation of why prognostic baseline covariates lead to efficiency gain in covariate-adjusted log-rank statistic. Let \(q_j(t)=\{1-\mu(t)\}^j\{\mu(t)\}^{1-j}\), where
\(\mu(t)=E\{A_i\mid Y_i(t)=1\}\).  The population derived outcome in the
linearization of the log-rank score is \cite{lin1989robust, ye2024covariate}
\[
O_{ij}
=
\int_0^\tau q_j(t)
\{dN_{ij}(t)-Y_{ij}(t)p(t)\,dt\},
\qquad j=0,1,
\]
where \(p(t)\,dt=E\{dN_i(t)\}/E\{Y_i(t)\}\).  This is the population version of
the derived outcome used by \citet{ye2024covariate}; the arm-specific factor
\(q_j(t)\) is essential. For this interpretive calculation, assume \(C_{ij}\) is independent of
\((T_{ij},X_i)\) within arm \(j\), so that
\(\Pr(C_{ij}\ge t\mid X_i)=G_j(t)\). Since
\(Y_{ij}(t)=1\{T_{ij}\ge t,C_{ij}\ge t\}\),
\[
E\{Y_{ij}(t)\mid X_i\}
=
P(T_{ij}\ge t,C_{ij}\ge t\mid X_i)
=
G_j(t)S_j(t\mid X_i).
\]
Similarly, \(dN_{ij}(t)\) records an observed failure in an infinitesimal
neighborhood of \(t\). Hence
\[
E\{dN_{ij}(t)\mid X_i\}
=
P(T_{ij}\in dt,C_{ij}\ge t\mid X_i)
=
G_j(t)S_j(t\mid X_i)\lambda_j(t\mid X_i)\,dt.
\]
Taking conditional expectation in the definition of \(O_{ij}\) gives
\[
E(O_{ij}\mid X_i)
=
\int_0^\tau q_j(t)G_j(t)S_j(t\mid X_i)
\{\lambda_j(t\mid X_i)-p(t)\}\,dt .
\]
Finally, by the law of total covariance,
\[
\mathrm{Cov}(X_i,O_{ij})
=
\mathrm{Cov}\{X_i,E(O_{ij}\mid X_i)\},
\]
because \(X_i\) is fixed after conditioning on itself. Interchanging covariance
and integration yields
\[
\mathrm{Cov}(X_i,O_{ij})
=
\int_0^\tau q_j(t)G_j(t)
\mathrm{Cov}\!\left[
X_i,\,
S_j(t\mid X_i)\{\lambda_j(t\mid X_i)-p(t)\}
\right]dt .
\]

\subsection{A local Cox prognostic model}

To quantify the connection, suppose \(E(X_i)=0\),
\(\mathrm{var}(X_i)=\Sigma_X\), and, within arm \(j\),
\[
\lambda_j(t\mid X_i)=\lambda_{0j}(t)\exp(\eta^\top X_i),
\]
where \(\eta\) is a prognostic covariate log hazard ratio.  Let
\(\Lambda_{0j}(t)=\int_0^t\lambda_{0j}(u)\,du\) and
\(S_{0j}(t)=\exp\{-\Lambda_{0j}(t)\}\).  A first-order expansion around
\(\eta=0\) gives
\[
S_j(t\mid X_i)
=
S_{0j}(t)\{1-\Lambda_{0j}(t)\eta^\top X_i\}
+O(\|\eta\|^2),
\qquad
\lambda_j(t\mid X_i)
=
\lambda_{0j}(t)\{1+\eta^\top X_i\}
+O(\|\eta\|^2).
\]
Let \(p_0(t)\) denote the pooled hazard at \(\eta=0\).  Since \(p(t)\) is free of
\(X_i\), its own first-order perturbation contributes no covariance with \(X_i\)
at this order.  Therefore
\[
\label{eq:supp_general_local}
\mathrm{Cov}\!\left[
X_i,\,
S_j(t\mid X_i)\{\lambda_j(t\mid X_i)-p(t)\}
\right]
=
S_{0j}(t)
\left[
\lambda_{0j}(t)\{1-\Lambda_{0j}(t)\}
+
p_0(t)\Lambda_{0j}(t)
\right]\Sigma_X\eta
+O(\|\eta\|^2).
\]
This display also shows why a completely general monotonicity statement is
delicate: away from the null, the coefficient in brackets need not be positive
for every \(t\), so cancellation over time is possible.

Under the local null baseline, however,
\(\lambda_{01}(t)=\lambda_{00}(t)=\lambda_0(t)\), so \(p_0(t)=\lambda_0(t)\),
\(S_{01}(t)=S_{00}(t)=S_0(t)\), and the bracket in
\eqref{eq:supp_general_local} simplifies to \(\lambda_0(t)\).  Hence
\[
\mathrm{Cov}(X_i,O_{ij})
=
c_j\,\Sigma_X\eta+O(\|\eta\|^2),
\qquad
c_j
=
\int_0^\tau q_j(t)G_j(t)S_0(t)\lambda_0(t)\,dt>0 .
\]
Consequently,
\[
\beta_j
=
\Sigma_X^{-1}\mathrm{Cov}(X_i,O_{ij})
=
c_j\,\eta+O(\|\eta\|^2),
\qquad j=0,1 .
\]

\cite{ye2024covariate}'s variance reduction has the form
\begin{equation}\label{eq:supp_Delta}
\sigma_L^2-\sigma_{CL}^2
=
\pi(1-\pi)(\beta_1+\beta_0)^\top
\Sigma_X(\beta_1+\beta_0)\ge0,
\end{equation}
with strict inequality unless \(\beta_1+\beta_0=0\).  Substituting
\(\beta_1+\beta_0=(c_1+c_0)\eta+O(\|\eta\|^2)\) into
\eqref{eq:supp_Delta} yields the local approximation
\[
\sigma_L^2-\sigma_{CL}^2
=
\pi(1-\pi)(c_1+c_0)^2
\eta^\top\Sigma_X\eta
+
O(\|\eta\|^3).
\]
Thus, near the null and for sufficiently small nonzero \(\eta\), the efficiency
gain is positive and increases to first order in the quadratic prognostic
strength \(\eta^\top\Sigma_X\eta\).

\section{Data Generation Process Details}
The baseline covariate vector $X_i$ was motivated by components of the Framingham risk score. Define
$L_{\mathrm{Age},i}=\log(\mathrm{Age}_i)$,
$L_{\mathrm{HDL},i}=\log(\mathrm{HDL}_i)$, and
$L_{\mathrm{SBP},i}=\log(\mathrm{SBP}_i)$, and let $\mathrm{DM}_i$ and
$\mathrm{Smoke}_i$ denote the diabetes and current-smoking indicators. Thus,
\[
  X_i=(L_{\mathrm{AGE},i}, \mathrm{DM}_i,L_{\mathrm{HDL},i}, L_{\mathrm{SBP},i}, \mathrm{Smoke}_i)^\top.
\]
The three continuous components were generated jointly as
\[
  \begin{pmatrix}
  L_{\mathrm{Age},i}\\L_{\mathrm{HDL},i}\\L_{\mathrm{SBP},i}
  \end{pmatrix}
  \sim \mathcal{N}_3\!\left(
  \begin{pmatrix}4.0\\3.5\\4.8\end{pmatrix},
  \operatorname{diag}(0.3,0.25,0.1)\,
  \begin{pmatrix}
  1 & 0.245 & 0.239\\
  0.245 & 1 & -0.036\\
  0.239 & -0.036 & 1
  \end{pmatrix}
  \operatorname{diag}(0.3,0.25,0.1)
  \right).
\]
Diabetes and smoking were generated independently as
$\mathrm{DM}_i\sim\operatorname{Bernoulli}(0.20)$ and
$\mathrm{Smoke}_i\sim\operatorname{Bernoulli}(0.15)$. The nonlinear oracle prognostic score was
\[
  Z_i=1-0.9^{\exp(g_i)},
  \qquad
  g_i=2.5L_{\mathrm{Age},i}+0.7\mathrm{DM}_i-0.8L_{\mathrm{HDL},i}
      +2.5L_{\mathrm{SBP},i}+0.5\mathrm{Smoke}_i-19.415 .
\]
This is the absolute risk at a horizon with baseline survival $0.9$ under a
proportional-hazards model with linear predictor $g_i$, the construction used by
Framingham- and PREVENT-type risk scores. Equivalently
$\log\{-\log(1-Z_i)\}=g_i+\log(-\log 0.9)$, so $Z_i$ is the inverse complementary
log-log transform of a linear predictor and $Z_i\in(0,1)$. The intercept
$-19.415$ equals the mean of the linear part, so $g_i$ has mean zero and $Z_i$
has median $1-0.9=0.1$, a 10\% baseline risk.

\paragraph{Randomization designs for scenarios (F) and (G).}
Both stratified scenarios use the binary stratum
$S_i=\mathbf 1(Z_i<0.15)$. In scenario (F), \emph{Stratified Simple}, treatment
is assigned independently within each stratum as
$A_i\mid S_i\sim\operatorname{Bernoulli}(0.5)$. In scenario (G),
\emph{Permuted Block}, participants within each stratum are processed in accrual
order in blocks of size four, and each complete block receives a random
permutation of $(0,0,1,1)$, so that exactly two of every four consecutive
within-stratum subjects are assigned to treatment.

\paragraph{Simulation parameters.}
Table~\ref{tab:supp-simparams} summarizes the fixed parameters of the three
studies.

\begin{table}[htbp]
\centering
\caption{Parameter settings across the three simulation studies.}
\label{tab:supp-simparams}
\begin{tabular}{llll}
\toprule
Study & $\beta_A$ & $\beta_Z$ & MC replicates \\
\midrule
Empirical equivalence & $\log(0.7)$ & $9$ & $10{,}000$ \\
Inference validity     & $0$         & $9$ & $10{,}000$ \\
Prognostic strength    & $\log(0.7)$ & $\{1,\dots,10,12,15\}$ & $8{,}000$ \\
\bottomrule
\end{tabular}
\end{table}

\bibliographystyle{plainnat}
\bibliography{ref}

\end{document}